\documentclass[aps,pra,onecolumn,10pt,superscriptaddress,nofootinbib]{revtex4-2}
\usepackage[utf8]{inputenc}
\usepackage[T1]{fontenc}
\usepackage{lmodern,microtype}
\usepackage{amsmath,amssymb,amsthm,mathtools,bm}
\usepackage[dvipsnames]{xcolor}
\usepackage{relsize,soul}
\usepackage{graphicx}
\usepackage{booktabs,multirow}
\usepackage{tikz}
\usetikzlibrary{quantikz2}
\usepackage{hyperref}
\hypersetup{colorlinks=true,citecolor=blue,linkcolor=blue,filecolor=blue,urlcolor=blue,breaklinks=true}
\usepackage[capitalise,noabbrev]{cleveref}

\newtheorem{theorem}{Theorem}
\newtheorem{lemma}[theorem]{Lemma}
\newtheorem{corollary}[theorem]{Corollary}
\newtheorem{proposition}[theorem]{Proposition}
\theoremstyle{definition}

\theoremstyle{remark}
\newtheorem{remark}{Remark}

\mathchardef\ordinarycolon\mathcode`\:
\mathcode`\:=\string"8000
\def\vcentcolon{\mathrel{\mathop\ordinarycolon}}
\begingroup \catcode`\:=\active
  \lowercase{\endgroup
  \let :\vcentcolon
  }
\newcommand{\nc}{\newcommand}

\renewcommand{\bra}[1]{\langle#1|}
\renewcommand{\ket}[1]{|#1\rangle}
\nc{\ketbra}[2]{\lvert#1\rangle\!\langle#2\rvert}
\nc{\braandket}[3]{\langle #1|#2|#3\rangle}
\nc{\kett}[1]{|#1\rangle\!\rangle}
\nc{\braakett}[2]{\langle\!\langle #1 | #2 \rangle\!\rangle}

\DeclarePairedDelimiter{\norm}{\lVert}{\rVert}
\DeclarePairedDelimiter{\abs}{\lvert}{\rvert}
\DeclarePairedDelimiter{\ceil}{\lceil}{\rceil}
\DeclarePairedDelimiter{\floor}{\lfloor}{\rfloor}

\nc{\RR}{{\mathbb R}}
\nc{\CC}{{\mathbb C}}
\nc{\NN}{{\mathbb N}}
\nc{\ZZ}{{\mathbb Z}}
\nc{\PP}{{\mathbb P}}
\nc{\cA}{{\mathcal A}}
\nc{\cB}{{\mathcal B}}
\nc{\cC}{{\mathcal C}}
\nc{\cD}{{\mathcal D}}
\nc{\cE}{{\mathcal E}}
\nc{\cF}{{\mathcal F}}
\nc{\cG}{{\mathcal G}}
\nc{\cH}{{\mathcal H}}
\nc{\cI}{{\mathcal I}}
\nc{\cJ}{{\mathcal J}}
\nc{\cK}{{\mathcal K}}
\nc{\cL}{{\mathcal L}}
\nc{\cM}{{\mathcal M}}
\nc{\cN}{{\mathcal N}}
\nc{\cO}{{\mathcal O}}
\nc{\cP}{{\mathcal P}}
\nc{\cQ}{{\mathcal Q}}
\nc{\cR}{{\mathcal R}}
\nc{\cS}{{\mathcal S}}
\nc{\cT}{{\mathcal T}}
\nc{\cU}{{\mathcal U}}
\nc{\cV}{{\mathcal V}}
\nc{\cW}{{\mathcal W}}
\nc{\cX}{{\mathcal X}}
\nc{\cY}{{\mathcal Y}}
\nc{\cZ}{{\mathcal Z}}

\nc{\tr}{\operatorname{Tr}}
\nc{\ox}{\otimes}
\nc{\trace}[2][]{\tr_{#1}\!\left[ #2 \right]}
\nc{\integral}[4]{\int_{#1}^{#2} {#3} \operatorname{d}\!{#4}}
\NewDocumentCommand{\ddx}{o m}{\IfNoValueTF{#1}
  {\frac{\operatorname{d}}{\operatorname{d}\!#2}}
  {\frac{\operatorname{d}^{\,#1}}{\operatorname{d}\!#2^{\,#1}}}}
\nc{\set}[1]{ \left\{{\kern .02083em} #1 {\kern .02083em}\right\} }
\nc{\setcond}[2]{ \left\{{\kern .02083em} #1 :{\kern .08333em} #2 {\kern .02083em}\right\} }
\nc{\poly}{\operatorname{poly}}
\nc{\BigO}[1]{{\mathcal{O}\!\left( #1 \right)}}
\nc{\BigTO}[1]{{\widetilde{\mathcal{O}}\!\left( #1 \right)}}
\nc{\BigOmega}[1]{{\Omega\!\left( #1 \right)}}
\nc{\BigTheta}[1]{{\Theta\!\left( #1 \right)}}

\nc{\bfx}{{\mathbf{x}}}
\nc{\bfz}{{\mathbf{z}}}
\nc{\bfX}{{\mathbf{X}}}
\nc{\bfn}{{\mathbf{n}}}
\nc{\bfm}{{\mathbf{m}}}
\nc{\bfL}{{\mathbf{L}}}
\nc{\bfp}{{\mathbf{p}}}
\nc{\bfU}{{\mathbf{U}}}

\nc{\prep}{{\mathtt{PREP}}}

\nc{\Cscell}[2][c]{\begin{tabular}[#1]{@{}c@{}}#2\end{tabular}}
\nc{\Lscell}[2][c]{\begin{tabular}[#1]{@{}l@{}}#2\end{tabular}}

\nc{\laurent}{\mathbb{P}_{1/2}}
\nc{\laurentint}{\mathbb{P}_{1}}

\ifdefined\hypersetup
\hypersetup{pdfcreator={},pdfproducer={}}
\else
\ifdefined{/Creator () /Producer ()}\fi
\fi
\ifdefined\pdfinfoomitdate\pdfinfoomitdate=1\fi
\ifdefined\pdfsuppressptexinfo\pdfsuppressptexinfo=-1\fi
\ifdefined\pdftrailerid\pdftrailerid{}\fi
\begin{document}

\title{Multivariate quantum signal processing with optimal query complexity}
\author{Mingrui Jing}
\email{mingruij0031@gmail.com}
\affiliation{QudeLeap Research, Shanghai 200030, China}
\affiliation{Thrust of Artificial Intelligence, Information Hub,\\
The Hong Kong University of Science and Technology (Guangzhou), Guangzhou 511453, China}
\author{Xin Wang}
\email{felixxinwang@hkust-gz.edu.cn}
\affiliation{Thrust of Artificial Intelligence, Information Hub,\\
The Hong Kong University of Science and Technology (Guangzhou), Guangzhou 511453, China}
\author{Zhan Yu}
\email{yu.zhan@u.nus.edu}
\affiliation{Centre for Quantum Technologies, National University of Singapore, Singapore}
\author{Lei Zhang}
\email{leizhang116.4@gmail.com}
\affiliation{QudeLeap Research, Shanghai 200030, China}
\affiliation{Thrust of Artificial Intelligence, Information Hub,\\
The Hong Kong University of Science and Technology (Guangzhou), Guangzhou 511453, China}

\begin{abstract}
Polynomial transformations of data encoded in quantum operations are a building block of quantum algorithms, and their query complexity measures how often those operations are used.
For multiple variables, existing methods either realize restricted polynomial families or implement monomials separately at a query cost exponential in the number of variables.
We present a quantum circuit that implements every multivariate trigonometric polynomial up to an explicit normalization.
The circuit processes one variable coherently over the frequency components of the remaining variables, allowing all components to share the same signal queries.
Each variable is queried exactly as many times as its degree in the polynomial, and we prove that this query complexity is optimal.
We further extend the construction to commuting unitaries.
The resulting circuit applies the multivariate polynomial simultaneously across their shared eigenbasis, using the minimum numbers of forward and inverse queries to each unitary.
We also study trainable rotations with fixed preparation and a local readout of the last input.
At query depth at least two, independent uniform initialization gives every active readout gradient a variance lower bound set by its squared branch weight and the inverse square root of the depth, uniformly over data states and dimensions.
For supervised squared loss, retaining sample correlations gives an initialization gradient-variance bound in terms of targets and input spectra.
This also bounds the expected loss decrease after the first exact gradient step with a specified step size.
These results provide a scalable and practical solution for developing multivariate quantum algorithms and quantum learning models.
\end{abstract}

\maketitle

\section{Introduction}\label{sec:introduction}

Polynomial transformations of a single input provide a common approach to many quantum algorithms~\cite{gilyen2019quantum, martyn2021grand}.
When data are encoded in a quantum operation, choosing a polynomial transformation of that operation gives a way to perform tasks such as Hamiltonian simulation~\cite{low2019hamiltonian, martyn2023efficient,childs2018toward}, linear-system solving and amplitude amplification~\cite{gilyen2019quantum}.
Many problems require several inputs and their joint dependence to be processed together.
For example, transformations of several variables can prepare quantum states with amplitudes specified by a multivariate function~\cite{mori2024efficient, rosenkranz2025quantum}.
Learning models with several inputs likewise use their joint dependence to represent multivariate functions~\cite{perezsalinas2020data, schuld2021effect}.

The goal is to realize general multivariate polynomials while minimizing the number of calls to each input operation.
These calls, or queries, measure the cost of accessing the input and can be a major resource when the operations are expensive to implement.
While polynomial transformations can use one query per unit of degree~\cite{low2016methodology, low2017optimal, yu2022power} for univariate input, existing multivariate constructions illustrate a tradeoff between this query efficiency and the range of polynomials they can realize.
One method retains a degree-matched query count for a restricted polynomial family~\cite{rossi2022multivariable}, and other methods obtain general polynomial representations by implementing and combining individual monomials~\cite{yu2024nonasymptotic, tang2026saqnn}; the queries in these constructions grow exponentially with the number of variables.
Although more variables require more coefficients to describe a general polynomial, whether multivariate polynomial transformation can be efficiently implemented in quantum circuits, remains an open problem.

In this work, we address this problem by presenting a circuit structure for simulating multivariate trigonometric polynomials up to a normalization factor, with optimal query complexity for each input.
Each variable is queried exactly as many times as its degree in the target polynomial, and we prove a matching lower bound for any circuit with the same access to the input.
The key is to share queries across the components of the polynomial.
We keep one variable as the input to a univariate polynomial transformation and organize the remaining variables by their frequency components.
Each component specifies a different univariate transformation, and the circuit carries out these transformations coherently with the same input queries.
This organization avoids the repeated queries of the individual-monomial constructions while retaining their general polynomial representation.
Table~\ref{tab:comparison scalar} compares the resulting query costs with previous constructions.

The same construction extends to commuting unitaries and preserves the optimal query guarantee for each input.
Commuting unitaries have a shared eigenbasis, so every common eigenvector carries one eigenphase from each input operation.
Our circuit applies the same multivariate polynomial to each of these eigenphase combinations, yielding a polynomial transformation of the input unitaries.
The numbers of forward and inverse queries to each unitary both attain their respective lower bounds.

The circuit also serves as a learning model with fixed preparation and trainable rotation angles.
For a fixed data state, the local readout studied here depends only on the last input unitary.
At query depth at least two, independent uniform initialization gives every active readout gradient a variance lower bound controlled by its branch weight and the query depth, uniformly over data states and dimensions.
The inverse-square-root depth dependence is sharp in the worst case.
For supervised squared loss, we retain the correlations between samples and bound the initialization gradient variance through the mean target and how targets correlate with the first spectral moment of the last input unitary.
This also gives a lower bound on the expected loss decrease after the first exact gradient update.
With a specified constant step size, exact gradient descent decreases the loss and the norm of its gradient tends to zero.
Convergence to zero training loss holds under an additional condition on the sample Jacobian throughout the trajectory.

\paragraph*{Note added.}
While completing this work, we became aware of concurrent work~\cite{low2026multivariate} on multivariate QSP that additionally accommodates noncommuting matrices.
Its main construction proceeds through the design and compilation
of analytic transfer functions, whereas ours directly converts
multivariate polynomials into explicit circuits, attaining optimal query complexity for the polynomial class considered here.

\section{Preliminaries}\label{sec:preliminaries}

This work concerns quantum circuits that implement a prescribed polynomial transformation of their inputs while accessing each input only through a fixed query gate.
The inputs are either $N$ real numbers or $N$ mutually commuting unitaries.
This section fixes the notation for multivariate polynomials, states the two query models, and reviews the univariate constructions on which the multivariate construction is built.

Throughout, $N$ denotes the number of variables, and $\NN$ denotes the set of nonnegative integers, which contains $0$.
The $j$-th variable is $x_j$, and the variables form the real vector $\bfx = (x_1, \ldots, x_N) \in \RR^N$.
For an $N$-dimensional array $C$ with finitely many nonzero entries $C_\bfn$, indexed by integer vectors $\bfn = (n_1, \ldots, n_N) \in \ZZ^N$, and for a function $z: \RR \to \CC$ of unit modulus, which below is one of the two phase functions $e^{ix/2}$ and $e^{ix}$, an \emph{$N$-variate polynomial} in $z$ is a function of the form
\begin{equation}\label{eqn:general polynomial}
    P(\bfx) = \sum_{\bfn} C_\bfn \prod_{j=1}^N z(x_j)^{n_j}
.\end{equation}
Three quantities of $P$ recur throughout.
$P$ is said to have \emph{parity} $\bfp = (p_1, \ldots, p_N) \in \set{0, 1}^N$ if $C_\bfn = 0$ whenever $\bfn \bmod 2 \neq \bfp$, where the modulus is taken entrywise.
The \emph{degree} of $P$ is the vector
\begin{equation}
    \deg(P) = \left(\max_{\bfn: C_\bfn \neq 0} \abs{n_1}, \ldots, \max_{\bfn: C_\bfn \neq 0} \abs{n_N}\right)
,\end{equation}
whose $j$-th entry $\deg_j(P)$ is the largest absolute exponent of $z(x_j)$ appearing in $P$; the degree of the zero polynomial is the zero vector.
Two norms measure the size of $P$: the coefficient $1$-norm $\norm{P}_{c, 1} = \sum_\bfn \abs{C_\bfn}$ and the supremum norm $\norm{P}_\infty = \sup_{\bfx} \abs{P(\bfx)}$.
For vectors we write $\bfn \preceq \bfL$ when $n_j \leq L_j$ for every $j$, so that $\deg(P) \preceq \bfL$ means that $P$ has degree at most $L_j$ in the $j$-th variable.
For the two choices of $z$ below, $\abs{z(x)} = 1$, so the triangle inequality gives $\norm{P}_\infty \leq \norm{P}_{c, 1}$, with equality for monomials; by the Cauchy--Schwarz inequality and Parseval's identity, for $P \neq 0$ the ratio $\norm{P}_{c, 1} / \norm{P}_\infty$ is at most the square root of the number of nonzero coefficients.

Two choices of $z$ are used.
For scalar signals we take $z(x) = e^{ix/2}$.
For $\bfL = (L_1, \ldots, L_N) \in \NN^N$ and $C \in \CC^{(2L_1 + 1) \times \cdots \times (2L_N + 1)}$, an \emph{$N$-variate trigonometric polynomial} is a function of the form
\begin{equation}\label{eqn:trig polynomial}
    P(\bfx) = \sum_{n_1 = -L_1}^{L_1} \cdots \sum_{n_N = -L_N}^{L_N} C_\bfn\, e^{i(\bfn \cdot \bfx)/2}
,\end{equation}
and the set of all such functions, over all $\bfL$ and $C$, is denoted by $\laurent$.
The half angle is inherited from the single-qubit rotation $R_z(x) = e^{-ixZ/2}$, whose diagonal entries are the monomials $e^{-ix/2}$ and $e^{ix/2}$, and it gives the parity a concrete meaning: a polynomial in $\laurent$ with parity $p_j$ in the $j$-th variable is $2\pi$-periodic in $x_j$ when $p_j = 0$ and changes sign under $x_j \mapsto x_j + 2\pi$ when $p_j = 1$.
For unitary inputs we take $z(x) = e^{ix}$ and denote by $\laurentint$ the set of $N$-variate Laurent polynomials in $e^{ix_1}, \ldots, e^{ix_N}$, that is, the functions of the form in Equation~\eqref{eqn:general polynomial} with $z(x) = e^{ix}$ and exponents of both signs.
Every $P \in \laurentint$ is $2\pi$-periodic in each variable.
The two sets are related by doubling the exponents: a polynomial in $\laurent$ with parity $0^N$ and degree $\bfL$ is the same function as a polynomial in $\laurentint$ with degree $\bfL/2$, and conversely, where $0^N$ denotes the zero vector.
For $P \in \laurentint$ we also record the largest exponents of each sign separately,
\begin{equation}\label{eqn:directional degree}
    \deg_j^{+}(P) = \max\set{0, \max_{\bfn: C_\bfn \neq 0} n_j},
    \qquad
    \deg_j^{-}(P) = -\min\set{0, \min_{\bfn: C_\bfn \neq 0} n_j}
,\end{equation}
where the inner maximum and minimum over the empty set are taken to be $0$, so that $\deg_j^{\pm}(0) = 0$ and $\deg_j(P) = \max\set{\deg_j^{+}(P), \deg_j^{-}(P)}$ for every $P$, and we write $\deg^{\pm}(P)$ for the vectors with these entries.

A scalar input $x_j$ enters a circuit only through the \emph{signal rotation} $R_z(x_j)$ acting on a designated qubit, and the \emph{query complexity} of the circuit in $x_j$ is the number of times $R_z(x_j)$ is applied; all other gates are fixed once the polynomial is fixed and may act on any register.
A circuit $W(\bfx)$ \emph{simulates} $P$ if $P(\bfx)$ is the matrix element of $W(\bfx)$ between the all-zero states of a designated set of qubits for every $\bfx \in \RR^N$, which we write as
\begin{equation}\label{eqn:block convention}
    W(\bfx) = \begin{bmatrix}
        P(\bfx) & \ldots \\
        \ldots & \ldots
    \end{bmatrix}
.\end{equation}
For unitary inputs, let $\bfU = (U_1, \ldots, U_N)$ be mutually commuting unitaries acting on a system $S$ of dimension $d = 2^n$.
They share an orthonormal eigenbasis $\set{\ket{\chi_k}}_{k=0}^{d-1}$, and we write $U_j = \sum_{k=0}^{d-1} e^{ix_{j, k}} \ketbra{\chi_k}{\chi_k}$, where $\bfX = (x_{j, k})$ is the $N \times d$ array of eigenphases and $\bfx_{:, k} = (x_{1, k}, \ldots, x_{N, k})$ collects the eigenphases of the $k$-th common eigenvector.
For $P \in \laurentint$, the \emph{polynomial transformation} of $\bfU$ applies $P$ to the eigenphases,
\begin{equation}\label{eqn:unitary transformation}
    P(\bfU) = \sum_{k=0}^{d-1} P(\bfx_{:, k}) \ketbra{\chi_k}{\chi_k}
    = \sum_{\bfn} C_\bfn \prod_{j=1}^N U_j^{n_j}
,\end{equation}
which is a polynomial in the unitaries and their inverses and therefore does not depend on the choice of the eigenphases modulo $2\pi$.
The same formula defines $P(\bfU)$ for $P \in \laurent$ with parity $0^N$, with $U_j^{n_j/2}$ in place of $U_j^{n_j}$, since every $n_j$ is then even; for any other parity the value $e^{ix_{j, k}/2}$ depends on the representative of $x_{j, k}$ and $P(\bfU)$ is not well defined.
In the unitary model the inputs are accessed only through the two controlled oracles
\begin{equation}\label{eqn:oracles}
    C_0(U_j) = \begin{bmatrix}
        U_j^\dagger & 0 \\
        0 & I_S
    \end{bmatrix},
    \qquad
    C_1(U_j) = \begin{bmatrix}
        I_S & 0 \\
        0 & U_j
    \end{bmatrix}
,\end{equation}
that is, controlled-$U_j^\dagger$ activated by the control state $\ket{0}$ and controlled-$U_j$ activated by $\ket{1}$, and the query complexity counts the uses of each oracle separately.
A circuit $V(\bfU)$ simulates $P(\bfU)$ if $P(\bfU)$ is the block of $V(\bfU)$ selected by the all-zero state of the ancillary qubits, for which we use the same notation as in Equation~\eqref{eqn:block convention}.

\subsection{Univariate case}

The rest of this section reviews how previous work simulates $P(x)$ and $P(U)$ for a single variable.
When $N = 1$, $\laurent = \CC[e^{ix/2}, e^{-ix/2}]$ is the set of Laurent polynomials in $e^{ix/2}$, parity and degree are scalars, and a polynomial in $\laurent$ with parity $0$ is an element of $\laurentint = \CC[e^{ix}, e^{-ix}]$.

The simulation of univariate polynomials goes back to Ref.~\cite{low2016methodology}, which showed that a sequence of interleaved single-qubit rotations and a signal gate carrying $x$ realizes polynomial transformations in $\CC[x]$.
Subsequent work developed several conventions of quantum signal processing (QSP)~\cite{gilyen2019quantum, haah2019product, chao2020finding, yu2022power}, which are summarized in Ref.~\cite{martyn2021grand} and Ref.~\cite{wang2023quantum}.
Our construction is built on the trigonometric variant~\cite{yu2022power}, which characterizes the single-qubit sequences in the Fourier basis as follows.

\begin{lemma}[Trigonometric QSP, \cite{yu2022power}]\label{lem:trig qsp}
Let $L \in \NN$ and $P, Q \in \laurent$.
There exist $\omega \in \RR$, $\theta = (\theta_0, \theta_1, \ldots, \theta_L) \in \RR^{L+1}$ and $\phi = (\phi_0, \phi_1, \ldots, \phi_L) \in \RR^{L+1}$ such that
\begin{equation}
    R_z(\omega) R_y(\theta_0) R_z(\phi_0) \prod_{l = 1}^L R_z(x) R_y(\theta_l) R_z(\phi_l) = \begin{bmatrix}
        P(x) & -Q(x)\\
        Q^*(x) & P^*(x)
    \end{bmatrix}
    \quad \text{for all } x \in \RR
\end{equation}
if and only if
\begin{enumerate}
    \item $\deg(P) \leq L$ and $\deg(Q) \leq L$,\label{item:degree constraint}
    \item $P$ and $Q$ have parity $L \bmod 2$,\label{item:parity constraint}
    \item $\abs{P(x)}^2 + \abs{Q(x)}^2 = 1$ for all $x \in \RR$.\label{item:norm constraint}
\end{enumerate}
\end{lemma}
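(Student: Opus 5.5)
We prove the two directions separately. Necessity will follow by induction on $L$, and sufficiency by a layer-stripping (degree-reduction) argument, which is also induction on $L$.

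\emph{Necessity.} Define $W_L(x)$ as the left-hand side of the identity. It is a product of elements of $SU(2)$, so it has the form $\begin{bmatrix} P & -Q\\ Q^* & P^*\end{bmatrix}$ with $\abs{P}^2+\abs{Q}^2=1$; this is condition~\ref{item:norm constraint}. For $L=0$ the product $R_z(\omega)R_y(\theta_0)R_z(\phi_0)$ is constant in $x$, so $P$ and $Q$ have degree $0$ and parity $0$. Each additional layer $R_z(x)R_y(\theta_l)R_z(\phi_l)$ right-multiplies by a matrix whose entries are constants times $e^{\pm ix/2}$. The degree therefore increases by at most one and the parity flips. Here it matters that the entries of $R_z(x)$ are $e^{\mp ix/2}$, so every exponent changes by exactly $\pm1$. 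This gives conditions~\ref{item:degree constraint} and~\ref{item:parity constraint}.

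\emph{Sufficiency.} Let $(P,Q)$ satisfy the three conditions for a given $L$.

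For $L=0$, the polynomials $P$ and $Q$ are constants with $\abs{P}^2+\abs{Q}^2=1$. The matrix $\begin{bmatrix}P&-Q\\Q^*&P^*\end{bmatrix}$ is then an element of $SU(2)$. Its $ZYZ$ Euler decomposition supplies $\omega$, $\theta_0$ and $\phi_0$.

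For $L\ge 1$ we strip off the last layer. We look for $\theta_L,\phi_L$ such that
\[
\begin{bmatrix}P&-Q\\Q^*&P^*\end{bmatrix}R_z(-\phi_L)R_y(-\theta_L)R_z(-x)
\]
has entries $(P',Q')$ of degree at most $L-1$. Since each entry changes parity by one, the new pair automatically has parity $(L-1)\bmod 2$. Unitarity of the product preserves condition~\ref{item:norm constraint}, so the induction hypothesis applies to $(P',Q')$ and appending the stripped layer completes the step.

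To choose $\theta_L,\phi_L$, write $P=\sum_k p_k e^{ikx/2}$ and $Q=\sum_k q_k e^{ikx/2}$. Multiplying out, the new top-row entries are $(aP + bQ)e^{ix/2}$ and $(cP+dQ)e^{-ix/2}$, up to signs and conjugations, where $\begin{bmatrix}a&c\\b&d\end{bmatrix}$ is determined by the unitary $R_z(-\phi_L)R_y(-\theta_L)$. Degree $\le L-1$ requires two things:
\begin{itemize}
\item the coefficient of $e^{iLx/2}$ cancels in the entry multiplied by $e^{ix/2}$, which pushes degrees up, namely $a p_L + b q_L = 0$;
\item the coefficient of $e^{-iLx/2}$ cancels in the entry multiplied by $e^{-ix/2}$, namely $c p_{-L} + d q_{-L}=0$.
\end{itemize}
After multiplication by $e^{\pm ix/2}$, the remaining extreme terms have degree at most $L-1$ because of the parity constraint.

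\emph{Main obstacle.} We must show these two linear conditions can hold simultaneously with a single unitary. The key input is condition~\ref{item:norm constraint}. The coefficient of $e^{iLx}$ in $\abs{P}^2+\abs{Q}^2$, namely $p_L\bar p_{-L}+q_L\bar q_{-L}$, must vanish when $L\ge1$. Hence the vectors $(p_L,q_L)$ and $(p_{-L},q_{-L})$ are orthogonal in $\CC^2$. The first condition asks that the first column of the unitary be orthogonal to $(\bar p_L,\bar q_L)$, up to conjugations. The second asks that the second column be orthogonal to the corresponding conjugate of $(p_{-L},q_{-L})$. By the orthogonality just derived, we can take the columns to be the normalized orthogonal complements, and these form a unitary. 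If one of the vectors is zero, we simply complete the other.

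A unitary determined in this way has the form $R_z(-\phi_L)R_y(-\theta_L)$ only up to a diagonal phase. That diagonal phase commutes with $R_z(-x)$, so it can be pushed to the left and absorbed into $\omega$ at the end. We must also check that the matrix structure $\begin{bmatrix}P'&-Q'\\Q'^*&P'^*\end{bmatrix}$ is preserved, which holds because $SU(2)$ is closed under products. The careful sign and conjugation bookkeeping in this step is where errors are most likely, so it should be written out explicitly.
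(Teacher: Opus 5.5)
The paper gives no proof of this lemma; it imports it from Ref.~\cite{yu2022power}. Your argument is the standard degree-reduction induction for this statement: necessity by tracking exponents layer by layer, and sufficiency by peeling off the last layer using the relation $p_L\bar p_{-L}+q_L\bar q_{-L}=0$ forced by condition~3. It is correct, including the column-orthogonality construction and the degenerate cases.

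One step is misstated, though it is harmless. A diagonal factor $D$ cannot be pushed all the way to $\omega$, because it does not commute with the earlier $R_y(\theta_l)$. It also never needs to be. Both cancellation conditions constrain the columns of the unitary $M$ only up to phases. Every $M\in U(2)$ can be written $M=R_z(a)R_y(b)D$ with $D$ diagonal unitary. So you may replace $M$ by $MD^{-1}$ and take $\phi_L=-a$, $\theta_L=-b$ exactly. Equivalently, $D$ commutes through $R_z(-x)$ and merely rescales the columns of the reduced matrix by constants, so it is absorbed into $W_{L-1}$, not into $\omega$.

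With this choice, $W_L R_z(-\phi_L)R_y(-\theta_L)R_z(-x)$ is a product of $SU(2)$ elements with degree at most $L-1$. It therefore has the required form $\begin{bmatrix}P'&-Q'\\ \overline{Q'}&\overline{P'}\end{bmatrix}$, and the induction hypothesis applies directly.
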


Item~\ref{item:parity constraint} does not restrict the functions that can be realized: a Laurent polynomial in $e^{ix}$ of degree $L$ is an element of $\laurent$ with degree $2L$ and parity $0$, so every trigonometric polynomial bounded by one is covered by a sequence of length $2L$.
This is what distinguishes the trigonometric variant: it realizes general univariate transformations on a single qubit without the linear combination of unitaries (LCU)~\cite{childs2012hamiltonian} that other conventions need to remove their parity constraint.
When only $P$ is of interest, the complementary polynomial $Q$ exists whenever $\norm{P}_\infty \leq 1$~\cite{yu2022power}, and its phase is free in Item~\ref{item:norm constraint}: replacing $Q$ by $e^{i\gamma} Q$ conjugates the matrix in Lemma~\ref{lem:trig qsp} as $U(x) \mapsto R_z(-\gamma) U(x) R_z(\gamma)$, which leaves $P$ unchanged, so the choice $\gamma = \omega$ removes the leading rotation $R_z(\omega)$ and adds its angle to $\phi_L$, which saves one parameter.

\begin{corollary}\label{coro:trig qsp}
Let $L \in \NN$ and $P \in \laurent$ with $\deg(P) \leq L$, parity $L \bmod 2$ and $\norm{P}_\infty \leq 1$.
Then there exist $\theta \in \RR^{L+1}$ and $\phi \in \RR^{L+1}$ such that
\begin{equation}
    R_y(\theta_0) R_z(\phi_0) \prod_{l = 1}^L R_z(x) R_y(\theta_l) R_z(\phi_l) = \begin{bmatrix}
        P(x) & \ldots\\
        \ldots & \ldots
    \end{bmatrix}
    \quad \text{for all } x \in \RR
.\end{equation}
\end{corollary}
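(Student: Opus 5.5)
The plan is to derive the corollary from Lemma~\ref{lem:trig qsp} in two stages: first produce a complementary polynomial $Q$, then absorb the leading rotation $R_z(\omega)$ into the last phase.

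\textbf{Constructing $Q$.} Given $P$ with $\deg(P)\le L$, parity $L \bmod 2$ and $\norm{P}_\infty\le 1$, I would look for $Q\in\laurent$ of degree at most $L$ and the same parity with $\abs{Q}^2 = 1-\abs{P}^2$.

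\begin{itemize}
    \item The function $1-\abs{P(x)}^2$ is nonnegative.
    \item Since $P$ has parity $L \bmod 2$, every exponent of $\abs{P}^2=PP^*$ is even, so this function is a Laurent polynomial in $e^{ix}$ of degree at most $L$.
    \item The Fejér--Riesz theorem then gives a polynomial $g(w)$ of degree at most $L$ in $w=e^{ix}$ with $\abs{g(e^{ix})}^2 = 1-\abs{P(x)}^2$.
    \item Set $Q(x) = e^{-iLx/2} g(e^{ix})$. Its exponents in $e^{ix/2}$ are $-L,-L+2,\ldots,L$, so $Q$ has degree at most $L$ and parity $L \bmod 2$, and $\abs{Q}^2 = 1-\abs{P}^2$.
\end{itemize}

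This existence step is the only substantive point. The paper attributes it to \cite{yu2022power}, so I would cite it there, with Fejér--Riesz as the justification.

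\textbf{Applying the lemma.} The pair $(P,Q)$ now satisfies all three conditions of Lemma~\ref{lem:trig qsp}. The lemma therefore gives $\omega,\theta,\phi$ such that
\[
U(x) = R_z(\omega) R_y(\theta_0) R_z(\phi_0)\prod_{l=1}^L R_z(x)R_y(\theta_l)R_z(\phi_l) = \begin{bmatrix} P & -Q\\ Q^* & P^*\end{bmatrix}.
\]

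\textbf{Removing $R_z(\omega)$.} Since $R_z(\gamma)=\mathrm{diag}(e^{-i\gamma/2},e^{i\gamma/2})$ is diagonal, the $(0,0)$ entry of $R_z(-\omega)\,U(x)\,R_z(\omega)$ equals $P(x)$. Expanding,
\[
R_z(-\omega)U(x)R_z(\omega) = R_y(\theta_0)R_z(\phi_0)\prod_{l=1}^{L} R_z(x)R_y(\theta_l)R_z(\phi_l)\cdot R_z(\omega).
\]
The rightmost factor of the product is $R_z(\phi_L)$ (or $R_z(\phi_0)$ when $L=0$), and it merges with $R_z(\omega)$ into $R_z(\phi_L+\omega)$.

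Replacing $\phi_L$ by $\phi_L+\omega$ gives angles $\theta,\phi\in\RR^{L+1}$ of exactly the required form whose top-left entry is $P(x)$ for all $x$. The omitted entries are unconstrained, which matches the block notation in the corollary.
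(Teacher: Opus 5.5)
Your proof is correct and follows the paper's argument. The paper also takes the complementary $Q$ from the existence result for trigonometric QSP. It then uses the conjugation $U(x)\mapsto R_z(-\omega)U(x)R_z(\omega)$, phrased there as the phase change $Q\mapsto e^{i\omega}Q$, to remove the leading $R_z(\omega)$ and absorb $\omega$ into $\phi_L$. Your Fej\'er--Riesz construction of $Q$, including the parity bookkeeping that makes $1-\abs{P}^2$ a Laurent polynomial in $e^{ix}$ of degree at most $L$, is a valid self-contained justification of the step the paper only cites.
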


For a single unitary, quantum phase processing (QPP)~\cite{wang2023quantum} extends the trigonometric sequence from a scalar signal to the $d$ eigenphases of $U$.
Each query $R_z(x)$ is replaced by one of the oracles $C_0(U)$ and $C_1(U)$ in Equation~\eqref{eqn:oracles}, acting on the single qubit and $S$.
Restricted to the subspace spanned by $\ket{0}\ket{\chi_k}$ and $\ket{1}\ket{\chi_k}$, these oracles act as $e^{-ix_k/2} R_z(x_k)$ and $e^{ix_k/2} R_z(x_k)$, respectively, so a sequence of even length that alternates the two oracles runs the scalar sequence on every eigenphase simultaneously, and the scalar phases cancel.
For $P$ with parity $0$ and the same angles as in Lemma~\ref{lem:trig qsp}, the resulting circuit has $P(U)$ as the block selected by the single-qubit state $\ket{0}$.
Section~\ref{sec:unitary} uses the same mechanism for several commuting unitaries.
\section{Multivariate polynomial transformations}\label{sec:construction}

This section constructs circuits that simulate multivariate polynomial transformations of any (unknown) input $\bfx$ or $\bfU$ with the query counts prescribed by the degrees, and proves that these counts cannot be lowered.
For $\bfL \in \NN^N$, the polynomials are taken from the sets
\begin{align}
    \PP_{\bfL} &= \setcond{P \in \laurent}{\deg(P) \preceq \bfL \textrm{, $P$ has parity $\bfL \bmod 2$, } \norm{P}_{c, 1} \leq 1},\label{eqn:class scalar}\\
    \PP_{\bfL, 0} &= \setcond{P \in \laurent}{\deg(P) \preceq \bfL \textrm{, $P$ has parity $0^N$, } \norm{P}_{c, 1} \leq 1}\label{eqn:class even}
.\end{align}
The parity constraint in $\PP_\bfL$ is that of Lemma~\ref{lem:trig qsp}: a sequence with $L_j$ queries to $R_z(x_j)$ shifts the exponent of $e^{ix_j/2}$ by $\pm 1$ at each query and therefore produces exponents of parity $L_j \bmod 2$ only, which Proposition~\ref{prop:scalar lower bound} below makes precise for arbitrary circuits.
The set $\PP_{\bfL, 0}$ collects the polynomials of degree at most $\bfL$ that are $2\pi$-periodic in every variable, which are the polynomials with a well-defined action on unitaries; when every entry of $\bfL$ is even, $\PP_{\bfL, 0} = \PP_\bfL$.
Restricting to parity $0^N$ does not limit the functions that can be approximated, since Laurent polynomials in $e^{ix_j}$ with unconstrained coefficients are dense in the continuous $2\pi$-periodic functions~\cite{yu2024nonasymptotic, tang2026saqnn}; the bound $\norm{P}_{c, 1} \leq 1$ is a separate restriction, since a function bounded by one may only have approximants with larger coefficient norm, which are then simulated up to an explicit rescaling, as in the subnormalization of block encodings.
Section~\ref{sec:scalar} treats scalar signals and the set $\PP_\bfL$, Section~\ref{sec:unitary} treats commuting unitaries with prescribed positive and negative exponent ranges, including $\PP_{\bfL, 0}$ as the balanced even-degree case, and Section~\ref{sec:comparison} compares the results with previous work.

\subsection{Scalar signals}\label{sec:scalar}

\begin{figure}[t]
\centering
\resizebox{\linewidth}{!}{
\includegraphics{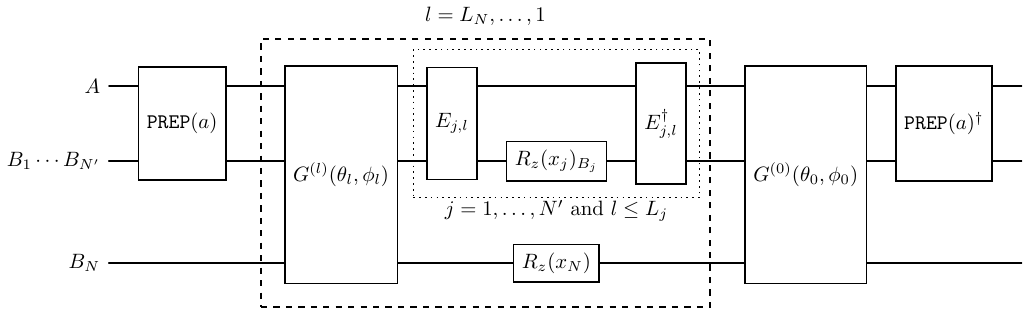}
}
\caption{The circuit $W(\bfx; a, \theta, \phi)$ in Equation~\eqref{eqn:scalar circuit} that simulates $P(\bfx)$ for $P \in \PP_\bfL$, where $a, \theta, \phi$ are parameters that depend on $P$. Here $N' = N - 1$ and $L_N$ is the largest entry of $\bfL$.}
\label{fig:mqsp single qubit}
\end{figure}

Fix $P \in \PP_\bfL$.
Without loss of generality, $L_N$ is the largest entry of $\bfL$; otherwise we relabel the variables.
Write $N' = N - 1$ and $\bfx' = (x_1, \ldots, x_{N'})$, and expand $P$ in the first $N'$ variables,
\begin{equation}\label{eqn:mu decomp}
    P(\bfx) = \sum_{\mu} e^{i(\mu \cdot \bfx')/2} P_\mu(x_N)
,\end{equation}
where $\mu = (n_1, \ldots, n_{N'})$ runs over the \emph{frequency grid} $\prod_{j=1}^{N'} \set{-L_j, -L_j + 2, \ldots, L_j}$ of the first $N'$ variables, which has $\ell = \prod_{j=1}^{N'}(L_j + 1)$ elements, and $P_\mu(x) = \sum_{k=-L_N}^{L_N} C_{\mu k}\, e^{ikx/2}$ is a univariate polynomial with $\deg(P_\mu) \leq L_N$ and parity $L_N \bmod 2$.
We denote by $\cI$ the set of frequency labels $\mu$ with $\norm{P_\mu}_{c, 1} > 0$ and abbreviate $\sum_{\mu \in \cI}$ as $\sum_\mu$; then $\sum_\mu \norm{P_\mu}_{c, 1} = \norm{P}_{c, 1}$.
The formulas below assume $P \neq 0$, so that $\norm{P}_{c, 1} > 0$ and $\cI$ is nonempty; the zero polynomial is treated separately in the proof of Theorem~\ref{thm:mqsp single qubit}.

The construction runs the univariate sequence of Corollary~\ref{coro:trig qsp} for all branches $P_\mu$ at once.
A register stores the label $\mu$ in superposition, with amplitudes set by the coefficient norms of the branches.
The sequence acts on one qubit and its rotation angles are controlled by $\mu$, so that every branch realizes its own polynomial while all branches share the same $L_N$ queries to $R_z(x_N)$.
The phase $e^{i(\mu \cdot \bfx')/2}$ of each branch is imprinted by $L_j$ queries to $R_z(x_j)$ for each $j < N$, each query conjugated by a fixed permutation of the register, and uncomputing the register at the end returns the weighted sum of the branches, which is $P(\bfx)$.

We use two registers $A$ and $B = B_1 \cdots B_N$.
The register $B$ consists of $N$ qubits, and the query $R_z(x_j)$ acts on $B_j$.
The register $A$ is auxiliary with $\abs{A} \geq \ceil{2\ell/\abs{B}}$ basis states, so that $AB'$ with $B' = B_1 \cdots B_{N'}$ has at least $\ell$ computational basis states; the smallest choice gives $\max\set{N, 1 + \ceil{\log_2 \ell}}$ qubits in total, which for equal degrees $L \geq 1$ is about $(N - 1)\log_2(L + 1) + 1$.
Each label $\mu$ of the frequency grid is encoded by a distinct computational basis state $\ket{\mu}$ of $AB'$, and $\Pi^\perp = I - \sum_\mu \ketbra{\mu}{\mu}$ denotes the projector onto the remaining states.
The circuit in Figure~\ref{fig:mqsp single qubit} consists of the following gates.
The preparation gate $\prep$ acts on $AB'$ and satisfies
\begin{equation}\label{eqn:prep}
    \prep \ket{0} = \sum_\mu a_\mu \ket{\mu}, \qquad a_\mu = \sqrt{\norm{P_\mu}_{c, 1} / \norm{P}_{c, 1}}
,\end{equation}
so that $\sum_\mu a_\mu^2 = 1$.
The rotation layers $G^{(l)}$ for $0 \leq l \leq L_N$ act on $AB$ and apply to $B_N$ a rotation controlled by the label,
\begin{equation}\label{eqn:rotation layer}
    G^{(l)}(\theta_l, \phi_l) = \sum_\mu \ketbra{\mu}{\mu} \ox R_y(\theta_{l, \mu}) R_z(\phi_{l, \mu}) + \Pi^\perp \ox I
,\end{equation}
where, for each $\mu \in \cI$, the angles $\theta_{l, \mu}, \phi_{l, \mu}$ with $0 \leq l \leq L_N$ are those given by Corollary~\ref{coro:trig qsp} for the normalized branch $\norm{P}_{c, 1} P_\mu / \norm{P_\mu}_{c, 1}$, whose supremum norm is at most $\norm{P}_{c, 1} \leq 1$.
The layers $G^{(l)}$ interleaved with $R_z(x_N)$ run one trigonometric QSP sequence for every label; this index-controlled form is the multi-controlled alternating phase sequence of Ref.~\cite{gilyen2019quantum}.
The permutation gates $E_{j, l}$ for $1 \leq j \leq N'$ and $1 \leq l \leq L_j$ act on $AB'$ and satisfy, for every label $\mu$ of the frequency grid,
\begin{equation}\label{eqn:permutation}
    \prod_{l=1}^{L_j} E_{j, l}^\dagger R_z(x_j)_{B_j} E_{j, l} \ket{\mu} = e^{i\mu_j x_j/2} \ket{\mu}
,\end{equation}
so that the $L_1 + \cdots + L_{N'}$ conjugated queries together imprint the phase $e^{i(\mu \cdot \bfx')/2}$ on $\ket{\mu}$.
Such permutations exist under the above condition $\abs{A}\, 2^{N'} \geq \ell$ on the register size.
Indeed, $\ket{\mu}$ acquires the phase $e^{ix_j/2}$ at layer $l$ if $E_{j, l}$ maps it to a basis state whose $B_j$ bit is $1$ and $e^{-ix_j/2}$ otherwise, so $\ket{\mu}$ has to be mapped to a bit-$1$ state in exactly $k(\mu) = (L_j + \mu_j)/2$ of the $L_j$ layers.
Pair each $\mu$ with $\mu_j \neq 0$ with the label $\bar\mu$ that negates its $j$-th entry, assign the layers $1, \ldots, k(\mu)$ to $\mu$ and the remaining $L_j - k(\mu) = k(\bar\mu)$ layers to $\bar\mu$, and pair the labels with $\mu_j = 0$ among themselves in the same way, the second member of each pair using the last $L_j/2$ layers; if one label with $\mu_j = 0$ remains unpaired, assign it any $L_j/2$ layers.
Then at every layer at most $\ceil{\ell/2}$ labels need a bit-$1$ state and at most $\ceil{\ell/2}$ need a bit-$0$ state, while $AB'$ contains $\abs{A}\, 2^{N'}/2 \geq \ceil{\ell/2}$ basis states of each kind, so a permutation with the required action exists at every layer.
The gates $E_{j, l}$ depend only on $\bfL$ and $N$ and carry no parameter of $P$.
Collecting the queries of layer $l$ into
\begin{equation}\label{eqn:scalar query layer}
    D_l(\bfx) = \left( \prod_{j:\, l \leq L_j} E_{j, l}^\dagger R_z(x_j)_{B_j} E_{j, l} \right) R_z(x_N)_{B_N}
,\end{equation}
where the product runs over $1 \leq j \leq N'$ with $l \leq L_j$, the circuit in Figure~\ref{fig:mqsp single qubit} reads
\begin{equation}\label{eqn:scalar circuit}
    W(\bfx; a, \theta, \phi) = \prep^\dagger\, G^{(0)} \left[\prod_{l=1}^{L_N} D_l(\bfx)\, G^{(l)}\right] \prep
,\end{equation}
with $\prep$ acting on $AB'$ and the identity understood on undisplayed registers.
Within this construction, we obtain the following theorem.

\begin{theorem}\label{thm:mqsp single qubit}
Let $\bfL \in \NN^N$, $P \in \PP_\bfL$ and $\ell = \prod_{j=1}^{N-1}(L_j + 1)$.
There exist $a \in \RR^{\ell}$ and $\theta, \phi \in \RR^{(L_N + 1) \times \ell}$ such that the circuit $W(\bfx; a, \theta, \phi)$ in Equation~\eqref{eqn:scalar circuit} satisfies
\begin{equation}
    W(\bfx; a, \theta, \phi) = \begin{bmatrix}
        P(\bfx) & \ldots \\
        \ldots & \ldots
    \end{bmatrix}
    \quad \text{for all } \bfx \in \RR^N
.\end{equation}
Moreover, $W(\bfx; a, \theta, \phi)$ uses exactly $L_j$ queries to $R_z(x_j)$ for each $j$, and at most $2(L_N + 2)\ell - 2$ single-parameter rotation gates whose parameters depend on $P$.
\end{theorem}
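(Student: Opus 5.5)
The plan is to compute the block $\bra{0}W(\bfx)\ket{0}$ directly, using the fact that all gates between $\prep$ and $\prep^\dagger$ are block diagonal in the label basis $\set{\ket{\mu}}$ of $AB'$. Here $\ket{0}$ is the all-zero state of $AB$, and the qubit $B_N$ starts in $\ket{0}$.

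The first step is to check that block-diagonal structure. By Equation~\eqref{eqn:rotation layer}, each $G^{(l)}$ acts on $\ket{\mu}\ket{\psi}_{B_N}$ as $\ket{\mu} \ox R_y(\theta_{l,\mu})R_z(\phi_{l,\mu})\ket{\psi}$. The query $R_z(x_N)$ acts only on $B_N$. Each conjugated query $E_{j,l}^\dagger R_z(x_j)_{B_j} E_{j,l}$ is diagonal in the computational basis of $AB'$ after conjugation by a permutation. It therefore maps $\ket{\mu}$ to a phase times $\ket{\mu}$, namely $e^{\pm i x_j/2}$ according to the $B_j$ bit of $E_{j,l}\ket{\mu}$, and it commutes with everything acting on $B_N$.

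It follows that the product $G^{(0)}\prod_l D_l G^{(l)}$ restricted to $\ket{\mu}\ox\cH_{B_N}$ equals
\[
\Big(\prod_{j<N}\prod_{l=1}^{L_j} \text{phase}_{j,l}(\mu)\Big)\, R_y(\theta_{0,\mu})R_z(\phi_{0,\mu})\prod_{l=1}^{L_N} R_z(x_N)R_y(\theta_{l,\mu})R_z(\phi_{l,\mu}).
\]
By Equation~\eqref{eqn:permutation}, the phase product is $e^{i(\mu\cdot\bfx')/2}$. I take the existence of the permutations $E_{j,l}$ as established in the text preceding the theorem.

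For $\mu \in \cI$, Corollary~\ref{coro:trig qsp} applies to $\tilde P_\mu = \norm{P}_{c,1}P_\mu/\norm{P_\mu}_{c,1}$. This branch has degree at most $L_N$ and parity $L_N \bmod 2$, and $\norm{\tilde P_\mu}_\infty \le \norm{\tilde P_\mu}_{c,1} = \norm{P}_{c,1} \le 1$. Hence the $(0,0)$ entry of the single-qubit sequence is $\tilde P_\mu(x_N)$.

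Sandwiching with $\prep$ then gives
\[
\bra{0}W\ket{0} = \sum_\mu a_\mu^2\, e^{i(\mu\cdot\bfx')/2}\,\tilde P_\mu(x_N) = \sum_\mu e^{i(\mu\cdot\bfx')/2} P_\mu(x_N) = P(\bfx)
\]
by Equation~\eqref{eqn:mu decomp}, because $a_\mu^2 \tilde P_\mu = P_\mu$. The states in $\Pi^\perp$ never enter, since $\prep\ket{0}$ has no support there.

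The zero polynomial is handled separately. I would choose $\prep$ to prepare a single label and choose angles realizing the zero polynomial. For example, when $L_N \ge 1$, take $R_y(\pi/2)$ at suitable layers, or use the complementary entry. The simplest choice is to set $\theta_0=\pi$ with all other angles zero: $R_y(\pi)$ maps $\ket0$ to $\ket1$ and the remaining $z$-rotations are diagonal, so the $(0,0)$ entry vanishes. Labels outside $\cI$ get arbitrary angles, say zero, and carry zero amplitude.

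Query counting is read off Equation~\eqref{eqn:scalar circuit}. $R_z(x_N)$ appears once in each of $D_1,\dots,D_{L_N}$, giving $L_N$ queries. $R_z(x_j)$ appears in $D_l$ exactly when $l\le L_j$, giving $L_j$ queries since $L_j\le L_N$.

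For the parameter count, the $G^{(l)}$ contribute $2(L_N+1)$ angles per label, so at most $2(L_N+1)\ell$ in total. $\prep$ loads a real nonnegative $\ell$-amplitude vector, which a standard binary-tree state preparation does with $\ell-1$ controlled $R_y$ rotations. That already gives $2(L_N+1)\ell + (\ell-1)$, and $\prep^\dagger$ reuses the same parameters. The stated bound $2(L_N+2)\ell-2 = 2(L_N+1)\ell + 2(\ell-1)$ then follows if $\prep$ and $\prep^\dagger$ are counted separately.

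The main obstacle is this bookkeeping: confirming that state preparation of $\ell$ nonnegative amplitudes needs at most $\ell-1$ single-parameter rotations, counted twice to reach $2\ell-2$, when $\ell$ is not a power of two. I would pad the amplitude vector with zeros to a power of two and prune the tree rotations whose subtrees are empty.
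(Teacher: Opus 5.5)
Your proposal is correct and follows the paper's proof essentially step for step. It uses the same label-wise block decomposition with the permutation phases $e^{i(\mu\cdot\bfx')/2}$, and applies Corollary~\ref{coro:trig qsp} to the rescaled branches $\norm{P}_{c,1}P_\mu/\norm{P_\mu}_{c,1}$ with the cancellation $a_\mu^2\norm{P}_{c,1}/\norm{P_\mu}_{c,1}=1$. It handles $P=0$ by the same $R_y(\pi)$ trick on a single label, and obtains the same gate count $2(L_N+1)\ell+2(\ell-1)$, in which $\prep$ and $\prep^\dagger$ are each charged $\ell-1$ rotations. Your padding-and-pruning remark for $\ell$ not a power of two only fills in a detail the paper leaves implicit.
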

\begin{proof}
If $P = 0$, fix any label $\mu_0$ of the frequency grid, let $\prep$ map $\ket{0}$ to $\ket{\mu_0}$, keep the permutations $E_{j, l}$, take $G^{(l)}$ for $l \geq 1$ to be identities and let $G^{(0)}$ apply $R_y(\pi)$ to $B_N$ on the label $\mu_0$; since each $E_{j, l}$ permutes the computational basis, every conjugated query multiplies $\ket{\mu_0}$ by a phase, so $\ket{\mu_0} \ox \ket{0}_{B_N}$ is an eigenvector of every $D_l(\bfx)$, and $\bra{0} R_y(\pi) \ket{0} = 0$ makes the matrix element vanish.
Hence assume $P \neq 0$, so that $\norm{P}_{c, 1} > 0$ and $\cI \neq \emptyset$, and let $a$, $\theta$, $\phi$ be the parameters in Equations~\eqref{eqn:prep} and~\eqref{eqn:rotation layer}.
Corollary~\ref{coro:trig qsp} applies to each normalized branch, because $\deg(P_\mu) \leq L_N$, $P_\mu$ has parity $L_N \bmod 2$ and $\norm{P}_{c, 1} \norm{P_\mu}_\infty / \norm{P_\mu}_{c, 1} \leq \norm{P}_{c, 1} \leq 1$, so for every $\mu \in \cI$,
\begin{equation}\label{eqn:branch sequence}
    R_y(\theta_{0, \mu}) R_z(\phi_{0, \mu}) \prod_{l=1}^{L_N} R_z(x) R_y(\theta_{l, \mu}) R_z(\phi_{l, \mu}) \ket{0} = \frac{\norm{P}_{c, 1} P_\mu(x)}{\norm{P_\mu}_{c, 1}} \ket{0} + \ldots \ket{1}
.\end{equation}
On the label state $\ket{\mu}$, the layer $G^{(l)}$ acts on $B_N$ as $R_y(\theta_{l, \mu}) R_z(\phi_{l, \mu})$, and by Equation~\eqref{eqn:permutation} the conjugated queries in the layers $D_1(\bfx), \ldots, D_{L_N}(\bfx)$ multiply $\ket{\mu}$ by $e^{i(\mu \cdot \bfx')/2}$ in total.
Therefore
\begin{align}
    W(\bfx; a, \theta, \phi) \ket{0}_{AB}
    &= \prep^\dagger\, G^{(0)} \left[\prod_{l=1}^{L_N} D_l(\bfx)\, G^{(l)}\right] \sum_\mu a_\mu \ket{\mu} \ox \ket{0}_{B_N} \\
    &= \prep^\dagger \sum_\mu a_\mu\, e^{i(\mu \cdot \bfx')/2} \ket{\mu} \ox \left( \frac{\norm{P}_{c, 1} P_\mu(x_N)}{\norm{P_\mu}_{c, 1}} \ket{0} + \ldots \ket{1} \right)_{B_N} \\
    &= \ket{0}_{AB'} \ox \sum_\mu a_\mu^2\, e^{i(\mu \cdot \bfx')/2} \left( \frac{\norm{P}_{c, 1} P_\mu(x_N)}{\norm{P_\mu}_{c, 1}} \ket{0} + \ldots \ket{1} \right)_{B_N} + \ket{\mathord{\perp}} \\
    &= \ket{0}_{AB'} \ox \left( P(\bfx) \ket{0} + \ldots \ket{1} \right)_{B_N} + \ket{\mathord{\perp}}
,\end{align}
where $\bra{0}_{AB'} \ket{\mathord{\perp}} = 0$, and the last step uses $a_\mu^2 \norm{P}_{c, 1} / \norm{P_\mu}_{c, 1} = 1$ and Equation~\eqref{eqn:mu decomp}.
Projecting onto $\ket{0}_{AB}$ gives $P(\bfx)$.
As for the resources, $R_z(x_j)$ is used once in each of the $L_j$ layers with $l \leq L_j$, and $R_z(x_N)$ once in each of the $L_N$ layers.
The gates $E_{j, l}$ are fixed permutations; $\prep$ and the layers $G^{(l)}$ can be realized with multi-controlled $R_y$ and $R_z$ gates, where $\prep$ takes $\ell - 1$ parameters and each $G^{(l)}$ takes $2\ell$, which gives at most $2(L_N + 1)\ell + 2(\ell - 1) = 2(L_N + 2)\ell - 2$ rotation gates in total.
This counts the parameter slots of the template; vanishing angles and inactive branches may reduce the number of gates, and the count excludes the fixed permutations, the decomposition of the multi-controlled rotations into elementary gates and the precision of the angles.
\end{proof}

The query count of Theorem~\ref{thm:mqsp single qubit} is one per unit of degree in every variable, as in univariate QSP, while the number $\ell$ of branches enters only the rotation count and the size of the register $A$.
All parameters are determined by univariate computations: the amplitudes $a_\mu$ are coefficient norms, and the angles of each branch are obtained by the phase-finding algorithms of univariate QSP~\cite{haah2019product, chao2020finding} applied to $\ell$ polynomials of degree at most $L_N$, so no numerical search over the multivariate structure is needed; the precision and running time of these algorithms are not analyzed here.
For $N = 1$ the register $AB'$ is trivial, $\prep$ and $E_{j, l}$ disappear, and $W$ reduces to the sequence of Corollary~\ref{coro:trig qsp}.
Note that the convention $L_N = \max_j L_j$ is a matter of economy: running the sequence in a variable of smaller degree would increase $\ell$ without changing the query counts.

The range of polynomials that the circuit simulates is slightly larger than $\PP_\bfL$.
The constraint $\norm{P}_{c, 1} \leq 1$ can be relaxed to $\sum_\mu \norm{P_\mu}_\infty \leq 1$ for the branches $P_\mu$ in Equation~\eqref{eqn:mu decomp}, so that the same circuit simulates every polynomial in
\begin{equation}\label{eqn:relaxed class}
    \setcond{P \in \laurent}{\deg(P) \preceq \bfL \textrm{, $P$ has parity $\bfL \bmod 2$, } \sum_\mu \norm{P_\mu}_{\infty} \leq 1}
.\end{equation}
Indeed, for $P \neq 0$, with $\alpha = \sum_\mu \norm{P_\mu}_\infty > 0$ and the sum running over the nonzero branches, take $a_\mu = \sqrt{\norm{P_\mu}_\infty / \alpha}$ and let the layers $G^{(l)}$ realize $\alpha P_\mu / \norm{P_\mu}_\infty$, whose absolute value is at most $\alpha \leq 1$; the computation in the proof then goes through unchanged.
Since $\norm{P}_\infty \leq \sum_\mu \norm{P_\mu}_\infty \leq \norm{P}_{c, 1}$, this condition lies between the supremum-norm condition of univariate QSP, to which it reduces for $N = 1$, and the coefficient-norm condition of $\PP_\bfL$.

The query counts in Theorem~\ref{thm:mqsp single qubit} are also necessary, and the parity constraint of $\PP_\bfL$ is intrinsic to the query model.
Both facts follow from the bookkeeping of exponents already visible in Lemma~\ref{lem:trig qsp}: every query to $R_z(x_j)$ multiplies matrix entries by $e^{\mp ix_j/2}$ and thus moves the exponent of $e^{ix_j/2}$ by one, whatever the other gates of the circuit are.
The following proposition makes this precise.

\begin{proposition}\label{prop:scalar lower bound}
Let $P \in \laurent$ be nonzero.
Suppose that a circuit $W(\bfx)$ accesses each variable $x_j$ only through $q_j$ queries to $R_z(x_j)$, with all other gates independent of $\bfx$, and that $W(\bfx)$ simulates $P$ in the sense of Equation~\eqref{eqn:block convention}.
Then $P$ has parity $(q_1 \bmod 2, \ldots, q_N \bmod 2)$ and $\deg_j(P) \leq q_j$ for every $j$.
In particular, every circuit that simulates a polynomial $P \in \PP_\bfL$ with $\deg(P) = \bfL$ uses at least $L_j$ queries to $R_z(x_j)$ for each $j$, which Theorem~\ref{thm:mqsp single qubit} attains.
\end{proposition}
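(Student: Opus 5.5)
Every entry of $W(\bfx)$ is a multivariate Laurent polynomial in the half-angle phases whose exponents are tightly controlled, and this is what I would establish first. Write the circuit as a product $W(\bfx) = V_0\, Q_1\, V_1\, Q_2 \cdots Q_M\, V_M$, where the $V_m$ are the $\bfx$-independent gates and each $Q_m$ is a query $R_z(x_{j_m})$ acting on a designated qubit (tensored with the identity elsewhere). Each query decomposes as
\begin{equation}
    R_z(x_j) = e^{-ix_j/2}\, \ketbra{0}{0} + e^{ix_j/2}\, \ketbra{1}{1},
\end{equation}
so it equals $e^{-ix_j/2} A_j + e^{ix_j/2} B_j$ with $A_j, B_j$ constant matrices (projectors tensored with identity). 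Substituting this into the product and expanding gives
\begin{equation}
    W(\bfx) = \sum_{s \in \set{\pm 1}^M} \Big(\prod_{m} e^{i s_m x_{j_m}/2}\Big)\, M_s,
\end{equation}
where each $M_s$ is a constant matrix. For a fixed sign string $s$, the exponent of $e^{ix_j/2}$ is $\sum_{m:\, j_m = j} s_m$. This is a sum of $q_j$ terms equal to $\pm 1$, so its absolute value is at most $q_j$ and it is congruent to $q_j$ modulo $2$.

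Consequently the matrix element $\bra{0}W(\bfx)\ket{0}$ is a function of the form in Equation~\eqref{eqn:general polynomial} with $z(x) = e^{ix/2}$. Its coefficient array $\tilde C$ vanishes outside the set of $\bfn$ with $\abs{n_j} \leq q_j$ and $n_j \equiv q_j \pmod 2$ for all $j$.

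The remaining step, which is the only subtle one, is to transfer these support properties to $P$ itself. We know only that $P(\bfx) = \sum_\bfn \tilde C_\bfn e^{i\bfn\cdot\bfx/2}$ as functions on $\RR^N$, so we need the coefficients of $P$ to be uniquely determined. I would argue uniqueness of the Fourier expansion: the functions $e^{i\bfn\cdot\bfx/2}$ with $\bfn \in \ZZ^N$ are orthonormal with respect to the normalized integral over the torus $[0,4\pi)^N$. Hence a finite linear combination vanishing identically has all coefficients zero, which gives $C_\bfn = \tilde C_\bfn$ for every $\bfn$.

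It follows that $C_\bfn = 0$ whenever $\bfn \bmod 2 \neq (q_j \bmod 2)_j$, which is exactly the stated parity. It also follows that $\deg_j(P) \leq q_j$. The hypothesis $P \neq 0$ is needed because the zero polynomial carries every parity, so only for nonzero $P$ is the parity statement meaningful and consistent.

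For the final claim, take $P \in \PP_\bfL$ with $\deg(P) = \bfL$. Then $L_j = \deg_j(P) \leq q_j$ for every $j$, and Theorem~\ref{thm:mqsp single qubit} uses exactly $L_j$ queries to each $R_z(x_j)$, so its query count is optimal.
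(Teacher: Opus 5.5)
Your proposal is correct and follows essentially the same route as the paper: the paper tracks exponent bounds and parity by induction over the gates, where you expand each query into its two phase branches and sum over sign strings, which is the same bookkeeping. Both arguments then transfer the support constraints to $P$ by uniqueness of Fourier coefficients via orthogonality of $e^{i(\bfn\cdot\bfx)/2}$ on $[0,4\pi]^N$.
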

\begin{proof}
Write $W(\bfx) = W_M \cdots W_1$ as a product of gates, each extended by the identity to the whole register of the circuit.
The entries of a gate independent of $\bfx$ are constants, and the nonzero entries of a query $R_z(x_j)$ are $e^{-ix_j/2}$ and $e^{ix_j/2}$.
Since matrix multiplication forms sums of products of entries, induction on $m$ shows that every entry of $W_m \cdots W_1$ is a Laurent polynomial in $e^{ix_1/2}, \ldots, e^{ix_N/2}$ in which, for every monomial, the exponent of $e^{ix_j/2}$ has absolute value at most the number of queries to $R_z(x_j)$ among $W_1, \ldots, W_m$ and the same parity as that number.
For $m = M$, every diagonal entry of the block in Equation~\eqref{eqn:block convention} is an entry of $W(\bfx)$ and equals $P(\bfx)$ for all $\bfx \in \RR^N$.
The functions $e^{i(\bfn \cdot \bfx)/2}$ for $\bfn \in \ZZ^N$ are orthogonal over $[0, 4\pi]^N$, so the coefficients of a Laurent polynomial in $e^{ix_1/2}, \ldots, e^{ix_N/2}$ are determined by its values.
Hence every $\bfn$ with $C_\bfn \neq 0$ satisfies $\abs{n_j} \leq q_j$ and $n_j \equiv q_j \pmod 2$, which is the claim.
\end{proof}

Proposition~\ref{prop:scalar lower bound} does not assume the structure of Figure~\ref{fig:mqsp single qubit}: the other gates may act on any register and may be controlled by it.
The bound concerns exact simulation for every $\bfx \in \RR^N$ and is attained by Theorem~\ref{thm:mqsp single qubit} for all $P \in \PP_\bfL$ with $\deg(P) = \bfL$.

For a fixed order of the scalar queries, Appendix~\ref{sec:sdp} gives an exact semidefinite characterization of the attainable normalization, together with the associated workspace and order constraints.

\subsection{Commuting unitaries}\label{sec:unitary}

We now let the variables be the eigenphases of mutually commuting unitaries $\bfU = (U_1, \ldots, U_N)$ acting on $S$, accessed through the oracles $C_0(U_j)$ and $C_1(U_j)$ in Equation~\eqref{eqn:oracles}, and simulate $P(\bfU)$.
The mechanism is that of QPP reviewed in Section~\ref{sec:preliminaries}.
Restricted to the subspace spanned by $\ket{0}\ket{\chi_k}$ and $\ket{1}\ket{\chi_k}$ of a control qubit and $S$, the oracles act as
\begin{equation}\label{eqn:oracle phase}
    C_s(U_j) = \bigoplus_{k=0}^{d-1} e^{i(2s - 1) x_{j, k}/2}\, R_z(x_{j, k}), \qquad s \in \set{0, 1}
,\end{equation}
so replacing each query $R_z(x_j)$ in the scalar circuit by one of the two oracles, with the control on $B_j$ and the target on $S$, runs the scalar circuit on every common eigenvector simultaneously, up to a scalar phase per eigenvector that records which oracle was used in each layer.
An unbalanced schedule leaves the half-angle phase $e^{i\Delta_j x_{j, k}/2}$, where $\Delta_j$ is the excess of $C_1$ over $C_0$ queries; multiplied by a scalar polynomial of parity $\Delta_j \bmod 2$, this phase yields integer exponents only, and so it can be used to shift the exponents of the simulated polynomial.
For $\bfL^-, \bfL^+ \in \NN^N$ we consider polynomials $P \in \laurentint$ with $\deg^{-}(P) \preceq \bfL^-$, $\deg^{+}(P) \preceq \bfL^+$ and $\norm{P}_{c, 1} \leq 1$, that is, $P(\bfx) = \sum_\bfn C_\bfn e^{i\bfn \cdot \bfx}$ with $C_\bfn = 0$ unless $-L_j^- \leq n_j \leq L_j^+$ for every $j$, and $\sum_\bfn \abs{C_\bfn} \leq 1$.
Set $\bfL = \bfL^- + \bfL^+$ and $\boldsymbol{\Delta} = \bfL^+ - \bfL^-$.
A \emph{query schedule} is a choice of $\sigma_{j, l} \in \set{0, 1}$ for $1 \leq j \leq N$ and $1 \leq l \leq L_j$ with $\sum_{l=1}^{L_j} \sigma_{j, l} = L_j^+$.
With the same gates $\prep$, $G^{(l)}$ and $E_{j, l}$ as in Section~\ref{sec:scalar}, define the query layer
\begin{equation}\label{eqn:asymmetric query layer}
    D_l^\sigma(\bfU) = \left( \prod_{j:\, l \leq L_j} (E_{j, l})_{AB'}^\dagger\, C_{\sigma_{j, l}}(U_j)_{B_j S}\, (E_{j, l})_{AB'} \right) C_{\sigma_{N, l}}(U_N)_{B_N S}
,\end{equation}
where the product ranges over $1 \leq j \leq N - 1$, and the circuit
\begin{equation}\label{eqn:asymmetric circuit}
    V_\sigma(\bfU; a, \theta, \phi) = \prep^\dagger\, G^{(0)} \left[\prod_{l=1}^{L_N} D_l^\sigma(\bfU)\, G^{(l)}\right] \prep
\end{equation}
uses $L_j^+$ queries to $C_1(U_j)$ and $L_j^-$ queries to $C_0(U_j)$.

\begin{theorem}\label{thm:mqsp asymmetric}
Let $\bfL^-, \bfL^+ \in \NN^N$, let $P \in \laurentint$ with $\deg^{-}(P) \preceq \bfL^-$, $\deg^{+}(P) \preceq \bfL^+$ and $\norm{P}_{c, 1} \leq 1$, let $\bfL = \bfL^- + \bfL^+$ with the variables labelled so that $L_N = \max_j L_j$, and let $\ell = \prod_{j=1}^{N-1}(L_j + 1)$.
For every query schedule $\sigma$ there exist $a \in \RR^{\ell}$ and $\theta, \phi \in \RR^{(L_N + 1) \times \ell}$ such that the circuit $V_\sigma(\bfU; a, \theta, \phi)$ in Equation~\eqref{eqn:asymmetric circuit} satisfies
\begin{equation}
    V_\sigma(\bfU; a, \theta, \phi) = \begin{bmatrix}
        P(\bfU) & \ldots \\
        \ldots & \ldots
    \end{bmatrix}
\end{equation}
for every tuple $\bfU$ of mutually commuting unitaries on $S$.
Moreover, $V_\sigma(\bfU; a, \theta, \phi)$ uses exactly $L_j^+$ queries to $C_1(U_j)$ and $L_j^-$ queries to $C_0(U_j)$ for each $j$, and at most $2(L_N + 2)\ell - 2$ single-parameter rotation gates whose parameters depend on $P$.
\end{theorem}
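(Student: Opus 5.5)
The plan is to reduce Theorem~\ref{thm:mqsp asymmetric} to Theorem~\ref{thm:mqsp single qubit}, applied eigenvector by eigenvector, through a scalar polynomial $\tilde P \in \laurent$ that absorbs the phase left by the schedule. For each $\bfn$ with $-L_j^- \leq n_j \leq L_j^+$, set $m_j = 2n_j - \Delta_j$. Then $\abs{m_j} \leq L_j^+ + L_j^-  = L_j$, since $2n_j - L_j^+ + L_j^-$ ranges over $[-L_j, L_j]$, and $m_j \equiv \Delta_j \equiv L_j \pmod 2$. Define
\begin{equation*}
    \tilde P(\bfx) = \sum_\bfn C_\bfn\, e^{i(\bfm \cdot \bfx)/2},
\end{equation*}
so that $e^{i(\boldsymbol{\Delta} \cdot \bfx)/2}\,\tilde P(\bfx) = P(\bfx)$. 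By construction $\tilde P \in \PP_\bfL$: it has degree at most $\bfL$, parity $\bfL \bmod 2$, and the same coefficient $1$-norm as $P$. I then take $a, \theta, \phi$ to be the parameters that Theorem~\ref{thm:mqsp single qubit} provides for $\tilde P$. The zero polynomial is handled exactly as in that proof, since the eigenvector argument below shows the matrix element is $0$.

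The core step is the per-eigenvector reduction. Decompose $\CC^{2^{\abs{A}\ldots}} \otimes S$ as $\bigoplus_k (\cdot) \otimes \ket{\chi_k}$. Each gate of $V_\sigma$ preserves every summand, because $\prep$, $G^{(l)}$ and $E_{j,l}$ act as identity on $S$, and the oracles are diagonal in $\set{\ket{\chi_k}}$ by Equation~\eqref{eqn:oracle phase}. On the $k$-th summand, $C_{\sigma_{j,l}}(U_j)$ acts on $B_j$ as $e^{i(2\sigma_{j,l}-1)x_{j,k}/2} R_z(x_{j,k})$. The scalar phase commutes through the conjugation by $E_{j,l}$ and through all other gates. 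Hence the restriction of $V_\sigma$ to the $k$-th summand equals
\begin{equation*}
    \prod_j e^{i\sum_l (2\sigma_{j,l}-1) x_{j,k}/2}\; W(\bfx_{:,k}; a,\theta,\phi) = e^{i(\boldsymbol{\Delta}\cdot \bfx_{:,k})/2}\, W(\bfx_{:,k}; a,\theta,\phi),
\end{equation*}
using $\sum_l (2\sigma_{j,l}-1) = 2L_j^+ - L_j = \Delta_j$. Applying Theorem~\ref{thm:mqsp single qubit}, the $\ket{0}$ ancilla matrix element on this summand is $e^{i(\boldsymbol{\Delta}\cdot\bfx_{:,k})/2}\tilde P(\bfx_{:,k}) = P(\bfx_{:,k})$. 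Summing over $k$ then gives the block $\sum_k P(\bfx_{:,k})\ketbra{\chi_k}{\chi_k} = P(\bfU)$, as required.

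The main obstacle is making this identification airtight. I will need to check that the ancilla block of $V_\sigma$ restricted to $\ket{\chi_k}$ really is the scalar matrix element of $W$; this holds because the ancilla projector commutes with the decomposition. I will also need to confirm that the half-angle choice of representative $x_{j,k}$ is harmless. It is, because the product $e^{i\Delta_j x/2}\tilde P$ has integer exponents $n_j$ and is therefore $2\pi$-periodic, and the oracles themselves are representative-independent.

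The resource counts then follow directly. The query counts are $\sum_l \sigma_{j,l} = L_j^+$ uses of $C_1(U_j)$ and $L_j - L_j^+ = L_j^-$ uses of $C_0(U_j)$. The parameterized gates are identical to those of $W$, so the bound $2(L_N+2)\ell-2$ is inherited. Here $\ell$ is computed from $\bfL = \bfL^- + \bfL^+$, which is the degree vector of $\tilde P$.
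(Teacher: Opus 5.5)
Your proposal is correct and follows essentially the same route as the paper. You center $P$ as $\widetilde P=e^{-i\boldsymbol{\Delta}\cdot\bfx/2}P\in\PP_\bfL$, take the parameters of Theorem~\ref{thm:mqsp single qubit} for $\widetilde P$, and use $\sum_l(2\sigma_{j,l}-1)=\Delta_j$ to show that $V_\sigma$ acts on each common eigenvector as $e^{i\boldsymbol{\Delta}\cdot\bfx_{:,k}/2}W(\bfx_{:,k})$; your remarks on the zero polynomial and on representative-independence are harmless additions that the paper leaves implicit.
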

\begin{proof}
Define the centred polynomial
\begin{equation}\label{eqn:asymmetric centering}
    \widetilde{P}(\bfx) = e^{-i\boldsymbol{\Delta} \cdot \bfx/2} P(\bfx) = \sum_{\bfn} C_\bfn\, e^{i(2\bfn - \boldsymbol{\Delta}) \cdot \bfx/2}
.\end{equation}
For every $\bfn$ with $C_\bfn \neq 0$ and every $j$, the exponent $m_j = 2n_j - \Delta_j$ satisfies $-L_j \leq m_j \leq L_j$, because $-L_j^- \leq n_j \leq L_j^+$, and $m_j \equiv L_j \pmod 2$, because $\Delta_j \equiv L_j \pmod 2$.
The map $\bfn \mapsto 2\bfn - \boldsymbol{\Delta}$ is injective, so $\norm{\widetilde{P}}_{c, 1} = \norm{P}_{c, 1} \leq 1$, and hence $\widetilde{P} \in \PP_\bfL$.
Let $a$, $\theta$, $\phi$ be the parameters of Theorem~\ref{thm:mqsp single qubit} for $\widetilde{P}$.
Choose a common eigenbasis $\set{\ket{\chi_k}}_{k=0}^{d-1}$ of $S$, with $U_j\ket{\chi_k} = e^{ix_{j, k}}\ket{\chi_k}$.
All gates other than the oracles act trivially on $S$, and each oracle preserves these eigenspaces, so $V_\sigma(\bfU; a, \theta, \phi)$ is block diagonal in this basis.
By Equation~\eqref{eqn:oracle phase}, its layer $D_l^\sigma(\bfU)$ acts on the block of $\ket{\chi_k}$ as $\left(\prod_{j:\, l \leq L_j} e^{i(2\sigma_{j, l} - 1) x_{j, k}/2}\right) D_l(\bfx_{:, k})$, where the product includes $j = N$ and $D_l(\bfx)$ is the scalar query layer in Equation~\eqref{eqn:scalar query layer}.
Collecting the phases over the layers and using $\sum_{l=1}^{L_j} (2\sigma_{j, l} - 1) = 2L_j^+ - L_j = \Delta_j$ gives
\begin{equation}\label{eqn:asymmetric spectral lift}
    V_\sigma(\bfU; a, \theta, \phi) = \sum_{k=0}^{d-1} e^{i\boldsymbol{\Delta} \cdot \bfx_{:, k}/2}\, W(\bfx_{:, k}; a, \theta, \phi) \ox \ketbra{\chi_k}{\chi_k}
.\end{equation}
Theorem~\ref{thm:mqsp single qubit} and Equation~\eqref{eqn:asymmetric centering} then give
\begin{equation}
    \left(\bra{0}_{AB} \ox I_S\right) V_\sigma(\bfU; a, \theta, \phi) \left(\ket{0}_{AB} \ox I_S\right)
    = \sum_{k=0}^{d-1} e^{i\boldsymbol{\Delta} \cdot \bfx_{:, k}/2}\, \widetilde{P}(\bfx_{:, k}) \ketbra{\chi_k}{\chi_k}
    = \sum_{k=0}^{d-1} P(\bfx_{:, k}) \ketbra{\chi_k}{\chi_k} = P(\bfU)
,\end{equation}
where the last equality is Equation~\eqref{eqn:unitary transformation}.
The query counts are those of the schedule, and the rotation gates are those of Theorem~\ref{thm:mqsp single qubit}.
\end{proof}

Theorem~\ref{thm:mqsp asymmetric} processes all $d$ eigenphases at a cost independent of $d$, in the same way as QPP does for one unitary, and it reduces to QPP~\cite{wang2023quantum} for $N = 1$.
The centring in Equation~\eqref{eqn:asymmetric centering} is used only to synthesize the angles; the imbalance of the schedule supplies the inverse phase, so the simulated operator contains only integer powers of the unitaries and, unlike the scalar circuit $W$ at odd $L_j$, does not depend on the eigenphase representatives.
For this fixed layered construction, the placement of the choices between $C_0(U_j)$ and $C_1(U_j)$ does not matter: any two schedules with the same counts implement the same unitary $V_\sigma$.
For example, $P(U) = U^7$ is simulated with seven queries to $C_1(U)$ and none to $C_0(U)$, whereas Corollary~\ref{thm:mqsp multi qubit} would require seven queries to each oracle, and the bivariate polynomial $P(U_1, U_2) = \tfrac{1}{2} U_1^3 U_2^{-1} + \tfrac{i}{2} U_1^{-2} U_2^4$ is simulated with $\bfL^- = (2, 1)$ and $\bfL^+ = (3, 4)$, that is, with five queries to each unitary.
As in Section~\ref{sec:scalar}, the normalization can be relaxed to $\sum_\mu \norm{\widetilde{P}_\mu}_\infty \leq 1$ for the branches of the centred polynomial, which for $N = 1$ is the condition $\norm{P}_\infty \leq 1$ of QPP.

The balanced even-degree case has a convenient alternating realization. With the same gates $\prep$, $G^{(l)}$ and $E_{j, l}$ as in Section~\ref{sec:scalar}, define the query layer
\begin{equation}\label{eqn:unitary query layer}
    D_l(\bfU) = \left( \prod_{j:\, l \leq L_j} (E_{j, l})_{AB'}^\dagger\, C_{l \bmod 2}(U_j)_{B_j S}\, (E_{j, l})_{AB'} \right) C_{l \bmod 2}(U_N)_{B_N S}
,\end{equation}
so that the layers with odd $l$ query $C_1(U_j)$ and those with even $l$ query $C_0(U_j)$, and the circuit
\begin{equation}\label{eqn:unitary circuit}
    V(\bfU; a, \theta, \phi) = \prep^\dagger\, G^{(0)} \left[\prod_{l=1}^{L_N} D_l(\bfU)\, G^{(l)}\right] \prep
,\end{equation}
which is drawn in Figure~\ref{fig:mqsp multi qubit}.

\begin{figure}[t]
\centering
\resizebox{\linewidth}{!}{
\includegraphics{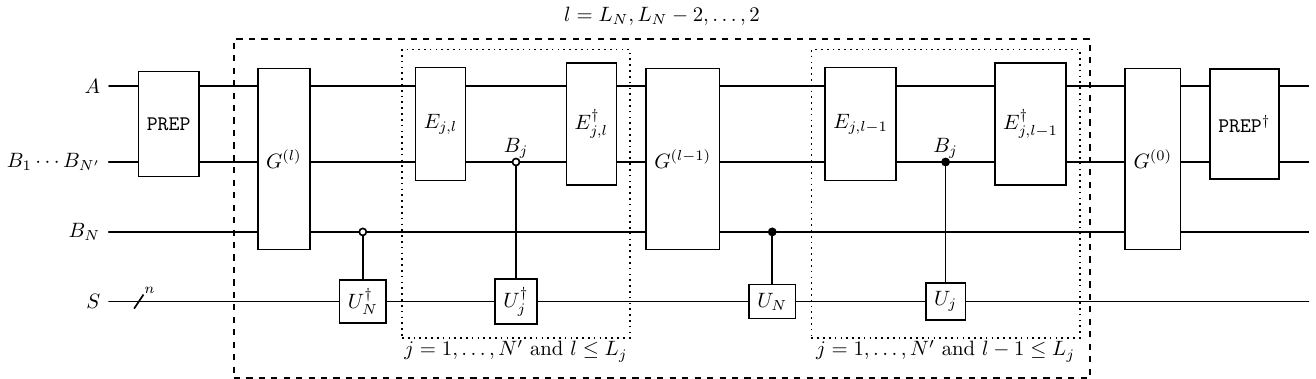}
}
\caption{The circuit $V(\bfU; a, \theta, \phi)$ in Equation~\eqref{eqn:unitary circuit}, where the parameters $a, \theta, \phi$ of $\prep$ and $G^{(l)}$ are the same as in Figure~\ref{fig:mqsp single qubit} and are omitted. Here all entries of $\bfL$ are even, $N' = N - 1$, and $L_N$ is the largest entry of $\bfL$; the layers with even $l$ query $C_0(U_j)$ and those with odd $l$ query $C_1(U_j)$.}
\label{fig:mqsp multi qubit}
\end{figure}

\begin{corollary}\label{thm:mqsp multi qubit}
Let $\bfL \in \NN^N$ with $\bfL \bmod 2 = 0^N$, $P \in \PP_{\bfL, 0}$ and $\ell = \prod_{j=1}^{N-1}(L_j + 1)$.
There exist $a \in \RR^{\ell}$ and $\theta, \phi \in \RR^{(L_N + 1) \times \ell}$ such that the circuit $V(\bfU; a, \theta, \phi)$ in Equation~\eqref{eqn:unitary circuit} satisfies
\begin{equation}
    V(\bfU; a, \theta, \phi) = \begin{bmatrix}
        P(\bfU) & \ldots \\
        \ldots & \ldots
    \end{bmatrix}
\end{equation}
for every tuple $\bfU$ of mutually commuting unitaries on $S$.
Moreover, $V(\bfU; a, \theta, \phi)$ uses exactly $L_j/2$ queries to $C_0(U_j)$ and $L_j/2$ queries to $C_1(U_j)$ for each $j$, and at most $2(L_N + 2)\ell - 2$ single-parameter rotation gates whose parameters depend on $P$.
\end{corollary}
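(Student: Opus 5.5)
The plan is to obtain the corollary as the special case $\bfL^- = \bfL^+ = \bfL/2$ of Theorem~\ref{thm:mqsp asymmetric}. The only preliminary work is translating $\PP_{\bfL, 0}$ into the $\laurentint$ language used there.

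\textbf{Translating the polynomial class.} Since $\bfL$ has even entries and $P$ has parity $0^N$, the exponent of $e^{ix_j/2}$ in every monomial $e^{i(\bfn\cdot\bfx)/2}$ of $P$ is even. Hence $P$ is the same function as the Laurent polynomial $\sum_\bfn C_\bfn e^{i(\bfn/2)\cdot\bfx} \in \laurentint$. This polynomial has integer exponents $n_j/2 \in [-L_j/2, L_j/2]$, so $\deg^{\pm}(P) \preceq \bfL/2$. The coefficient $1$-norm is unchanged, so it is still at most $1$. The operator $P(\bfU)$ in the corollary is, by the remark after Equation~\eqref{eqn:unitary transformation}, exactly the operator defined through this $\laurentint$ representative.

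\textbf{Applying Theorem~\ref{thm:mqsp asymmetric}.} Set $\bfL^- = \bfL^+ = \bfL/2$. Then $\bfL^- + \bfL^+ = \bfL$, so the value of $\ell$ and the ordering with $L_N$ maximal are the same as in the statement. In addition $\boldsymbol{\Delta} = 0$, so the centred polynomial $\widetilde P$ equals $P$ itself, and the angles are those of Theorem~\ref{thm:mqsp single qubit} for $P$. Next I take the schedule $\sigma_{j,l} = l \bmod 2$. For each $j$ this schedule sets $\sigma_{j,l}=1$ for exactly $L_j/2$ values of $l \in \{1,\dots,L_j\}$, because $L_j$ is even and the odd indices make up half of that range. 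So $\sigma$ is a valid schedule with $\sum_l \sigma_{j,l} = L_j^+$. Comparing Equation~\eqref{eqn:asymmetric query layer} with Equation~\eqref{eqn:unitary query layer} shows that $D_l^\sigma(\bfU) = D_l(\bfU)$, and hence $V_\sigma = V$. Theorem~\ref{thm:mqsp asymmetric} then gives the block $P(\bfU)$ for every tuple of commuting unitaries. It also gives the query counts, $L_j/2$ to $C_1(U_j)$ and $L_j/2$ to $C_0(U_j)$, and the rotation bound $2(L_N+2)\ell-2$.

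\textbf{Main obstacle.} I expect no real difficulty. The one point needing care is checking that the alternating pattern is a legitimate schedule for every $j$, including variables with $L_j < L_N$, which are queried only in layers $l \le L_j$. This holds because each $L_j$ is even, so the truncated alternation over $l \le L_j$ is still balanced.
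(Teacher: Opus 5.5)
Your proposal is correct and matches the paper's proof, which obtains the corollary by taking $\bfL^- = \bfL^+ = \bfL/2$ and the alternating schedule $\sigma_{j,l} = l \bmod 2$ in Theorem~\ref{thm:mqsp asymmetric}. Your extra checks only make explicit what the paper leaves implicit: that $\PP_{\bfL,0}$ translates into $\laurentint$ with $\deg^{\pm}(P) \preceq \bfL/2$, that $\boldsymbol{\Delta} = 0$, and that the truncated alternation is balanced for each even $L_j$.
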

Taking $\bfL^- = \bfL^+ = \bfL/2$ and $\sigma_{j,l} = l \bmod 2$ in Theorem~\ref{thm:mqsp asymmetric} supplies the displayed parameters and establishes the corollary.

Corollary~\ref{thm:mqsp multi qubit} gives the balanced circuit used in the learning model below and retains a query cost independent of the number $d$ of common eigenphases.
The restriction to even degrees comes from this balanced schedule: for odd $L_j$ the uncancelled phase $e^{\pm ix_{j, k}/2}$ depends on the representative of the eigenphase, in agreement with the fact that $P(\bfU)$ is undefined for parities other than $0^N$.

The query counts of Theorem~\ref{thm:mqsp asymmetric} are optimal for each oracle separately when $\bfL^{\pm} = \deg^{\pm}(P)$.
The argument is the directional version of Proposition~\ref{prop:scalar lower bound}: a query to $C_1(U_j)$ can only raise the exponent of $U_j$, and a query to $C_0(U_j)$ can only lower it.

\begin{proposition}\label{prop:directional queries}
Let $P \in \laurentint$ be nonzero.
Suppose that a circuit $V(\bfU)$ accesses each $U_j$ only through $q_j^+$ queries to $C_1(U_j)$ and $q_j^-$ queries to $C_0(U_j)$, with all other gates independent of $\bfU$, and that $V(\bfU)$ simulates $P(\bfU)$ for every tuple $\bfU$ of mutually commuting unitaries on $S$.
Then $q_j^+ \geq \deg_j^{+}(P)$ and $q_j^- \geq \deg_j^{-}(P)$ for every $j$.
\end{proposition}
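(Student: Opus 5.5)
The plan is to reduce to a scalar bookkeeping argument, as in Proposition~\ref{prop:scalar lower bound}, by restricting to a convenient family of commuting unitaries.

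Since $V(\bfU)$ simulates $P(\bfU)$ for every commuting tuple, we may choose diagonal unitaries. The simplest choice is $S$ of dimension $d$ and $U_j = e^{ix_j} I_S$ for arbitrary $\bfx \in \RR^N$; then $P(\bfU) = P(\bfx) I_S$. For this tuple the oracles are $C_1(U_j) = \ketbra{0}{0} \ox I_S + e^{ix_j}\ketbra{1}{1} \ox I_S$ and $C_0(U_j) = e^{-ix_j}\ketbra{0}{0} \ox I_S + \ketbra{1}{1} \ox I_S$. Every nonzero entry of $C_1(U_j)$ is therefore $1$ or $e^{ix_j}$, and every nonzero entry of $C_0(U_j)$ is $1$ or $e^{-ix_j}$. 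All other gates have constant entries.

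Next, write $V = W_M \cdots W_1$ and argue by induction on $m$, as in the proof of Proposition~\ref{prop:scalar lower bound}. The claim is that every entry of $W_m \cdots W_1$ is a Laurent polynomial in $e^{ix_1}, \ldots, e^{ix_N}$ in which, for each monomial, the exponent $n_j$ of $e^{ix_j}$ satisfies
\[
-(\text{number of } C_0(U_j) \text{ among } W_1, \ldots, W_m) \le n_j \le (\text{number of } C_1(U_j) \text{ among } W_1, \ldots, W_m).
\]
The inductive step holds because multiplying by a gate forms sums of products of entries. A $C_1(U_j)$ entry raises $n_j$ by $0$ or $1$, a $C_0(U_j)$ entry lowers it by $0$ or $1$, and constant gates leave it unchanged.

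At $m = M$, a diagonal entry of the selected block equals $P(\bfx)$ for all $\bfx$. Uniqueness of Fourier coefficients, by orthogonality of $e^{i\bfn\cdot\bfx}$ on $[0,2\pi]^N$, then shows that every $\bfn$ with $C_\bfn \ne 0$ satisfies $-q_j^- \le n_j \le q_j^+$. Taking the maximum and minimum over the support gives $q_j^+ \ge \deg_j^+(P)$ and $q_j^- \ge \deg_j^-(P)$; the $\max\{0,\cdot\}$ in the definitions is harmless since $q_j^\pm \ge 0$.

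The main obstacle is making sure the reduction to scalar unitaries $U_j = e^{ix_j}I$ is legitimate. This is immediate because the hypothesis quantifies over all commuting tuples on $S$, and these tuples commute. The other point is that the "$0$ or $\pm1$" shift, rather than the $\pm 1$ shift of the scalar model, is exactly what yields one-sided bounds with no parity constraint. There is also no need to worry about half-angle representatives, since all exponents here are integers.
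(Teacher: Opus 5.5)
Your proposal is correct and follows essentially the same route as the paper: you specialize to the scalar commuting tuple $U_j = e^{ix_j} I_S$, use induction over the gates to show every exponent of $e^{ix_j}$ stays in $[-q_j^-, q_j^+]$, and conclude by uniqueness of Fourier coefficients on the torus. Your remark that the one-sided ($0$ or $\pm 1$) shifts give directional bounds without a parity constraint matches the paper's reasoning.
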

\begin{proof}
Specialize to $U_j = z_j I_S$ with $\abs{z_j} = 1$, which are mutually commuting unitaries for every $\bfz = (z_1, \ldots, z_N)$ on the torus.
Then $C_1(U_j) = \operatorname{diag}(1, z_j) \ox I_S$ and $C_0(U_j) = \operatorname{diag}(z_j^{-1}, 1) \ox I_S$, and $P(\bfU) = P(\bfz) I_S$ with $P(\bfz) = \sum_\bfn C_\bfn \prod_{j=1}^N z_j^{n_j}$.
As in the proof of Proposition~\ref{prop:scalar lower bound}, induction over the gates shows that every entry of $V(\bfU)$ is a Laurent polynomial in $\bfz$ in which, for every monomial, the exponent of $z_j$ lies in $[-q_j^-, q_j^+]$, since each query to $C_1(U_j)$ raises this exponent by at most one and each query to $C_0(U_j)$ lowers it by at most one.
The diagonal entries of the simulated block equal $P(\bfz)$ for all $\bfz$ on the torus, and the coefficients of a Laurent polynomial on the torus are determined by its values, so every $\bfn$ with $C_\bfn \neq 0$ satisfies $-q_j^- \leq n_j \leq q_j^+$, which is the claim.
\end{proof}

\begin{corollary}\label{coro:optimal asymmetric queries}
Let $P \in \laurentint$ be nonzero with $\norm{P}_{c, 1} \leq 1$.
The minimum numbers of queries to $C_1(U_j)$ and to $C_0(U_j)$ in a circuit that simulates $P(\bfU)$ for all tuples of mutually commuting unitaries are $\deg_j^{+}(P)$ and $\deg_j^{-}(P)$, respectively, and Theorem~\ref{thm:mqsp asymmetric} with $\bfL^{\pm} = \deg^{\pm}(P)$ attains both minima simultaneously for every $j$.
In particular, the minimum total number of queries to $U_j$ and $U_j^\dagger$ is $\deg_j^{+}(P) + \deg_j^{-}(P)$.
\end{corollary}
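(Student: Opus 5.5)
The corollary combines the lower bound of Proposition~\ref{prop:directional queries} with the upper bound of Theorem~\ref{thm:mqsp asymmetric}, so the plan is to verify that the two meet. For the lower bound, take any circuit that simulates $P(\bfU)$ for all tuples of mutually commuting unitaries. Since $P \neq 0$, Proposition~\ref{prop:directional queries} applies directly and shows that the numbers $q_j^+$ and $q_j^-$ of queries to $C_1(U_j)$ and $C_0(U_j)$ satisfy $q_j^+ \geq \deg_j^{+}(P)$ and $q_j^- \geq \deg_j^{-}(P)$ for every $j$. No structure of the circuit is assumed beyond the query model, so these inequalities hold for every admissible circuit.

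For the matching construction, set $\bfL^{\pm} = \deg^{\pm}(P)$ and check the hypotheses of Theorem~\ref{thm:mqsp asymmetric}.
\begin{itemize}
\item $\deg^{\pm}(P) \preceq \bfL^{\pm}$ holds trivially.
\item $\norm{P}_{c, 1} \leq 1$ is assumed.
\item The entries of $\bfL^{\pm}$ are nonnegative integers by the definition in Equation~\eqref{eqn:directional degree}.
\item A query schedule with $\sum_l \sigma_{j, l} = L_j^+$ exists, since $0 \leq L_j^+ \leq L_j = L_j^- + L_j^+$. For example, take $\sigma_{j, l} = 1$ for $l \leq L_j^+$ and $\sigma_{j, l} = 0$ otherwise.
\item After relabelling the variables so that $L_N = \max_j L_j$, the theorem applies.
\end{itemize}
It yields a circuit with exactly $\deg_j^{+}(P)$ queries to $C_1(U_j)$ and $\deg_j^{-}(P)$ queries to $C_0(U_j)$. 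One point to note is the degenerate case $L_j = 0$ for some $j < N$. The corresponding frequency grid factor is then $\set{0}$, the variable receives no queries, and the construction is unaffected.

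Because a single circuit attains both bounds for every $j$ at once, the two minima are achieved simultaneously and are exactly $\deg_j^{\pm}(P)$. For the total count, any circuit uses $q_j^+ + q_j^- \geq \deg_j^{+}(P) + \deg_j^{-}(P)$ queries to $U_j$, and the construction attains this sum. I expect no real obstacle here. The only care needed is in the relabelling step and in confirming that the schedule condition of Theorem~\ref{thm:mqsp asymmetric} is satisfiable for these choices of $\bfL^{\pm}$.
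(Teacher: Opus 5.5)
Your proposal is correct and follows the paper's proof: Proposition~\ref{prop:directional queries} gives the lower bounds, and Theorem~\ref{thm:mqsp asymmetric} with $\bfL^{\pm} = \deg^{\pm}(P)$ supplies a single circuit attaining all of them. Your additional checks (existence of a valid schedule, the relabelling so that $L_N = \max_j L_j$, and the case $L_j = 0$) are details the paper leaves implicit.
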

\begin{proof}
Proposition~\ref{prop:directional queries} gives the lower bounds, and $P$ satisfies the hypotheses of Theorem~\ref{thm:mqsp asymmetric} with $\bfL^{\pm} = \deg^{\pm}(P)$ by the definition of the directional degrees in Equation~\eqref{eqn:directional degree}.
\end{proof}

In particular, Corollary~\ref{thm:mqsp multi qubit} is optimal for every $P \in \PP_{\bfL, 0}$ whose exponents attain both $L_j$ and $-L_j$ in each variable.
As in the scalar case, the bounds concern exact simulation uniformly over all commuting tuples.

\subsection{Comparison with previous work}\label{sec:comparison}

Multivariate polynomial transformations of scalar signals have been pursued along two lines: single-qubit sequences that interleave the signal rotations of several variables, and multi-qubit circuits that realize every monomial separately and combine the monomials.
These works state their query counts for different polynomial classes and signal conventions, so a comparison requires a common setting.
For a fair comparison, we consider the regime where all degrees are equal, $L_j = L$ for every $j$, and the simulation is exact.
Table~\ref{tab:comparison scalar} compares Theorem~\ref{thm:mqsp single qubit} with the constructions that explicitly claim multivariate polynomial transformations of scalar signals~\cite{rossi2022multivariable, yu2024nonasymptotic, tang2026saqnn}.
The query models of these constructions are not interchangeable: a controlled $R_z(x_j)$ cannot be composed from a fixed number of plain $R_z(x_j)$ gates, so the query complexities count the native primitive of each method.

\begin{table}[t]
\centering
\caption{
Comparison of exact multivariate polynomial transformations of $N$ scalar inputs, with equal degrees $L_j = L$.
``Simulation range'' gives the polynomial class, and ``Encoding position'' indicates how the output is represented in the circuit or its measurement.
``Query model'' specifies the signal gate available to the method; ``Query complexity'' counts calls to that gate across all inputs.
``Rotation gates'' counts single-parameter rotations whose angles depend on the target polynomial, including controlled rotations.
``Total dimension'' is the dimension of the full circuit Hilbert space, up to rounding to whole qubits.
Query counts use the model listed in each row.
Here $\CC[\bfx]$ denotes polynomials in real variables $x_j \in [-1, 1]$; $\PP_\bfL$ and $\PP_{\bfL, 0}$ are the normalized trigonometric classes in Equations~\eqref{eqn:class scalar} and~\eqref{eqn:class even}; and $\bfn$ is an integer frequency vector.}
\label{tab:comparison scalar}
\setlength{\tabcolsep}{1em}
\resizebox{\linewidth}{!}{
\begin{tabular}{llllccc}
    \toprule
    Method & Simulation range & \Lscell{Encoding \\ position} & Query model & \Cscell{Query \\ complexity} & \Cscell{Rotation \\ gates} & \Cscell{Total \\ dimension} \\
    \midrule
    \addlinespace
    M-QSP~\cite{rossi2022multivariable} & \Lscell{Restricted subset of \\ $\CC[\bfx]$ with degree $\bfL$} & Amplitude & $R_x(2\arccos x_j)$ & $NL$ & $NL + 1$ & $2$ \\
    \addlinespace
    LCU-based~\cite{yu2024nonasymptotic} & $\PP_\bfL$ & Amplitude & \Lscell{Multi-controlled \\ $R_z(x_j)$} & $\BigTheta{NL(L + 1)^N}$ & $\BigTheta{NL(L + 1)^N}$ & $2^N (L + 1)^N$ \\
    \addlinespace
    SAQNN~\cite{tang2026saqnn} & \Lscell{$P \in \PP_{\bfL, 0}$ \\ ($L$ even)} & \Lscell{Amplitude, read \\ out as probability} & \Lscell{Multi-controlled \\ $R_z(-\bfn \cdot \bfx)$} & $(L + 1)^N$ & $\BigTheta{(L + 1)^N}$ & $(L + 1)^N$ \\
    \addlinespace
    \Lscell{Theorem~\ref{thm:mqsp single qubit} \\ (Ours)} & $\PP_\bfL$ & Amplitude & $R_z(x_j)$ & $NL$ & $\leq 2(L + 2)(L + 1)^{N-1} - 2$ & $2(L + 1)^{N-1}$ \\
    \bottomrule
\end{tabular}
}
\end{table}

M-QSP~\cite{rossi2022multivariable} interleaves the signal rotations of all variables on a single qubit and spends one query per unit of degree, but only for a subclass of polynomials whose characterization is open~\cite{mori2023comment, nemeth2023variants, ito2026polynomial}.
The constructions of Ref.~\cite{yu2024nonasymptotic} and Ref.~\cite{tang2026saqnn} implement every polynomial in $\PP_\bfL$ as an amplitude, the latter for even degrees and with the probability $\abs{P(\bfx)}^2$ as its measured output, by preparing one branch per monomial on an index register for every variable and combining the branches by LCU, so every monomial pays for its own signal rotations.
The SAQNN row uses the compressed encoding with one rotation per frequency; its approximation theorem matches the square root of the measured probability, after a classical rescaling, to targets with values in $[0, 1]$.
If a controlled rotation by $n x_j$ is counted as $\abs{n}$ controlled elementary rotations, both use $\tfrac{1}{2} N [L(L + 2) + (L \bmod 2)] (L + 1)^{N-1}$ of them, whereas Theorem~\ref{thm:mqsp single qubit} uses $NL$ plain signal rotations; this compares the native primitives of the two models rather than a common oracle.

The improvement in query complexity over the branch-per-monomial constructions is therefore exponential in $N$, and the query count matches that of M-QSP while the class is all of $\PP_\bfL$.
To the best of our knowledge, Theorem~\ref{thm:mqsp single qubit} is the first construction that attains the lower bound of Proposition~\ref{prop:scalar lower bound} for every polynomial in $\PP_\bfL$ and every number of variables.
The number of rotation gates is at most a factor $(L + 2)/(L + 1)$ above the number $2(L + 1)^N$ of real coefficients of a polynomial in $\PP_\bfL$, which any fixed circuit family with continuously tunable parameters that implements all of $\PP_\bfL$ has to specify~\cite{casas2023multidimensional}, and the register has dimension about $2(L + 1)^{N-1}$, one factor $L + 1$ smaller than the registers of the branch-per-monomial constructions, because the last variable is processed by QSP rather than stored.
Our construction can itself be read as a linear combination of unitaries over the frequency labels of the first $N - 1$ variables.
A Fourier-series LCU~\cite{childs2017quantum, vanapeldoorn2020quantum} that stores the frequencies of all $N$ variables and builds the controlled powers of each signal from binary powers of one controlled query implements the same $\ell_1$-normalized class with $2^{\ceil{\log_2(L_j + 2)}} - 1$ queries to the signal of $x_j$ in that implementation and one index register per variable, and our construction differs from it by handling the last variable with QSP, which removes that register and relaxes the normalization in $x_N$ to the supremum norm in Equation~\eqref{eqn:relaxed class}.
For $N = 2$, the $U(N)$-QSP construction of Ref.~\cite{lu2026quantum} also applies to scalar signals with $L_j$ queries per variable; we discuss it together with the unitary case below.

For commuting operators, the literature is organized by the input model: explicitly given diagonal operators, block encodings, and controlled unitaries.
Table~\ref{tab:comparison unitary} compares Theorem~\ref{thm:mqsp asymmetric} at $\bfL^- = \floor{\bfL/2}$ and $\bfL^+ = \ceil{\bfL/2}$, so that each operator is queried $L$ times in total, with the constructions that explicitly claim multivariate polynomial transformations of commuting operators~\cite{rosenkranz2025quantum, rossi2022multivariable, bornsweil2023quantum}, again in the regime of equal degrees.
These input models are not interchangeable, so each row retains the native access model of its source.

\begin{table}[t]
\centering
\caption{
Comparison of exact multivariate polynomial transformations of $N$ commuting operators on an $n$-qubit system, with equal degrees $L_j = L$.
``Simulation range'' gives the polynomial class and normalization; ``Query model'' specifies access to the input operators; and ``Query complexity'' counts calls in that model across all inputs.
``Ancilla dimension'' is the total circuit dimension divided by the system dimension $2^n$, up to rounding to whole qubits.
Query counts use the model listed in each row.
Here $\laurentint$ denotes Laurent polynomials in $e^{ix_j}$ for unitary eigenphases $x_j$, while $\CC[\bfx]$ denotes polynomials in the Hermitian eigenvalues $x_j \in [-1, 1]$.
The norms $\norm{P}_{c, 1}$ and $\norm{P}_\infty$ are the coefficient $1$-norm and supremum norm, respectively; $\deg^{+}(P)$ and $\deg^{-}(P)$ give the largest positive and negative exponent magnitudes in each variable, with $\preceq$ denoting componentwise comparison.
The symbol `$\backslash$' indicates a quantity not analyzed in the source.}
\label{tab:comparison unitary}
\setlength{\tabcolsep}{1em}
\resizebox{\linewidth}{!}{
\begin{tabular}{lllcc}
    \toprule
    Method & Simulation range & Query model & \Cscell{Query \\ complexity} & \Cscell{Ancilla \\ dimension} \\
    \midrule
    \addlinespace
    LCU-based~\cite{rosenkranz2025quantum} & \Lscell{$P \in \laurentint$ with exponents of $U_j$ in \\ $L + 1$ consecutive integers and $\norm{P}_{c, 1} \leq 1$} & \Lscell{Explicitly given \\ diagonal unitaries} & $\backslash$ & $(L + 1)^N$ \\
    \addlinespace
    M-QSVT~\cite{rossi2022multivariable} & Same as M-QSP & \Lscell{Block encodings with \\ shared singular vectors} & $NL$ & $\backslash$ \\
    \addlinespace
    MQET~\cite{bornsweil2023quantum} & \Lscell{$P \in \CC[\bfx]$ with $\deg(P) \preceq \bfL$ \\ and $\norm{P}_\infty \leq 1$} & \Lscell{Block encodings of \\ commuting Hermitian matrices} & $\BigO{N(L + 1)^N}$ & $\BigO{L^{N-1}}$ \\
    \addlinespace
    \Lscell{Theorem~\ref{thm:mqsp asymmetric} \\ (Ours)} & \Lscell{$P \in \laurentint$ with $\deg^{-}(P) \preceq \floor{\bfL/2}$, \\ $\deg^{+}(P) \preceq \ceil{\bfL/2}$ and $\norm{P}_{c, 1} \leq 1$} & \Lscell{Controlled $U_j$ and \\ controlled $U_j^\dagger$} & $NL$ & $2(L + 1)^{N-1}$ \\
    \bottomrule
\end{tabular}
}
\end{table}

Ref.~\cite{rosenkranz2025quantum} realizes multivariate Fourier series of explicitly given diagonal Hamiltonians on separate registers by a linear combination of their controlled powers; since the operators are not accessed as oracles, it reports gate counts rather than query counts.
Its symmetric exponent range $-d \leq n_j \leq d$ corresponds to $L = 2d$ in Table~\ref{tab:comparison unitary}.
The index register uses $\ceil{\log_2(L + 1)}$ qubits per variable, giving the listed dimension at the exponent range of our row.
M-QSVT~\cite{rossi2022multivariable} lifts M-QSP to block encodings with shared singular vectors and inherits its restricted class, and Ref.~\cite{mori2024efficient} applies the same class to multivariate state preparation with the operators on separate registers.
The multivariable quantum eigenvalue transformation (MQET) of Ref.~\cite{bornsweil2023quantum} implements every polynomial bounded by one on $[-1, 1]^N$ of commuting Hermitian matrices by expanding it in Chebyshev polynomials of the first $N - 1$ matrices, applying QSP in the last matrix to every term and combining the terms by LCU, so the number of queries grows as $(L + 1)^{N-1}$ times the degree, and the output block encoding carries a subnormalization of order $(L + 3)^{N-1}$.

Theorem~\ref{thm:mqsp asymmetric} uses $NL$ queries for every number of variables, an improvement exponential in $N$ over MQET, and the same count as M-QSVT for the whole class of Laurent polynomials with the prescribed exponent ranges; Corollary~\ref{coro:optimal asymmetric queries} shows that no circuit with access to the two controlled oracles can use fewer queries to either oracle.
The inputs are, however, required as controlled unitaries rather than as block encodings of Hermitian matrices, and the normalization is the coefficient norm rather than the supremum norm of MQET.
For two variables, Ref.~\cite{lu2026quantum} writes a bivariate polynomial of two commuting unitaries as $\sum_r p_r(U_1) q_r(U_2)$, realizes the vectors $(p_r)_r$ and $(q_r)_r$ by two $U(N)$-QSP sequences and contracts them, which uses one query per unit of degree in each variable; finding a decomposition with the required norms is left to numerical optimization, and the extension to more than two variables is stated as an open direction.
Theorem~\ref{thm:mqsp asymmetric} attains the same query count for every number of variables with explicit parameters, and reduces to QPP~\cite{wang2023quantum} for $N = 1$.
\section{Gradient variance of the parameterized circuit}\label{sec:trainability}

The circuit of Section~\ref{sec:unitary} defines a learning model when its rotation angles are optimized for a prediction task. Its input is a tuple $\bfU=(U_1,\ldots,U_N)$ of commuting unitaries on a data register $S$ of dimension $d$. Fix the degree vector with $L_N=\max_jL_j$ and a nonempty set $\cI$ of frequency labels from Section~\ref{sec:scalar}. Each label $\mu=(n_1,\ldots,n_{N-1})$ satisfies $|n_j|\leq L_j$ and $n_j\bmod2=L_j\bmod2$.

The selected labels are represented by orthonormal computational-basis states $\ket\mu$ in $AB'$, where $B'=B_1\cdots B_{N-1}$. Their span is the selected label subspace, whose dimension is
\begin{equation}\label{eqn:learning-labeldimension}
 d_{\cI}:=\dim\operatorname{span}\setcond{\ket\mu}{\mu\in\cI}=|\cI|.
\end{equation}
There are $\ell=\prod_{j=1}^{N-1}(L_j+1)$ admissible frequency labels, and the encoding of Section~\ref{sec:scalar} reserves enough space for all of them. Thus $d_{\cI}\leq\ell\leq\dim(AB')$, with $\dim(AB')=|A|\,2^{N-1}$. Selecting every admissible label gives $d_{\cI}=\ell$. Equality $d_{\cI}=\dim(AB')$ holds exactly when the selected labels form a complete basis of $AB'$. Unselected labels and unused computational-basis states can make the full register larger than the selected label subspace.

The real preparation amplitudes $a_\mu$ are fixed, with $\prep_{AB'}\ket0=\sum_{\mu\in\cI}a_\mu\ket\mu$ and $\sum_\mu|a_\mu|^2=1$. The trainable parameters are $\theta_{l,\mu},\phi_{l,\mu}$ in $G^{(l)}$, for $0\leq l\leq L_N$ and $\mu\in\cI$. The label encoding and query schedule remain fixed, and the same parameters are used for all inputs.

Choose a fixed density matrix $\rho_S$ and initialize the ancillas in $\ket0_{AB}$, where $B=B'B_N$. The prediction is the exact expectation of $Z$ on $B_N$,
\begin{equation}\label{eqn:bp-readout}
 f_Z(\bfU)=\trace{
 (I_{AB'}\ox Z_{B_N}\ox I_S)
 V(\ketbra{0}{0}_{AB}\ox\rho_S)V^\dagger}.
\end{equation}
Here $V$ is the circuit of Section~\ref{sec:unitary}, with its dependence on $\bfU,a,\theta,\phi$ understood. The other registers are traced out, so $f_Z(\bfU)\in[-1,1]$ is an unconditional local readout. Training adjusts the angles to fit these predictions to targets.

The common eigenbasis of the inputs reduces the prediction and its derivatives to single-qubit QSP sequences. Write $U_j\ket{\chi_k}=e^{ix_{j,k}}\ket{\chi_k}$ and $\pi_k=\bra{\chi_k}\rho_S\ket{\chi_k}$. For branch $\mu$, define
\begin{equation}\label{eqn:bp-branch}
 Q_\mu(x)=R_y(\theta_{0,\mu})R_z(\phi_{0,\mu})
 \prod_{l=1}^{L_N}\bigl[R_z(x)R_y(\theta_{l,\mu})R_z(\phi_{l,\mu})\bigr],
\end{equation}
and let $r_\mu(x)=\bra0Q_\mu(x)^\dagger ZQ_\mu(x)\ket0$. On a data eigenspace, each controlled query reduces to $R_z(x)$ up to a scalar phase. Proposition~\ref{prop:bp-reduction} gives
\begin{align}
 f_Z(\bfU)&=\sum_{\mu\in\cI}|a_\mu|^2\sum_k\pi_k r_\mu(x_{N,k}),\label{eqn:learning-model}\\
 \partial_{\alpha_{l,\mu}}f_Z
 &=|a_\mu|^2\sum_k\pi_k\partial_{\alpha_{l,\mu}}r_\mu(x_{N,k}),\label{eqn:learning-gradient}\\
 \partial_{\alpha_{l,\mu}}r_\mu(x)
 &=2\operatorname{Re}\bra0Q_\mu(x)^\dagger Z
       \bigl(\partial_{\alpha_{l,\mu}}Q_\mu(x)\bigr)\ket0.\label{eqn:learning-branchgradient}
\end{align}
Here $\alpha_{l,\mu}$ denotes one angle, with $\alpha\in\{\theta,\phi\}$, and the derivative holds the input and all other parameters fixed. In Equation~\eqref{eqn:bp-branch}, $\partial_{\alpha_{l,\mu}}Q_\mu$ replaces the selected gate by its derivative,
\begin{equation}\label{eqn:learning-gatederivatives}
 \partial_\theta R_y(\theta)=-\frac{i}{2}YR_y(\theta),
 \qquad
 \partial_\phi R_z(\phi)=-\frac{i}{2}ZR_z(\phi).
\end{equation}
The rightmost $R_z(\phi_{L_N,\mu})$ acts on $\ket0$ through a phase, so $\partial_{\phi_{L_N,\mu}}f_Z=0$. The remaining coordinates have the following variance lower bound.

\begin{theorem}\label{thm:bp-bound}
Consider the learning model above with $L_N\geq2$, a fixed commuting input $\bfU$, and a fixed density matrix $\rho_S$ independent of the angles. Initialize all $\theta_{l,\mu},\phi_{l,\mu}$ independently and uniformly on $[0,2\pi)$. Expectation and variance below average over this full joint distribution, including the differentiated coordinate.
For every $\mu\in\cI$ with $a_\mu\neq0$, and for $\alpha=\theta$, $0\leq l\leq L_N$, or $\alpha=\phi$, $0\leq l<L_N$,
\begin{equation}\label{eqn:bp-mainbound}
 \mathbb E_{\theta,\phi}[\partial_{\alpha_{l,\mu}}f_Z]=0,
 \qquad
 \operatorname{Var}_{\theta,\phi}(\partial_{\alpha_{l,\mu}}f_Z)
 \geq\frac{|a_\mu|^4}{64\sqrt{2L_N}}.
\end{equation}
The lower bound holds for every data dimension, query spectrum and density matrix satisfying these assumptions.
\end{theorem}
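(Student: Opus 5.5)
The plan is to reduce the claim, through Equations~\eqref{eqn:learning-gradient} and~\eqref{eqn:learning-branchgradient}, to a statement about a single branch sequence $Q_\mu$. Write $g(x)=\partial_{\alpha_{l,\mu}}r_\mu(x)$. Then $\partial_{\alpha_{l,\mu}}f_Z=|a_\mu|^2\sum_k\pi_k g(x_{N,k})$, and only the angles of branch $\mu$ enter. The mean vanishes because $g$ is the derivative, in the uniformly distributed coordinate $\alpha_{l,\mu}$, of a $2\pi$-periodic function of that coordinate. Integrating over $\alpha_{l,\mu}$ first, with all other angles fixed, therefore gives zero. The variance is then the second moment
\[
|a_\mu|^4\sum_{k,k'}\pi_k\pi_{k'}K(x_{N,k},x_{N,k'}),\qquad K(x,x')=\mathbb E[g(x)g(x')].
\]
It therefore suffices to show that this quadratic form is at least $1/(64\sqrt{2L_N})$ for every probability vector $\pi$ and every choice of eigenphases.

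\emph{Step 1: shift invariance.} In $Q_\mu(x)$ every query $R_z(x)$ stands directly to the right of $R_z(\phi_{l-1,\mu})$, and this angle is uniform and independent. Shifting $\phi_{l-1,\mu}\mapsto\phi_{l-1,\mu}-x$ in all layers at once preserves the joint distribution. Hence $K(x,x')=K(x-x',0)=:\kappa(x-x')$ depends only on the difference. If the differentiated angle is one of these $\phi$'s, its derivative is unaffected by the shift, so the argument still applies.

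\emph{Step 2: Bochner.} The kernel $\kappa$ is a positive semidefinite, stationary covariance, and it is a trigonometric polynomial in $\Delta=x-x'$. Writing $\kappa(\Delta)=\sum_m c_m e^{im\Delta}$, positive semidefiniteness gives $c_m\geq0$. The quadratic form then equals $\sum_m c_m\abs{\sum_k\pi_k e^{imx_{N,k}}}^2\geq c_0$, using $\sum_k\pi_k=1$. This gives uniformity over $d$, the spectrum and $\rho_S$.

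\emph{Step 3: compute $c_0$.} The coefficient $c_0=\frac{1}{2\pi}\int\kappa$ is the second moment with independent phases $x$ and $x'$, averaged over their difference. To evaluate it, I would pass to the adjoint (Bloch-sphere) representation and form the second-moment transfer operators on $\RR^3\ox\RR^3$:
\[
T_y=\mathbb E_\theta[R_y(\theta)^{\ox2}],\qquad T_z^{(0)}=\text{the charge-zero part of }\mathbb E_\phi[R_z(\phi+x)\ox R_z(\phi+x')].
\]
The average over the relative phase keeps only the sectors in which the $R_z$ charges $\pm1$ of the two copies cancel. The $\theta$-averages mix the $Z$-components with the $X/Y$ components. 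The derivative insertion $-\tfrac i2 Y$ or $-\tfrac i2 Z$ at layer $l$, together with the $Z$ readout, fixes the boundary vectors. The result is a sum over charge paths of length $L_N$ whose weights are products of factors such as $\tfrac12$ and $\tfrac14$.

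\emph{Main obstacle.} I expect the main difficulty to be showing that this path sum is at least a constant times the probability that a simple $\pm1$ walk of length about $L_N$ returns to the origin, namely $\binom{2m}{m}2^{-2m}\geq 1/\sqrt{4m}$ with $2m\approx L_N$. That bound is the source of the factor $\sqrt{2L_N}$. I would first handle the case of a $\theta$-derivative at an interior layer explicitly, then treat the boundary layers $l=0$ and $l=L_N$ and the $\phi$-derivatives by the same recursion, and collect the constants into $1/64$. The assumption $L_N\geq2$ is needed so that there are at least two mixing layers, which keeps the boundary effects from killing the charge-zero sector.
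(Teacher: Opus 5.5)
Your Steps 1 and 2 are sound. They reach the paper's intermediate identity $\operatorname{Var}_{\theta,\phi}(\partial_{\alpha_{l,\mu}}f_Z)=|a_\mu|^4\sum_h q_{\alpha,l,h}|\tr(\rho_SU_N^h)|^2\geq|a_\mu|^4q_{\alpha,l,0}$ by a cleaner route. Shift invariance in the $\phi_{l-1,\mu}$, together with positive-definiteness of the covariance kernel, gives nonnegative Fourier coefficients directly. The paper instead gets them from the explicit transfer matrices $\mathsf T,\mathsf T_\alpha,\mathsf D(c)$ and the nonnegative coefficients $b_m$ in $K_{\alpha,l}(c)=\sum_m b_mc^m$. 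That reduction is the easy half, though. The content of the theorem is the uniform bound $c_0\geq1/(64\sqrt{2L_N})$, and your Step 3 names this as the ``main obstacle'' without giving an argument for it.

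The missing idea is \emph{parity control}. Carry out your charge bookkeeping and you find that the antisymmetric charge-zero combination $X\otimes Y-Y\otimes X$ is killed by the next $\theta$-average. So each layer spent in the $X\otimes X+Y\otimes Y$ sector contributes exactly a factor $\cos\Delta$, and $\kappa(\Delta)=\sum_{m=0}^{L_N}b_m\cos^m\Delta$. Here $b_m\geq0$ is the weight of sector histories with $m$ such layers. Hence $c_0=\sum_{m\ \mathrm{even}}b_m2^{-m}\binom{m}{m/2}$, which has two consequences:
\begin{itemize}
\item the number of walk steps is itself random;
\item histories with odd $m$ contribute nothing at all.
\end{itemize}
A bound of the form ``constant times a return probability'' therefore needs a uniform lower bound on the even mass $\sum_{m\ \mathrm{even}}b_m=[\kappa(0)+\kappa(\pi)]/2$. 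In other words, you must show that the cross moment at antipodal eigenphases cannot nearly cancel the variance.

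This is not automatic. At $L_N=1$ the $\phi_{0,\mu}$ coordinate has $K_{\phi,0}(c)=c/8$, so all its mass sits at $m=1$ and $c_0=0$ (Remark~\ref{rem:bp-shallow}). The phrase ``at least two mixing layers'' must therefore become a quantitative parity-mixing estimate, uniform in $L_N$ and in the insertion layer $l$, boundary coordinates included. The paper supplies this in three pieces:
\begin{itemize}
\item $K_{\alpha,l}(1)\geq1/8$, from double stochasticity of $\mathsf T$ and $v_3\leq1/2$ for $v=\mathsf T^{L_N-l}e_z$ when $l<L_N$;
\item the entrywise contraction $|[\mathsf T\mathsf D(-1)]^3|\leq\frac34\mathsf T^3$, applied to three consecutive factors on the longer side of the insertion, which gives $|K_{\alpha,l}(-1)|\leq\frac34K_{\alpha,l}(1)$ for $L_N\geq5$;
\item explicit evaluation for $L_N=2,3,4$.
\end{itemize}
Together these give even mass at least $1/64$. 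Only then do monotonicity of $4^{-r}\binom{2r}{r}$ with $r=\lfloor L_N/2\rfloor$ and the estimate $4^{-r}\binom{2r}{r}\geq1/(2\sqrt r)$ produce the constant $1/(64\sqrt{2L_N})$. Without some such estimate, your plan yields no positive lower bound at all.
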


The proof in Appendix~\ref{sec:bp-covariance} averages the circuit gates directly and retains the cross terms between distinct eigenphases. Proposition~\ref{prop:bp-sharpness} shows that the exponent $L_N^{-1/2}$ is optimal for a bound uniform in the input spectrum and data state. Its numerical constant need not be optimal.

Uniform preparation gives every selected label the same probability, $|a_\mu|^2=1/d_{\cI}$. Choosing equal positive amplitudes realizes it as the pure coherent superposition
\begin{equation}\label{eqn:learning-uniformprep}
 \prep_{AB'}\ket0=\frac1{\sqrt{d_{\cI}}}\sum_{\mu\in\cI}\ket\mu,
 \qquad |a_\mu|^2=\frac1{d_{\cI}}\quad(\mu\in\cI).
\end{equation}
All computational-basis states outside $\cI$ have zero probability. The preparation state remains fixed while the rotation angles are sampled according to Theorem~\ref{thm:bp-bound}. This choice gives the following special case.

\begin{corollary}[Uniform preparation]\label{cor:bp-uniform}
Under the assumptions of Theorem~\ref{thm:bp-bound}, suppose $|a_\mu|^2=1/d_{\cI}$ for every $\mu\in\cI$. Then every parameter coordinate covered by that theorem satisfies
\begin{equation}\label{eqn:bp-uniformbound}
 \operatorname{Var}_{\theta,\phi}(\partial_{\alpha_{l,\mu}}f_Z)
 \geq\frac1{64d_{\cI}^2\sqrt{2L_N}}.
\end{equation}
\end{corollary}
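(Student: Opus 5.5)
The corollary is a direct specialization of Theorem~\ref{thm:bp-bound}, so the plan is simply to substitute the uniform preparation amplitudes into the bound of Equation~\eqref{eqn:bp-mainbound}.

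First, I check that the hypotheses of Theorem~\ref{thm:bp-bound} hold for every $\mu \in \cI$. The assumptions of the corollary include $L_N \geq 2$, a fixed commuting input $\bfU$, a fixed density matrix $\rho_S$, and independent uniform initialization. The remaining hypothesis is $a_\mu \neq 0$. This follows because $|a_\mu|^2 = 1/d_{\cI} > 0$. Here $d_{\cI} = |\cI| \geq 1$ is finite, since $\cI$ is nonempty by the setup of Section~\ref{sec:trainability} and contained in the finite frequency grid. Consequently, the coordinates covered by the theorem are exactly the $\theta_{l,\mu}$ with $0 \leq l \leq L_N$ and the $\phi_{l,\mu}$ with $0 \leq l < L_N$, for every selected label.

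Second, I substitute $|a_\mu|^4 = (|a_\mu|^2)^2 = 1/d_{\cI}^2$ into Equation~\eqref{eqn:bp-mainbound}. This gives
\begin{equation*}
 \operatorname{Var}_{\theta,\phi}(\partial_{\alpha_{l,\mu}}f_Z) \geq \frac{1}{64 d_{\cI}^2 \sqrt{2L_N}},
\end{equation*}
which is Equation~\eqref{eqn:bp-uniformbound}. The uniformity over data dimension, spectrum, and state carries over unchanged from the theorem, because the preparation enters the bound only through $|a_\mu|^2$.

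There is no real obstacle in this argument. The only point needing care is that the bound depends on $a_\mu$ only through its modulus. The corollary constrains only $|a_\mu|^2$, so it covers any phases or signs of the amplitudes and not just the equal positive choice in Equation~\eqref{eqn:learning-uniformprep}.
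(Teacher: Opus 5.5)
Your proposal is correct and matches the paper's proof, which likewise substitutes $|a_\mu|^4 = 1/d_{\cI}^2$ into Equation~\eqref{eqn:bp-mainbound}. Your extra check that $a_\mu \neq 0$ for every $\mu \in \cI$, so that all coordinates of Theorem~\ref{thm:bp-bound} are covered, is a harmless and welcome addition.
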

\begin{proof}
Uniform preparation gives $|a_\mu|^4=1/d_{\cI}^2$. Substitution into Equation~\eqref{eqn:bp-mainbound} proves the claim.
\end{proof}

For an $n$-qubit data register, polynomial bounds on $|a_\mu|^{-2}$ and $L_N+1$ give an inverse-polynomial variance lower bound in $n$. These coordinates therefore avoid a barren plateau defined by exponentially decaying initialization variance in the number of data qubits. Under uniform preparation, polynomial bounds on $d_{\cI}$ and $L_N+1$ suffice.

The readout bound alone does not give a loss-gradient bound, because a loss combines predictions and gradients across samples. To address this distinction, fix a supervised data set $\{(\bfU^{(i)},y_i)\}_{i=1}^m$, where $m\geq1$, each $\bfU^{(i)}$ is a commuting tuple on the same register, and $y_i\in[-1,1]$. Use the same fixed PREP, $\rho_S$ and trainable angles for every sample. The empirical squared loss is
\begin{equation}\label{eqn:learning-loss}
 \mathcal L(\theta,\phi)=\frac1{2m}\sum_{i=1}^m
       \bigl[f_Z(\bfU^{(i)})-y_i\bigr]^2.
\end{equation}
Different input tuples need not have a common eigenbasis. The following result retains their gradient correlations under the shared initialization.

\begin{theorem}[Supervised squared loss]\label{thm:bp-loss}
Under the assumptions of Theorem~\ref{thm:bp-bound}, fix the data set and loss in Equation~\eqref{eqn:learning-loss}, independently of the initialization. For every coordinate covered by that theorem,
\begin{align}
 \mathbb E_{\theta,\phi}[\partial_{\alpha_{l,\mu}}\mathcal L]&=0,\label{eqn:bp-lossmean}\\
 \operatorname{Var}_{\theta,\phi}(\partial_{\alpha_{l,\mu}}\mathcal L)
 &\geq\frac{|a_\mu|^4}{64\sqrt{2(L_N+1)}}
 \left[\left(\frac1m\sum_i y_i\right)^2
       +2\left|\frac1m\sum_i y_i\tr\bigl(\rho_SU_N^{(i)}\bigr)\right|^2\right].
       \label{eqn:bp-lossbound}
\end{align}
\end{theorem}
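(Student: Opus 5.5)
The plan is to split the loss gradient into a part that is even and a part that is odd under a measure-preserving symmetry of the initialization, discard the even part, and reduce the odd part to a positive-definite kernel in the eigenphases of $U_N$ whose two lowest Fourier coefficients give the two terms of Equation~\eqref{eqn:bp-lossbound}.

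\textbf{Step 1: decomposition.} Write $f_i=f_Z(\bfU^{(i)})$ and differentiate Equation~\eqref{eqn:learning-loss}:
\begin{equation*}
 \partial_{\alpha_{l,\mu}}\mathcal L=A-B,\qquad
 A=\frac1m\sum_i f_i\,\partial_{\alpha_{l,\mu}}f_i,\qquad
 B=\frac1m\sum_i y_i\,\partial_{\alpha_{l,\mu}}f_i .
\end{equation*}

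\textbf{Step 2: symmetry.} Consider the map $\theta_{0,\nu}\mapsto\theta_{0,\nu}+\pi$ applied to every $\nu\in\cI$ simultaneously. It preserves the product of uniform distributions on $[0,2\pi)$, since the angles are effectively taken modulo $2\pi$. Because $R_y(\theta+\pi)=R_y(\theta)R_y(\pi)$ up to a global sign, and $R_y(\pi)^\dagger Z R_y(\pi)=-Z$, every $r_\nu$ changes sign. By Equations~\eqref{eqn:learning-model} and~\eqref{eqn:learning-gradient}, every $f_i$ and every $\partial_{\alpha_{l,\mu}}f_i$ then changes sign. This also holds when the differentiated coordinate is $\theta_{0,\mu}$ itself, because the shift commutes with differentiation.

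Consequently $A$ is invariant and $B$ is odd. In particular $\mathbb E[\partial f_i]=0$, which is the first claim of Theorem~\ref{thm:bp-bound}. I still need $\mathbb E[A]=0$ to get Equation~\eqref{eqn:bp-lossmean}. For this I would use a second symmetry: shifting the differentiated angle together with an adjacent compensating angle so that $\partial_\alpha r_\mu$ flips sign while $r_\mu$ is unchanged. I would first try $\phi\mapsto-\phi$ combined with conjugation by $X$, which sends $Z\to -Z$ on both sides. If that fails, I would fall back on the explicit gate averages from the proof of Theorem~\ref{thm:bp-bound} in Appendix~\ref{sec:bp-covariance}.

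Given $\mathbb E[AB]=0$ from oddness, we get
\begin{equation*}
 \operatorname{Var}(\partial\mathcal L)=\mathbb E[A^2]+\mathbb E[B^2]-(\mathbb E A)^2\geq \mathbb E[B^2],
\end{equation*}
with the last inequality holding once $\mathbb E A=0$.

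\textbf{Step 3: kernel reduction.} Using Equation~\eqref{eqn:learning-gradient} and the fact that $\rho_S$ is the same for all samples,
\begin{equation*}
 \mathbb E[B^2]=\frac{|a_\mu|^4}{m^2}\sum_{i,i'}y_iy_{i'}\sum_{k,k'}\pi^{(i)}_k\pi^{(i')}_{k'}\,K\bigl(x^{(i)}_{N,k},x^{(i')}_{N,k'}\bigr),
 \qquad
 K(x,x')=\mathbb E\bigl[\partial r_\mu(x)\,\partial r_\mu(x')\bigr].
\end{equation*}
Here $\pi^{(i)}_k$ is the weight of $\rho_S$ on the $k$-th common eigenvector of the $i$-th input tuple; the eigenbases may differ between samples, which the formula accommodates. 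The inputs never need a shared eigenbasis, because only the scalar $x_N$ enters through $K$.

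Averaging every Haar-like $R_z(\phi)$ and $R_y(\theta)$ layer kills the non-diagonal Fourier modes. This should show that $K$ depends only on $x-x'$ and has the form
\begin{equation*}
 K(x,x')=\sum_{s}c_s\,e^{is(x-x')},\qquad c_s=c_{-s}\geq 0,\qquad s\in\{0,\pm1,\dots,\pm L_N\},
\end{equation*}
with $K$ positive semidefinite as a covariance. Substituting and keeping only $s=0,\pm1$ gives
\begin{equation*}
 \mathbb E[B^2]\ \geq\ |a_\mu|^4\Bigl[c_0\Bigl(\tfrac1m\sum_i y_i\Bigr)^2+2c_1\Bigl|\tfrac1m\sum_i y_i\tr\bigl(\rho_SU_N^{(i)}\bigr)\Bigr|^2\Bigr],
\end{equation*}
using $\sum_k\pi^{(i)}_k=1$ and $\sum_k\pi^{(i)}_ke^{ix^{(i)}_{N,k}}=\tr\bigl(\rho_SU_N^{(i)}\bigr)$.

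\textbf{Step 4: the main obstacle.} The hard part is proving $c_0,c_1\geq 1/\bigl(64\sqrt{2(L_N+1)}\bigr)$. I would reuse the transfer-matrix calculation behind Theorem~\ref{thm:bp-bound}. The Pauli-averaged action of each layer is a Markov-type chain on the frequency index, and the coefficients $c_s$ are return probabilities of a lazy random walk of length about $L_N$. A central-binomial estimate $\binom{2n}{n}4^{-n}\geq 1/\sqrt{2n+1}$, or a local-CLT bound for $s=\pm1$, then gives the $(L_N+1)^{-1/2}$ factor.

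The constant $64$ should come from the two factors $\tfrac12$ of the gate derivatives in Equation~\eqref{eqn:learning-gatederivatives} and the averages over the endpoint layers. Handling $s=\pm1$ uniformly, including the boundary positions $l=0$ and $l=L_N$, is where I expect the shift from $L_N$ to $L_N+1$ to arise.
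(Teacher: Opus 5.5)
Your Steps 1--3 follow the paper's proof. You use the same split $\partial_{\alpha_{l,\mu}}\mathcal L=A-B$, the same measure-preserving shift $\theta_{0,\nu}\mapsto\theta_{0,\nu}+\pi$ to get $\mathbb E[AB]=0$, and the same spectral expansion of $\mathbb E[B^2]$ truncated to the harmonics $h=0,\pm1$. Your route to $c_s\geq0$ through positive semidefiniteness is a valid alternative to the paper's nonnegative transfer matrices, provided you justify translation invariance. One way is to absorb a common shift of the query angle into $\phi_0,\ldots,\phi_{L_N-1}$. The mean needs no second symmetry. $\mathcal L$ is $C^1$ and $2\pi$-periodic in each angle, because the sign $R(\alpha+2\pi)=-R(\alpha)$ cancels in every $r_\nu$; equivalently, $A=\frac1{2m}\sum_i\partial_{\alpha_{l,\mu}}(f_i^2)$. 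So the period average of $\partial_{\alpha_{l,\mu}}\mathcal L$ vanishes, and Fubini gives Equation~\eqref{eqn:bp-lossmean}. Your $X$-conjugation idea would not work as stated, because $XR_z(x)X=R_z(-x)$ also reverses the data angle. Also, $\operatorname{Var}(A-B)\geq\mathbb E[B^2]$ holds without $\mathbb E A=0$, since $\mathbb E[A^2]\geq(\mathbb E A)^2$.

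The genuine gap is Step 4, which is the part of this theorem that goes beyond Theorem~\ref{thm:bp-bound}. The bound on $c_0=q_{\alpha,l,0}$ is already Equation~\eqref{eqn:bp-zeromodelower}, but the bound on $c_{\pm1}$ is new, and your random-walk heuristic does not supply it. Write $K_{\alpha,l}(\cos t)=\sum_mb_m\cos^mt$. Since $\cos^mt$ contains only harmonics with the parity of $m$, the first harmonic comes only from odd $m$: $q_{\alpha,l,1}=\sum_{m\ \mathrm{odd}}b_m2^{-m}\binom m{(m-1)/2}$. So no return-probability or local-limit estimate helps until you show that the odd part carries constant mass, $\sum_{m\ \mathrm{odd}}b_m=\frac12[K_{\alpha,l}(1)-K_{\alpha,l}(-1)]\geq\frac1{64}$. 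That requires controlling the cancellation between $K_{\alpha,l}(1)$ and $K_{\alpha,l}(-1)$. For $L_N\geq5$ the paper uses the entrywise contraction $|[\mathsf T\mathsf D(-1)]^3|\leq\frac34\mathsf T^3$, which gives $|K_{\alpha,l}(-1)|\leq\frac34K_{\alpha,l}(1)$; with $K_{\alpha,l}(1)\geq\frac18$ this yields $\frac18\cdot\frac18=\frac1{64}$. For $L_N=2,3,4$ it evaluates directly. Next, $2^{-(2r+1)}\binom{2r+1}{r}=4^{-(r+1)}\binom{2r+2}{r+1}$ decreases in $r$, with $r+1\leq\lceil L_N/2\rceil\leq(L_N+1)/2$. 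Together with $4^{-n}\binom{2n}{n}\geq1/(2\sqrt n)$, this gives $q_{\alpha,l,\pm1}\geq1/(64\sqrt{2(L_N+1)})$. So the shift from $L_N$ to $L_N+1$ comes from this parity ceiling, not from the boundary layers. Note also that the estimate you quote, $4^{-n}\binom{2n}{n}\geq1/\sqrt{2n+1}$, is false for every $n\geq1$; at $n=1$ it already reads $\frac12\geq1/\sqrt3$.
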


The proof in Appendix~\ref{sec:bp-supervised} separates two terms in the loss gradient by an exact symmetry of the angle distribution. It does not assume that prediction errors and readout gradients are independent. Along with polynomial depth and inverse-polynomial branch weights, an inverse-polynomial lower bound on the bracket in Equation~\eqref{eqn:bp-lossbound} excludes exponentially small loss-gradient variance in the number of data qubits. A zero mean target is allowed when the targets correlate with the first spectral moment of the inputs. If both displayed quantities vanish, this estimate gives no positive bound, although other contributions to the variance may remain.

For exact gradient descent, Proposition~\ref{prop:bp-descent} proves monotone loss decrease, convergence of the gradient norm to zero, and an explicit expected decrease after the first update under uniform initialization. Convergence to zero training loss follows under the additional Jacobian condition in Corollary~\ref{cor:bp-fitting}. That condition is not implied by the coordinate-variance bounds.

The constant Fourier component supplies the input-independent readout bound, as shown in Equation~\eqref{eqn:bp-constantcomponent}. Proposition~\ref{prop:bp-sharpness} also gives a pure-state input that cancels every nonconstant Fourier component after spectral averaging. Moreover, Equation~\eqref{eqn:learning-model} is independent of $U_1,\ldots,U_{N-1}$ when $U_N$ and $\rho_S$ are fixed. The multivariate amplitude representation established in Section~\ref{sec:unitary} therefore requires a different readout to retain sensitivity to those inputs.
\section{Discussions and outlook}\label{sec:conclusion}

We have constructed quantum circuits that implement multivariate polynomial transformations of scalar signals and of commuting unitaries with the minimum number of queries allowed by the degrees.
The scalar circuit implements every trigonometric polynomial in the coefficient-$\ell_1$ class with prescribed degrees and parity, using exactly $L_j$ queries to the $j$-th signal, by running univariate trigonometric QSP in the last variable coherently over the frequencies of the others, so that all branches share the same queries.
Its parameters follow from univariate QSP phase factors and the polynomial coefficients, without a numerical search over the multivariate structure, and the ancillary register grows only logarithmically with the degrees.
The degree argument of Proposition~\ref{prop:scalar lower bound} shows that no circuit with the same query access can use fewer queries for a polynomial that attains the degrees, so that, for every polynomial in the class and every number of variables, processing several signals costs no more queries per signal than processing a single one.

For commuting unitaries, the same circuit with controlled queries implements Laurent polynomials in the unitaries, and the numbers of queries to controlled-$U_j$ and controlled-$U_j^\dagger$ can be chosen equal to the largest positive and negative exponents of $U_j$ in the polynomial, which Corollary~\ref{coro:optimal asymmetric queries} shows to be optimal for each oracle separately.
For a single variable the two constructions reduce to trigonometric QSP and QPP.
Compared with previous work, the query complexity improves exponentially in the number of variables over the constructions that realize every monomial separately, and it matches that of M-QSP while covering every polynomial in the specified coefficient-$\ell_1$ class.

With fixed preparation, Theorem~\ref{thm:bp-bound} bounds the initialization gradient variance of the local $Z$ readout below by $\abs{a_\mu}^4/[64\sqrt{2L_N}]$ for every active coordinate when $L_N\geq2$.
The angles are initialized independently and uniformly, and the bound holds for every fixed data state, spectrum and dimension.
Proposition~\ref{prop:bp-sharpness} shows that the depth exponent is optimal for a bound uniform over these inputs.
The bound excludes exponentially small initialization variance in the number of data qubits when branch weights are at least inverse polynomial and query depth is at most polynomial.
At fixed equal degrees with uniform preparation over all $(L_N+1)^{N-1}$ labels, the coordinate variances decay exponentially with the number of input unitaries.
The contribution that survives every spectral average comes from the constant Fourier component of the response.

For empirical squared loss with targets in $[-1,1]$, Theorem~\ref{thm:bp-loss} retains sample correlations and bounds the initialization gradient variance through the mean target and the correlation between targets and the first spectral moment of the last input unitary.
Under the branch-weight and depth assumptions above, an inverse-polynomial lower bound on either the squared mean target or the squared modulus of this correlation excludes exponentially small loss-gradient variance.
With a constant step size satisfying Proposition~\ref{prop:bp-descent}, exact gradient descent decreases the loss and its gradient norm tends to zero.
The proposition also bounds the expected decrease after the first update using the initialization variance.
Corollary~\ref{cor:bp-fitting} proves geometric convergence to zero training loss if the prediction Jacobian has full row rank and its smallest singular value stays bounded away from zero throughout the trajectory.
This is an additional sufficient condition, which the initialization variance bounds do not establish.

Several directions follow from this work.
The normalization in Theorem~\ref{thm:mqsp single qubit} is a coefficient norm, whereas univariate QSP only requires a bound on the supremum norm; the condition in Equation~\eqref{eqn:relaxed class} lies between the two, and whether the supremum norm of $P$ alone suffices at the same query count remains open.
The number of parameterized rotation gates grows with the number of frequency branches, and reducing this cost for sparse or low-rank coefficient tensors would preserve the query advantage with fewer gates.
The lower bounds concern exact simulation for all inputs; allowing an approximation error or restricting the inputs to a known subset may reduce the number of queries.
The polynomial amplitudes can be used in multivariate state preparation and in functions of commuting unitaries; extending the construction to noncommuting inputs would broaden its applications to non-Hermitian simulation.
For learning tasks, the local $Z$ readout studied here retains only the dependence on $U_N$ when the data state is fixed, and observables that retain the other inputs, together with an analysis of the resulting loss gradients, are needed to assess multivariate learning with this circuit.

\section*{Acknowledgment}
The authors are listed in alphabetical order.
This work was supported by the National Natural Science Foundation of China (Grant. No.~92576114, 12447107), the Guangdong Provincial Quantum Science Strategic Initiative (Grant No.~GDZX2403008, GDZX2503001), and the Guangdong Provincial Key Lab of Integrated Communication, Sensing and Computation for Ubiquitous Internet of Things (Grant No.~2023B1212010007).

\appendix
\section{Gradient analysis for the local \texorpdfstring{$Z$}{Z} readout}\label{sec:gradientvariance}

We prove Theorems~\ref{thm:bp-bound} and~\ref{thm:bp-loss} for the local $Z$ readout with fixed preparation and independently initialized angles. The spectral reduction identifies the prediction function. A direct calculation with the circuit gates gives its gradient variance, a sharp depth exponent for the uniform lower bound, and a squared-loss gradient bound. We then prove descent and convergence statements with their respective assumptions. All randomness comes from the initialization angles, and the readout is an exact expectation value.

\subsection{The multivariate circuit as a parameterized learning model}\label{sec:bp-model}

We use the multivariate circuit of Section~\ref{sec:unitary} to define a family of prediction functions. A model input is a tuple $\bfU=(U_1,\ldots,U_N)$ of commuting unitaries acting on a data register $S$ of dimension $d$. For each input, these unitaries have a common orthonormal eigenbasis,
\begin{equation}\label{eqn:bp-data}
 U_j=\sum_{k=0}^{d-1}e^{ix_{j,k}}\ketbra{\chi_k}{\chi_k},
 \qquad j=1,\ldots,N.
\end{equation}
The ancillary registers are $A$ and $B=B'B_N$, where $B'=B_1\cdots B_{N-1}$. The register $AB'$ stores branch labels, and $B_N$ carries the single-qubit rotations associated with each branch. Choose the degree vector $\bfL=(L_1,\ldots,L_N)$ with $L_N=\max_jL_j$ and parity $\bfp=\bfL\bmod2$, as in the main construction. The circuit is
\begin{equation}\label{eqn:bp-circuit}
 V(\bfU;a,\theta,\phi)
 =\prep_{AB'}^\dagger G^{(0)}
   \left(\prod_{l=1}^{L_N}D_l(\bfU)G^{(l)}\right)
   \prep_{AB'}.
\end{equation}
Here $G^{(l)}$ is the label-controlled rotation layer defined in Section~\ref{sec:scalar}, and $D_l(\bfU)$ is the query layer defined in Section~\ref{sec:unitary}. These operators act on $AB$ and $ABS$, respectively, while PREP acts on $AB'$. Identities on undisplayed registers are understood.

The branch labels are the frequency indices of the first $N-1$ variables in Section~\ref{sec:scalar}. Fix a nonempty label set
\begin{equation}\label{eqn:bp-labels}
 \cI\subseteq\setcond{\mu=(n_1,\ldots,n_{N-1})\in\ZZ^{N-1}}
 {\abs{n_j}\leq L_j,\ n_j\bmod2=p_j\text{ for }1\leq j<N}.
\end{equation}
There are $\ell=\prod_{j=1}^{N-1}(L_j+1)$ admissible frequency labels. The selected label subspace has dimension $d_{\cI}=|\cI|\leq\ell$, as defined in Equation~\eqref{eqn:learning-labeldimension}. In the polynomial-synthesis construction, $\cI$ consists of the labels with $\norm{P_\mu}_{c,1}>0$ for the prescribed polynomial $P$. For the learning model, the selected branches are fixed before initialization and retained throughout training. We use the ancillary registers and computational-basis encoding of Section~\ref{sec:scalar}, with orthonormal states $\ket\mu$ in $AB'$ representing these labels.

The preparation gate creates a coherent superposition of the selected branches,
\begin{equation}\label{eqn:bp-prep}
 \prep_{AB'}\ket{0}=\sum_{\mu\in\cI}a_\mu\ket\mu,
 \qquad \sum_{\mu\in\cI}|a_\mu|^2=1.
\end{equation}
The amplitudes $a_\mu$ are real, following the manuscript's convention, and are held fixed during training. The optimization variables are the rotation angles $\theta_{l,\mu},\phi_{l,\mu}$ in $G^{(l)}$, for $0\leq l\leq L_N$ and $\mu\in\cI$. The same angles are used for every model input and for every data eigenspace within that input.

The input enters through $D_l(\bfU)$. The degree vector, selected labels, their encoding, the permutations $E_{j,l}$ of Section~\ref{sec:scalar}, and the query schedule are fixed parts of the architecture. The unitaries $\bfU$ vary with the model input and are held constant when differentiating with respect to the rotation angles. This construction allows every parity vector $\bfp=\bfL\bmod2$.

Choose a fixed density matrix $\rho_S\succeq0$ with $\tr\rho_S=1$, independent of the trainable angles. Initialize the circuit in $\ketbra{0}{0}_{AB}\ox\rho_S$ and use the local $Z$ readout in Equation~\eqref{eqn:bp-readout}.
For fixed preparation and data state, each choice of $\theta,\phi$ therefore defines a prediction function $\bfU\mapsto f_Z(\bfU)\in[-1,1]$. Dependence on the angles is implicit in $f_Z$. The registers $AB'$ and $S$ are traced out, so Equation~\eqref{eqn:bp-readout} is an unconditional local expectation.

To study the gradients at initialization, fix an input $\bfU$ and draw all angles $\theta_{l,\mu},\phi_{l,\mu}$, $0\leq l\leq L_N$, $\mu\in\cI$, independently and uniformly from $[0,2\pi)$. Each angle occurs once in its branch and is shared across all data eigenspaces. All expectations and variances below average over this full joint distribution, including the angle with respect to which the derivative is taken. The differentiated parameter index is fixed, and the derivative is evaluated at each sampled configuration. The input, $\rho_S$, PREP, and the circuit architecture are held fixed.

\subsection{Spectral reduction on the local readout}\label{sec:bp-reduction}

The constructions of Theorem~\ref{thm:mqsp single qubit} and Corollary~\ref{thm:mqsp multi qubit} resolve the circuit into branches indexed by the frequency label and the common data eigenvector. We apply this decomposition at arbitrary rotation angles to express the local $Z$ readout as a weighted sum of single-qubit QSP responses. This identifies the function evaluated by the chosen readout and reduces the subsequent gradient-variance calculation to the covariance of one branch at two eigenphases.

For each label $\mu\in\cI$, we use the branch sequence $Q_\mu(x)$ and its response $r_\mu(x)$ defined in Section~\ref{sec:trainability}.
Unitarity gives $r_\mu(x)=2|\bra0Q_\mu(x)\ket0|^2-1$, relating the response directly to the QSP polynomial amplitude. Write $\pi_k=\bra{\chi_k}\rho_S\ket{\chi_k}$ for the spectral probabilities of the data state, so that $\pi_k\geq0$ and $\sum_k\pi_k=1$.

\begin{proposition}\label{prop:bp-reduction}
Given the learning model defined above, the local readout satisfies
\begin{equation}\label{eqn:bp-reduction}
 f_Z(\bfU)=\sum_{\mu\in\cI}|a_\mu|^2\sum_{k=0}^{d-1}\pi_k r_\mu(x_{N,k}).
\end{equation}
Fix a branch $\mu\in\cI$, a layer $0\leq l\leq L_N$, and $\alpha\in\{\theta,\phi\}$. The partial derivative $\partial_{\alpha_{l,\mu}}$ is taken with respect to the single real angle $\theta_{l,\mu}$ when $\alpha=\theta$, or $\phi_{l,\mu}$ when $\alpha=\phi$, with all other parameters and the input held fixed. Then
\begin{equation}\label{eqn:bp-gradient}
 \partial_{\alpha_{l,\mu}} f_Z=|a_\mu|^2\sum_k\pi_k\partial_{\alpha_{l,\mu}} r_\mu(x_{N,k}),
 \qquad \mathbb E_{\theta,\phi}[\partial_{\alpha_{l,\mu}} f_Z]=0.
\end{equation}
\end{proposition}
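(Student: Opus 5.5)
The plan is to work in a common eigenbasis of the input, reduce the circuit on each eigenvector and each label to the single-qubit sequence $Q_\mu$, and then use orthogonality of labels and eigenvectors to remove every cross term in the readout. Fix the input $\bfU$ and a common eigenbasis $\set{\ket{\chi_k}}$. I take the query layers in the fixed (balanced, or more generally any fixed) schedule $\sigma$ and arbitrary angles $\theta,\phi$.

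\textbf{Step 1: block structure.} As in the proof of Theorem~\ref{thm:mqsp asymmetric}, every non-oracle gate acts trivially on $S$, and each oracle preserves $\operatorname{span}\set{\ket{b}\ket{\chi_k}}$. Hence $V$ is block diagonal in $k$. By Equation~\eqref{eqn:oracle phase}, its $k$-th block equals $e^{i\boldsymbol{\Delta}\cdot\bfx_{:,k}/2}\,W(\bfx_{:,k};a,\theta,\phi)$. This step never used the specific angles, so it holds for all $\theta,\phi$. The prefactor is a scalar phase on the whole block, so it cancels in $V(\cdot)V^\dagger$ restricted to the block.

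\textbf{Step 2: action of $W$ on label states.} Inside $W$, each conjugated query $E_{j,l}^\dagger R_z(x_j)_{B_j}E_{j,l}$ with $j<N$ acts only on $AB'$. It is diagonal in the computational basis, because $E_{j,l}$ permutes that basis. It therefore multiplies $\ket\mu$ by a phase $c_{j,l}(\mu,\bfx)$ and commutes with the label-controlled layers $G^{(l)}$. On $\ket\mu\ox\ket0_{B_N}$, the product $G^{(0)}\prod_l D_lG^{(l)}$ thus produces $c_\mu(\bfx)\ket\mu\ox Q_\mu(x_N)\ket0$ with $\abs{c_\mu}=1$. Here I only need to check that the ordering of the gates matches Equation~\eqref{eqn:bp-branch}.

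Applying this to $\prep\ket0=\sum_\mu a_\mu\ket\mu$, the state just before $\prep^\dagger$, on block $k$, is $\sum_\mu a_\mu c_\mu\ket\mu\ox Q_\mu(x_{N,k})\ket0$. The gate $\prep^\dagger$ is a unitary on $AB'$, and the observable is $I_{AB'}\ox Z_{B_N}\ox I_S$, so $\prep^\dagger$ drops out of the trace.

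\textbf{Step 3: removing cross terms.} Tracing over $AB'$ kills the cross terms $\mu\neq\mu'$, since the $\ket\mu$ are orthonormal and the phases $c_\mu$ only rescale diagonal terms. Tracing over $S$ with the block-diagonal $V$ keeps only the diagonal entries $\pi_k$ of $\rho_S$ in the eigenbasis, because $Z\ox I_S$ does not couple different blocks. This yields Equation~\eqref{eqn:bp-reduction}.

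\textbf{Step 4: gradient formula.} The angle $\alpha_{l,\mu}$ enters only $Q_\mu$, so differentiation acts term by term. This gives the first identity in Equation~\eqref{eqn:bp-gradient}. The branch formula in Equation~\eqref{eqn:learning-branchgradient} follows from the product rule applied to $r_\mu=\bra0Q_\mu^\dagger ZQ_\mu\ket0$, since the two resulting terms are complex conjugates of each other.

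\textbf{Step 5: zero mean.} Fix all angles except $\alpha=\alpha_{l,\mu}$. The dependence on $\alpha$ is through a single gate $e^{-i\alpha G/2}$ with $G\in\set{Y,Z}$ and $G^2=I$. Consequently $r_\mu(x)$, and hence $f_Z$, has the form $c_0+c_1\cos\alpha+s_1\sin\alpha$, which is $2\pi$-periodic in $\alpha$. Its derivative therefore integrates to zero over $[0,2\pi)$. Fubini over the remaining independent uniform angles then gives $\mathbb E_{\theta,\phi}[\partial_{\alpha_{l,\mu}}f_Z]=0$.

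The main obstacle I expect is Step 2. One must verify carefully that the conjugated queries are diagonal on label states and commute with $G^{(l)}$. One must also check that the unused states in $\Pi^\perp$ never receive amplitude, which holds because $\prep\ket0$ is supported on the selected labels and every layer maps label states to label states. A secondary point is that the per-block scalar phase $e^{i\boldsymbol{\Delta}\cdot\bfx_{:,k}/2}$ must be applied consistently on both sides of $\rho_S$, so that the off-diagonal coherences of $\rho_S$ are killed by the $S$ trace, not by a phase argument.
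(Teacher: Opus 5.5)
Your proposal is correct and follows essentially the same route as the paper. Both arguments reduce each common eigenspace and each label to a phase times $Q_\mu(x_{N,k})$, let $\prep^\dagger$ drop out against the local observable, use the traces over $AB'$ and $S$ to keep only equal labels and equal spectral indices, and obtain the zero mean from $2\pi$-periodicity in the differentiated angle together with Fubini. The differences are cosmetic: you import the block phase $e^{i\boldsymbol{\Delta}\cdot\bfx_{:,k}/2}$ from the spectral lift in the proof of Theorem~\ref{thm:mqsp asymmetric} (for the alternating schedule it coincides with the paper's $e^{i\bfp\cdot\bfx_{:,k}/2}$), and you justify periodicity through the first-harmonic form $c_0+c_1\cos\alpha+s_1\sin\alpha$ rather than through the sign flip $Q_\mu\mapsto-Q_\mu$ under $\alpha\mapsto\alpha+2\pi$.
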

\begin{proof}
On a common data eigenvector, the query interfaces act on $B_j$ as
\begin{equation}
 C_0(U_j)\big|_k=e^{-ix_{j,k}/2}R_z(x_{j,k}),
 \qquad C_1(U_j)\big|_k=e^{ix_{j,k}/2}R_z(x_{j,k}).
\end{equation}
The query phases for $j$ multiply to $e^{ip_jx_{j,k}/2}$ over its $L_j$ layers. The fixed label permutations of Section~\ref{sec:scalar} contribute $\exp(i\sum_{j<N}\mu_jx_{j,k}/2)$. Consequently, the core between the two preparation gates, restricted to label $\mu$ and data eigenvector $\ket{\chi_k}$, acts on $B_N$ as
\begin{equation}\label{eqn:bp-coreblock}
 \exp\left[\frac{i}{2}\left(\sum_{j<N}\mu_jx_{j,k}
             +\bfp\cdot\bfx_{:,k}\right)\right]Q_\mu(x_{N,k}).
\end{equation}
The observable commutes with PREP, so the final inverse preparation cancels inside the trace. The trace over $AB'$ retains equal label indices, and the trace over $S$ retains equal spectral indices. The scalar phases in Equation~\eqref{eqn:bp-coreblock} cancel in each retained term, proving Equation~\eqref{eqn:bp-reduction}, including for $\rho_S$ that is not diagonal in the common basis.

Only branch $\mu$ depends on $\alpha_{l,\mu}$, which gives the derivative in Equation~\eqref{eqn:bp-gradient}. Increasing the selected angle by $2\pi$ multiplies its rotation gate, and hence $Q_\mu(x)$, by $-1$. This sign cancels in $r_\mu(x)$. For every fixed $x$, with all other angles held fixed, $r_\mu(x)$ is therefore continuously differentiable and $2\pi$-periodic in $\alpha_{l,\mu}$. Hence
\begin{align*}
 \frac{1}{2\pi}\int_0^{2\pi}
       \partial_{\alpha_{l,\mu}}r_\mu(x)\,\mathrm{d}\alpha_{l,\mu}
 =\frac{1}{2\pi}\left(
       \left.r_\mu(x)\right|_{\alpha_{l,\mu}=2\pi}
       -\left.r_\mu(x)\right|_{\alpha_{l,\mu}=0}\right)=0.
\end{align*}
The derivatives are bounded because they are continuous on the compact closure of the angle domain. Independence of the uniform angles and Fubini's theorem therefore give
\begin{align*}
 \mathbb E_{\theta,\phi}[\partial_{\alpha_{l,\mu}}f_Z]
 &=\int_{[0,2\pi)^{2d_{\cI}(L_N+1)}}
       \partial_{\alpha_{l,\mu}}f_Z
       \prod_{\nu\in\cI}\prod_{j=0}^{L_N}
       \frac{\mathrm{d}\theta_{j,\nu}\,\mathrm{d}\phi_{j,\nu}}{(2\pi)^2}\\
 &=|a_\mu|^2\sum_{k=0}^{d-1}\pi_k\,
       \mathbb E_{\theta,\phi}[\partial_{\alpha_{l,\mu}}r_\mu(x_{N,k})]\\
 &=|a_\mu|^2\sum_{k=0}^{d-1}\pi_k\,
       \mathbb E_{\theta,\phi}\left[\frac{1}{2\pi}\int_0^{2\pi}
       \partial_{\alpha_{l,\mu}}r_\mu(x_{N,k})\,
       \mathrm{d}\alpha_{l,\mu}\right]\\
 &=0.
\end{align*}
\end{proof}

For every $\mu\in\cI$, the rightmost gate $R_z(\phi_{L_N,\mu})$ in Equation~\eqref{eqn:bp-branch} acts first on $\ket0$ and contributes only a phase. Therefore $\partial_{\phi_{L_N,\mu}}r_\mu(x)=0$ for all $x$, and hence $\partial_{\phi_{L_N,\mu}}f_Z=0$. Equation~\eqref{eqn:bp-reduction} averages the responses over the preparation probabilities $|a_\mu|^2$ and the spectral probabilities $\pi_k$. For this readout the phases carried by the first $N-1$ inputs cancel, leaving a function of $U_N$ and the fixed data state. Equation~\eqref{eqn:bp-gradient} reduces each model derivative to derivatives of one QSP branch. The same angles determine that branch at every eigenphase, so its gradient variance requires the correlations between different eigenphases, computed next.

\subsection{Gradient variance from the circuit gates}\label{sec:bp-covariance}

Fix a coordinate $\alpha_{l,\mu}$ of the full model. Throughout this calculation, $\mathbb E_{\theta,\phi}$ and $\operatorname{Var}_{\theta,\phi}$ average over all initialization angles, including $\alpha_{l,\mu}$. The input, its eigenphases, $\rho_S$ and PREP are fixed. The raw second moment is
\begin{equation}\label{eqn:bp-secondmoment}
 \mathbb E_{\theta,\phi}\bigl[(\partial_{\alpha_{l,\mu}}f_Z)^2\bigr]
 =\int_{[0,2\pi)^{2(L_N+1)}}
       (\partial_{\alpha_{l,\mu}}f_Z)^2
       \prod_{j=0}^{L_N}
       \frac{\mathrm{d}\theta_{j,\mu}\,\mathrm{d}\phi_{j,\mu}}{(2\pi)^2}.
\end{equation}
Only branch $\mu$ depends on this coordinate, since PREP is fixed and $\partial_{\alpha_{l,\mu}}r_\nu(x)=0$ for $\nu\neq\mu$. Thus cross-branch terms vanish identically in this coordinate's squared gradient, and the other branches' angle integrals equal one. A nonlinear loss can couple the branches through its derivative with respect to $f_Z$. Here we calculate the statistics of $\partial_{\alpha_{l,\mu}}f_Z$ itself. Its variance satisfies
\begin{align}
 \operatorname{Var}_{\theta,\phi}(\partial_{\alpha_{l,\mu}}f_Z)
 &:=\mathbb E_{\theta,\phi}\left[\left(\partial_{\alpha_{l,\mu}}f_Z
       -\mathbb E_{\theta,\phi}[\partial_{\alpha_{l,\mu}}f_Z]\right)^2\right]\nonumber\\
 &=\mathbb E_{\theta,\phi}\bigl[(\partial_{\alpha_{l,\mu}}f_Z)^2\bigr]
       -\bigl(\mathbb E_{\theta,\phi}[\partial_{\alpha_{l,\mu}}f_Z]\bigr)^2
 =\mathbb E_{\theta,\phi}\bigl[(\partial_{\alpha_{l,\mu}}f_Z)^2\bigr],\label{eqn:bp-variance}
\end{align}
where Proposition~\ref{prop:bp-reduction} gives the zero mean. At fixed eigenphases $x,y$, the two branch gradients also have zero mean by the same proposition. Their cross second moment therefore equals their covariance, with randomness only in the shared angles. Squaring Equation~\eqref{eqn:bp-gradient} gives
\begin{equation}\label{eqn:bp-pairvariance}
 \operatorname{Var}_{\theta,\phi}(\partial_{\alpha_{l,\mu}}f_Z)
 =|a_\mu|^4\sum_{k,k'}\pi_k\pi_{k'}
   \mathbb E_{\theta,\phi}\bigl[\partial_{\alpha_{l,\mu}}r_\mu(x_{N,k})
                   \partial_{\alpha_{l,\mu}}r_\mu(x_{N,k'})\bigr].
\end{equation}
The terms with $k\neq k'$ remain because every eigenspace uses the same random parameters. We compute these terms directly from the gates and then bound the variance uniformly over the spectral probabilities.

\begin{proof}[Proof of Theorem~\ref{thm:bp-bound}]
The mean vanishes by Proposition~\ref{prop:bp-reduction}. To compute the second moment, use $\ketbra0{0}=(I+Z)/2$ and $\tr Z=0$ to obtain
\begin{equation}\label{eqn:bp-gategradient}
 \partial_{\alpha_{l,\mu}}r_\mu(x)
 =\frac12\tr\!\left[Z\,\partial_{\alpha_{l,\mu}}
       \bigl(Q_\mu(x)ZQ_\mu(x)^\dagger\bigr)\right].
\end{equation}
The identity $\tr A\,\tr B=\tr(A\ox B)$ then gives
\begin{align}\label{eqn:bp-twocopy}
 &\mathbb E_{\theta,\phi}\bigl[\partial_{\alpha_{l,\mu}}r_\mu(x)
                              \partial_{\alpha_{l,\mu}}r_\mu(y)\bigr]\nonumber\\
 &\quad=\frac14\tr\!\left[(Z\ox Z)\,
    \mathbb E_{\theta,\phi}\left[
    \partial_{\alpha_{l,\mu}}\bigl(Q_\mu(x)ZQ_\mu(x)^\dagger\bigr)
    \ox
    \partial_{\alpha_{l,\mu}}\bigl(Q_\mu(y)ZQ_\mu(y)^\dagger\bigr)
    \right]\right].
\end{align}
This expression contains two copies of the circuit, evaluated at fixed $x,y$ and the same angles. Each copy is differentiated at layer $l$ before taking the joint average.

Consider one layer and suppress its indices on $\theta,\phi$. Direct multiplication of its single-qubit gates gives
\begin{align}
 R_y(\theta)R_z(\phi)X R_z(\phi)^\dagger R_y(\theta)^\dagger
 &=\cos\theta\cos\phi\,X+\sin\phi\,Y-\sin\theta\cos\phi\,Z,\nonumber\\
 R_y(\theta)R_z(\phi)Y R_z(\phi)^\dagger R_y(\theta)^\dagger
 &=-\cos\theta\sin\phi\,X+\cos\phi\,Y+\sin\theta\sin\phi\,Z,\label{eqn:bp-pauligates}\\
 R_y(\theta)R_z(\phi)Z R_z(\phi)^\dagger R_y(\theta)^\dagger
 &=\sin\theta\,X+\cos\theta\,Z.\nonumber
\end{align}
For $\beta=\theta$ or $\phi$, the uniform average over this layer obeys
\begin{align*}
 \frac1{2\pi}\int_0^{2\pi}\sin\beta\,\mathrm{d}\beta
 &=\frac1{2\pi}\int_0^{2\pi}\cos\beta\,\mathrm{d}\beta
 =\frac1{2\pi}\int_0^{2\pi}\sin\beta\cos\beta\,\mathrm{d}\beta=0,\\
 \frac1{2\pi}\int_0^{2\pi}\sin^2\beta\,\mathrm{d}\beta
 &=\frac1{2\pi}\int_0^{2\pi}\cos^2\beta\,\mathrm{d}\beta=\frac12.
\end{align*}
Expand the tensor products of the three images in Equation~\eqref{eqn:bp-pauligates} and apply these integrals. Every mixed Pauli input, such as $X\ox Y$, averages to zero. The surviving images lie in the span of $X\ox X$, $Y\ox Y$ and $Z\ox Z$. In that order, their coefficient columns form the first matrix below. Differentiating each image in both copies before averaging gives the other two matrices,
\begin{equation}\label{eqn:bp-matrices}
 \mathsf T=\begin{pmatrix}1/4&1/4&1/2\\1/2&1/2&0\\1/4&1/4&1/2\end{pmatrix},
 \qquad
 \mathsf T_\theta=\begin{pmatrix}1/4&1/4&1/2\\0&0&0\\1/4&1/4&1/2\end{pmatrix},
 \qquad
 \mathsf T_\phi=\begin{pmatrix}1/4&1/4&0\\1/2&1/2&0\\1/4&1/4&0\end{pmatrix}.
\end{equation}
For example, the image of $Z$ is $\sin\theta X+\cos\theta Z$, whose tensor square averages to $(X\ox X+Z\ox Z)/2$. Its $\theta$ derivative has the same averaged tensor square, and its $\phi$ derivative is zero. These give the third columns in Equation~\eqref{eqn:bp-matrices}. Applying the same integrals to the first two lines of Equation~\eqref{eqn:bp-pauligates} gives the other columns. Mixed inputs also average to zero when both copies are differentiated. Thus these matrices record the coefficients produced by gate multiplication and averaging.

Next consider the pair of query gates $R_z(x)$ and $R_z(y)$. For either query angle $x$,
\begin{align}
 R_z(x)XR_z(x)^\dagger&=\cos x\,X+\sin x\,Y,\nonumber\\
 R_z(x)YR_z(x)^\dagger&=-\sin x\,X+\cos x\,Y,\label{eqn:bp-querypauli}\\
 R_z(x)ZR_z(x)^\dagger&=Z.\nonumber
\end{align}
Suppose the incoming operator is $v_1X\ox X+v_2Y\ox Y+v_3Z\ox Z$. After the query pair, the coefficients of $X\ox X$ and $Y\ox Y$ sum to $(v_1+v_2)\cos(x-y)$. The coefficient of $Z\ox Z$ remains $v_3$. Mixed terms are removed by the next averaged rotation layer. Since each matrix in Equation~\eqref{eqn:bp-matrices} has equal first and second columns, the complete update is
\begin{equation}\label{eqn:bp-pairtwirl}
\begin{gathered}
 v\longmapsto\mathsf T\mathsf D(c)v
 \quad\hbox{or}\quad
 v\longmapsto\mathsf T_\alpha\mathsf D(c)v,\\
 v=(v_1,v_2,v_3)^{\mathsf T},\quad
 c=\cos(x-y),\quad
 \mathsf D(c)=\operatorname{diag}(c,c,1).
\end{gathered}
\end{equation}
The factor $\mathsf D(c)$ describes the queries after the following layer is averaged. It is distinct from the physical query block $D_l(\bfU)$.

Starting from $Z\ox Z$, average the independent layers from right to left in Equation~\eqref{eqn:bp-twocopy}. The coefficient vector initially equals $e_z=(0,0,1)^{\mathsf T}$. At the selected layer, replace $\mathsf T$ by $\mathsf T_\alpha$. The final trace with $Z\ox Z$, including its factor $1/4$, selects the third coefficient. Consequently, define
\begin{equation}\label{eqn:bp-kernel}
 K_{\alpha,l}(c)
 =e_z^{\mathsf T}[\mathsf T\mathsf D(c)]^l
   \mathsf T_\alpha[\mathsf D(c)\mathsf T]^{L_N-l}e_z.
\end{equation}
The direct gate calculation proves, for all $x,y\in\RR$,
\begin{equation}\label{eqn:bp-crosscov}
 \mathbb E_{\theta,\phi}[\partial_{\alpha_{l,\mu}}r_\mu(x)
                         \partial_{\alpha_{l,\mu}}r_\mu(y)]
 =K_{\alpha,l}(\cos(x-y)).
\end{equation}
The two copies share each layer's angles throughout this calculation. In particular,
\begin{equation}\label{eqn:bp-branchvariance}
 \operatorname{Var}_{\theta,\phi}(\partial_{\alpha_{l,\mu}}r_\mu(x))
 =K_{\alpha,l}(1).
\end{equation}
Substitution into Equation~\eqref{eqn:bp-pairvariance} yields the exact second moment and variance of the full readout gradient,
\begin{align}
 \mathbb E_{\theta,\phi}\bigl[(\partial_{\alpha_{l,\mu}}f_Z)^2\bigr]
 &=\operatorname{Var}_{\theta,\phi}(\partial_{\alpha_{l,\mu}}f_Z)\nonumber\\
 &=|a_\mu|^4\sum_{k,k'}\pi_k\pi_{k'}
       K_{\alpha,l}(\cos(x_{N,k}-x_{N,k'})).\label{eqn:bp-phasevariance}
\end{align}

All entries of $\mathsf T$, $\mathsf T_\theta$ and $\mathsf T_\phi$ are nonnegative. There are exactly $L_N$ factors $\mathsf D(c)$ in Equation~\eqref{eqn:bp-kernel}. Hence
\begin{equation}\label{eqn:bp-positivepolynomial}
 K_{\alpha,l}(c)=\sum_{m=0}^{L_N}b_m c^m,\qquad b_m\geq0,
\end{equation}
where $\alpha,l$ and $L_N$ are held fixed. Expanding $\cos^m t=2^{-m}\sum_{j=0}^m\binom mj e^{i(m-2j)t}$ gives
\begin{equation}\label{eqn:bp-fourier}
 K_{\alpha,l}(\cos t)=\sum_{h=-L_N}^{L_N}q_{\alpha,l,h}e^{iht},
 \qquad q_{\alpha,l,h}=q_{\alpha,l,-h}\geq0,
 \qquad \sum_hq_{\alpha,l,h}=K_{\alpha,l}(1),
\end{equation}
with
\begin{equation}\label{eqn:bp-qexplicit}
 q_{\alpha,l,h}
 =\sum_{\substack{|h|\leq m\leq L_N\\m-h\ \mathrm{even}}}
 b_m2^{-m}\binom m{(m-h)/2}.
\end{equation}
Equation~\eqref{eqn:bp-phasevariance} therefore becomes
\begin{align}
 \operatorname{Var}_{\theta,\phi}(\partial_{\alpha_{l,\mu}}f_Z)
 &=|a_\mu|^4\sum_{h=-L_N}^{L_N}q_{\alpha,l,h}
       \left|\sum_k\pi_ke^{ihx_{N,k}}\right|^2\nonumber\\
 &=|a_\mu|^4\sum_{h=-L_N}^{L_N}q_{\alpha,l,h}
       \left|\tr(\rho_SU_N^h)\right|^2
 \geq|a_\mu|^4q_{\alpha,l,0}.\label{eqn:bp-spectralvariance}
\end{align}
The last step uses $\tr\rho_S=1$. This formula retains all cross terms between eigenspaces and isolates a contribution that survives every spectral average.

The coefficient $q_{\alpha,l,0}$ has a direct interpretation. Fourier orthogonality and Equation~\eqref{eqn:bp-crosscov} give
\begin{align}
 q_{\alpha,l,0}
 &=\frac1{(2\pi)^2}\int_0^{2\pi}\!\int_0^{2\pi}
       K_{\alpha,l}(\cos(x-y))\,\mathrm{d}x\,\mathrm{d}y\nonumber\\
 &=\mathbb E_{\theta,\phi}\left[\left(\frac1{2\pi}\int_0^{2\pi}
       \partial_{\alpha_{l,\mu}}r_\mu(x)\,\mathrm{d}x\right)^2\right]\nonumber\\
 &=\operatorname{Var}_{\theta,\phi}\left(\partial_{\alpha_{l,\mu}}
       \left[\frac1{2\pi}\int_0^{2\pi}r_\mu(x)\,\mathrm{d}x\right]\right).
       \label{eqn:bp-constantcomponent}
\end{align}
Bounded continuous derivatives justify interchanging differentiation, expectation and integration. The zero mean from Proposition~\ref{prop:bp-reduction} gives the last equality. Thus $q_{\alpha,l,0}$ is the gradient variance of the constant Fourier coefficient of the branch response. The $x,y$ integrals extract this coefficient and impose no distribution on the input spectrum.

To bound $q_{\alpha,l,0}$, first evaluate $K_{\alpha,l}(1)$. Set $v=\mathsf T^{L_N-l}e_z$. Since $\mathsf T$ is doubly stochastic, $v$ has nonnegative entries summing to one. For any probability vector, $(\mathsf Tv)_3=1/4+v_3/4\leq1/2$, so $v_3\leq1/2$ if $L_N-l\geq1$. For $l\geq1$, every entry of $e_z^{\mathsf T}\mathsf T^l$ lies in $[1/4,1/2]$. Indeed, $e_z^{\mathsf T}\mathsf T=(1/4,1/4,1/2)$, and subsequent multiplication forms convex combinations of these entries. With $\boldsymbol1=(1,1,1)^{\mathsf T}$,
\begin{equation}
 \boldsymbol1^{\mathsf T}\mathsf T_\theta v=\frac{1+v_3}{2},
 \qquad
 \boldsymbol1^{\mathsf T}\mathsf T_\phi v=1-v_3.
\end{equation}
Both quantities belong to $[1/2,1]$ for the coordinates in Theorem~\ref{thm:bp-bound}. Since $\mathsf T_\alpha v$ is nonnegative, $e_z^{\mathsf T}\mathsf T^l\mathsf T_\alpha v$ lies in $[1/8,1/2]$ for $l\geq1$. For $l=0$, Equation~\eqref{eqn:bp-kernel} equals $(1+v_3)/4$ for $\theta$ and $(1-v_3)/4$ for $\phi$. Consequently,
\begin{equation}\label{eqn:bp-scalarbound}
 \frac18\leq K_{\alpha,l}(1)\leq\frac12
 \quad\text{for }\alpha=\theta,\ 0\leq l\leq L_N,
 \quad\text{or }\alpha=\phi,\ 0\leq l<L_N.
\end{equation}
This estimate also holds for $L_N=1$. The excluded coordinate satisfies $\mathsf T_\phi e_z=0$, so $K_{\phi,L_N}(c)=0$ for every $c$.

Equation~\eqref{eqn:bp-qexplicit} expresses the zero coefficient as
\begin{equation}\label{eqn:bp-evenbound}
 q_{\alpha,l,0}
 =\sum_{\substack{0\leq m\leq L_N\\m\ \mathrm{even}}}b_m2^{-m}\binom m{m/2}
 \geq 4^{-\lfloor L_N/2\rfloor}
       \binom{2\lfloor L_N/2\rfloor}{\lfloor L_N/2\rfloor}
       \sum_{m\ \mathrm{even}}b_m.
\end{equation}
Indeed, the normalized central binomial coefficients decrease with the nonnegative integer $r$, since
\begin{equation}\label{eqn:bp-binomialratio}
 \frac{4^{-(r+1)}\binom{2r+2}{r+1}}{4^{-r}\binom{2r}{r}}
 =\frac{2r+1}{2r+2}<1.
\end{equation}
Also,
\begin{equation}\label{eqn:bp-paritysum}
 \sum_{m\ \mathrm{even}}b_m
 =\frac{K_{\alpha,l}(1)+K_{\alpha,l}(-1)}2.
\end{equation}
We must control cancellation between the variance $K_{\alpha,l}(1)$ and the cross second moment $K_{\alpha,l}(-1)$ at eigenphases separated by $\pi$.

For $L_N\geq5$, direct multiplication gives
\begin{equation}\label{eqn:bp-threeblocks}
 64\mathsf T^3=\begin{pmatrix}21&21&22\\22&22&20\\21&21&22\end{pmatrix},
 \qquad
 64[\mathsf T\mathsf D(-1)]^3
 =\begin{pmatrix}-5&-5&6\\-14&-14&4\\-5&-5&6\end{pmatrix}.
\end{equation}
With absolute values and inequalities interpreted entrywise,
\begin{equation}\label{eqn:bp-contraction}
 |[\mathsf T\mathsf D(-1)]^3|\leq\frac34\mathsf T^3,
 \qquad
 |[\mathsf D(-1)\mathsf T]^3|\leq\frac34\mathsf T^3.
\end{equation}
The second inequality follows by conjugating the first matrix product with $\mathsf D(-1)$, which preserves entrywise absolute values. At least one of $l,L_N-l$ is at least three. Isolate three consecutive factors on that side of Equation~\eqref{eqn:bp-kernel} and apply Equation~\eqref{eqn:bp-contraction}. For the remaining factors, use $|\mathsf T\mathsf D(-1)|=|\mathsf D(-1)\mathsf T|=\mathsf T$ and $|\mathsf T_\alpha|=\mathsf T_\alpha$. The entrywise inequality $|AB|\leq|A||B|$ then gives
\begin{equation}
 |K_{\alpha,l}(-1)|\leq\frac34K_{\alpha,l}(1),
 \qquad
 \sum_{m\ \mathrm{even}}b_m
 \geq\frac18K_{\alpha,l}(1)\geq\frac1{64}.
\end{equation}
For $L_N=2,3,4$, direct multiplication in Equation~\eqref{eqn:bp-kernel} gives the following lists, ordered by increasing $l$,
\begin{equation}\label{eqn:bp-smalldepths}
\begin{array}{c|c|c}
 L_N&\left(\dfrac{K_{\theta,l}(1)+K_{\theta,l}(-1)}2\right)_{l=0}^{L_N}
  &\left(\dfrac{K_{\phi,l}(1)+K_{\phi,l}(-1)}2\right)_{l=0}^{L_N-1}\\[2mm]
 \hline
 2&(7,5,7)/32&(3,3)/32\\[1mm]
 3&(11,9,9,11)/64&(2,3,2)/32\\[1mm]
 4&(91,65,65,65,91)/512&(47,55,55,47)/512
\end{array}
\end{equation}
Every entry is at least $1/64$. For every integer $r\geq1$,
\begin{equation}\label{eqn:bp-binomiallower}
 4^{-r}\binom{2r}{r}\geq\frac1{2\sqrt r}.
\end{equation}
Equality holds at $r=1$. The induction step follows from Equation~\eqref{eqn:bp-binomialratio} and
\begin{equation*}
 \left(\frac{2r+1}{2r+2}\right)^2-\frac r{r+1}
 =\frac1{4(r+1)^2}>0.
\end{equation*}
Combining these estimates with Equation~\eqref{eqn:bp-evenbound} proves
\begin{equation}\label{eqn:bp-zeromodelower}
 q_{\alpha,l,0}\geq\frac{4^{-\lfloor L_N/2\rfloor}}{64}
       \binom{2\lfloor L_N/2\rfloor}{\lfloor L_N/2\rfloor}
 \geq\frac1{64\sqrt{2L_N}},\qquad L_N\geq2.
\end{equation}
Finally, Equation~\eqref{eqn:bp-spectralvariance} yields
\begin{equation}
 \operatorname{Var}_{\theta,\phi}(\partial_{\alpha_{l,\mu}}f_Z)
 \geq|a_\mu|^4q_{\alpha,l,0}
 \geq\frac{|a_\mu|^4}{64\sqrt{2L_N}}.
\end{equation}
No step depends on $d$, the query spectrum or the choice of $\rho_S$.
\end{proof}

The exact variance also supplies an upper estimate. Since $|\tr(\rho_SU_N^h)|\leq1$, Equations~\eqref{eqn:bp-spectralvariance} and~\eqref{eqn:bp-scalarbound} imply
\begin{equation}\label{eqn:bp-upperbound}
 \operatorname{Var}_{\theta,\phi}(\partial_{\alpha_{l,\mu}}f_Z)
 \leq|a_\mu|^4\sum_hq_{\alpha,l,h}
 =|a_\mu|^4K_{\alpha,l}(1)\leq\frac{|a_\mu|^4}{2}.
\end{equation}
The next result identifies inputs that attain the smallest variance and proves that the depth exponent in Theorem~\ref{thm:bp-bound} cannot be improved uniformly over inputs.

\begin{proposition}[Sharp depth dependence]\label{prop:bp-sharpness}
Fix $L_N\geq2$, a branch and an active coordinate as in Theorem~\ref{thm:bp-bound}. Among admissible choices of data dimension, commuting input and data state, the minimum and maximum readout-gradient variances are
\begin{equation}\label{eqn:bp-extrema}
 \min\operatorname{Var}_{\theta,\phi}(\partial_{\alpha_{l,\mu}}f_Z)
 =|a_\mu|^4q_{\alpha,l,0},\qquad
 \max\operatorname{Var}_{\theta,\phi}(\partial_{\alpha_{l,\mu}}f_Z)
 =|a_\mu|^4K_{\alpha,l}(1).
\end{equation}
Both extrema can be attained with pure data states. Moreover,
\begin{equation}\label{eqn:bp-zeroasymptotic}
 \frac1{64\sqrt{2L_N}}\leq q_{\alpha,l,0}
 \leq4^{-L_N}\binom{2L_N}{L_N}
 \leq\frac1{\sqrt{L_N+1}}.
\end{equation}
Thus the minimum variance has order $|a_\mu|^4L_N^{-1/2}$, with constants uniform over the active coordinates.
\end{proposition}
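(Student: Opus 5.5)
The plan rests on the exact spectral formula \eqref{eqn:bp-spectralvariance}:
\[
\operatorname{Var}_{\theta,\phi}(\partial_{\alpha_{l,\mu}}f_Z)=|a_\mu|^4\sum_{h=-L_N}^{L_N}q_{\alpha,l,h}\,\bigl|\tr(\rho_SU_N^h)\bigr|^2 .
\]
Since every $q_{\alpha,l,h}\ge0$, the $h=0$ term equals $1$ and every other weight $\bigl|\tr(\rho_SU_N^h)\bigr|^2$ lies in $[0,1]$. This gives $|a_\mu|^4q_{\alpha,l,0}\le\operatorname{Var}\le|a_\mu|^4\sum_hq_{\alpha,l,h}=|a_\mu|^4K_{\alpha,l}(1)$ for every admissible input, as already noted in \eqref{eqn:bp-upperbound}. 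So \eqref{eqn:bp-extrema} reduces to exhibiting inputs with pure data states that attain each side.

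For the maximum, I take any dimension, $U_j=I_S$ for all $j$ and $\rho_S=\ketbra{\chi_0}{\chi_0}$. Then $\tr(\rho_SU_N^h)=1$ for all $h$, and the sum equals $K_{\alpha,l}(1)$ by \eqref{eqn:bp-fourier}.

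For the minimum, I choose $n$ with $d=2^n\ge L_N+1$ and set $U_N=\sum_k e^{2\pi ik/d}\ketbra{k}{k}$, with $U_j=I$ for $j<N$, so the tuple commutes. I take the pure state $\ket\psi=d^{-1/2}\sum_k\ket k$. Then $\pi_k=1/d$, and
\[
\tr(\rho_SU_N^h)=\frac1d\sum_ke^{2\pi ihk/d}=0\quad\text{for }0<|h|\le L_N<d,
\]
so only the constant mode survives. This is the pure-state input that cancels every nonconstant Fourier component. The inputs are independent of the angles, so the assumptions of Theorem~\ref{thm:bp-bound} hold.

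For \eqref{eqn:bp-zeroasymptotic}, the lower bound is \eqref{eqn:bp-zeromodelower}. The last inequality $4^{-r}\binom{2r}{r}\le(r+1)^{-1/2}$ I will prove by induction using the ratio \eqref{eqn:bp-binomialratio} together with $\bigl(\tfrac{2r+1}{2r+2}\bigr)^2\le\tfrac{r+1}{r+2}$, which reduces to the polynomial inequality $(2r+1)^2(r+2)\le4(r+1)^3$, i.e.\ $r+2\ge0$.

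The middle inequality $q_{\alpha,l,0}\le4^{-L_N}\binom{2L_N}{L_N}=\frac1{2\pi}\int_0^{2\pi}\cos^{2L_N}t\,\mathrm dt$ is the main obstacle. I will write $q_{\alpha,l,0}=\frac1{2\pi}\int_0^{2\pi}K_{\alpha,l}(\cos t)\,\mathrm dt$. Because the first two columns of $\mathsf T$, $\mathsf T_\alpha$ agree, I can reduce the transfer to $2\times2$ matrices on the pair $(v_1+v_2,v_3)$. For $\mathsf T\mathsf D(c)$ this gives $M(c)=\begin{pmatrix}3c/4&1/2\\ c/4&1/2\end{pmatrix}$, with analogous reduced versions of $\mathsf T_\alpha$.

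I would then try to dominate $K_{\alpha,l}(\cos t)$ pointwise by a function whose average is at most the central binomial term. The first attempt is a norm or spectral-radius bound on products of $M(\cos t)$, combined with the prefactor $K_{\alpha,l}(1)\le1/2$ from \eqref{eqn:bp-scalarbound}. If the pointwise comparison loses a constant, as a bound like $\bigl((1+\cos^2t)/2\bigr)^{L_N}$ would, the fallback is to bound the coefficients $b_m$ directly. One would show that the weight $\sum_mb_m\cdot2^{-m}\binom{m}{m/2}$ is dominated by the $m=L_N$ term, tracking how many factors $\mathsf D(c)$ contribute $c$ rather than $1$. The smallest depths would be checked against the explicit data in \eqref{eqn:bp-smalldepths}.

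Combining the two bounds gives order $|a_\mu|^4L_N^{-1/2}$ for the minimum, uniformly over active coordinates.
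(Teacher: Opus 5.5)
You handle the two-sided bound \eqref{eqn:bp-extrema}, the two pure-state inputs that attain it, the lower bound via \eqref{eqn:bp-zeromodelower}, and the induction for $4^{-r}\binom{2r}{r}\le(r+1)^{-1/2}$ as the paper does. One slip: $(2r+1)^2(r+2)\le4(r+1)^3$ reduces to $3r+2\ge0$, not $r+2\ge0$.

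The genuine gap is the middle inequality $q_{\alpha,l,0}\le4^{-L_N}\binom{2L_N}{L_N}$. It carries the whole sharpness claim, and you leave it as a list of candidate strategies.

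\begin{itemize}
\item Your fallback points the wrong way. The weights $2^{-m}\binom{m}{m/2}$ decrease in $m$, so the danger is coefficient mass at small $m$. That mass is not dominated by the $m=L_N$ term: a factor $c$ is collected only while the reduced vector sits in the $XY$ coordinate, which under $M(1)$ happens in roughly two thirds of the steps. What is needed is quantitative decay of the low-order mass, and you give no mechanism for it.
\item A spectral bound on products of $M(\cos t)$ might be pushed through. It would need explicit eigenvector constants, also for $\cos t<0$ where $M$ has negative entries. The margin against the Wallis value is only a constant factor, and none of this is carried out.
\item The table \eqref{eqn:bp-smalldepths} does settle $L_N\le4$, because $q_{\alpha,l,0}\le\sum_{m\ \mathrm{even}}b_m$. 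It says nothing about general depth.
\end{itemize}

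The missing idea is to use positivity and stochasticity rather than spectra.

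\begin{enumerate}
\item Since $b_m\ge0$, you have $K_{\alpha,l}(\cos t)\le K_{\alpha,l}(|\cos t|)$. So only $u\in[0,1]$ matters, and every matrix involved is entrywise nonnegative.
\item The entrywise bound $\mathsf T_\alpha\le\mathsf T$ and the identity $\mathsf T[\mathsf D(u)\mathsf T]^k=[\mathsf T\mathsf D(u)]^k\mathsf T$ give $K_{\alpha,l}(u)\le e_z^{\mathsf T}[\mathsf T\mathsf D(u)]^{L_N}\mathsf Te_z$, uniformly over the active coordinates.
\item Set $v_0=\mathsf Te_z$ and $v_{j+1}=\mathsf T\mathsf D(u)v_j$. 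Column stochasticity gives $\boldsymbol1^{\mathsf T}v_{j+1}=u\,\boldsymbol1^{\mathsf T}v_j+(1-u)(v_j)_3$; in your reduced picture this is just $\boldsymbol1^{\mathsf T}M(u)=(u,1)$.
\item Every entry of the third row of $\mathsf T$ is at most $1/2$, so $(v_j)_3\le\frac12\boldsymbol1^{\mathsf T}v_j$. Hence $K_{\alpha,l}(u)\le\frac12\bigl(\frac{1+u}{2}\bigr)^{L_N}$.
\item Using $\frac{1+|\cos t|}{2}=\cos^2(t/2)$ for $|t|\le\pi/2$ together with symmetry,
\[
 q_{\alpha,l,0}\le\frac{2}{\pi}\int_0^{\pi/4}\cos^{2L_N}y\,\mathrm dy\le\frac{2}{\pi}\int_0^{\pi/2}\cos^{2L_N}y\,\mathrm dy=4^{-L_N}\binom{2L_N}{L_N}.
\]
\end{enumerate}

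The factor $\frac12$ from the final $Z$ readout is what brings the half-angle integral under the Wallis value.
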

\begin{proof}
Equation~\eqref{eqn:bp-spectralvariance} bounds every variance between the two values in Equation~\eqref{eqn:bp-extrema}. A state supported on one eigenspace attains the maximum. For the minimum, choose any power of two $d>L_N$ and set
\begin{equation}\label{eqn:bp-rootinput}
 U_N=\sum_{k=0}^{d-1}e^{2\pi i k/d}\ketbra{k}{k},\qquad
 \rho_S=\ketbra{\psi}{\psi},\qquad
 \ket\psi=\frac1{\sqrt d}\sum_{k=0}^{d-1}\ket k.
\end{equation}
Take $U_j=I_S$ for $j<N$. Then $\tr(\rho_SU_N^h)=0$ for every integer $0<|h|\leq L_N$, while $\tr\rho_S=1$. Equation~\eqref{eqn:bp-spectralvariance} attains its zero-frequency term exactly.

It remains to prove the upper bound on that term. For $u\in[0,1]$, all matrix entries in Equation~\eqref{eqn:bp-kernel} are nonnegative, and $\mathsf T_\alpha\leq\mathsf T$ entrywise. Hence
\begin{equation*}
 K_{\alpha,l}(u)\leq e_z^{\mathsf T}[\mathsf T\mathsf D(u)]^{L_N}\mathsf Te_z.
\end{equation*}
Let $v_0=\mathsf Te_z$ and $v_{j+1}=\mathsf T\mathsf D(u)v_j$. Each $v_j$ is nonnegative and satisfies $(v_j)_3\leq\boldsymbol1^{\mathsf T}v_j/2$. This holds because $\mathsf T$ is column stochastic and every entry in its third row is at most $1/2$. Consequently,
\begin{equation*}
 \boldsymbol1^{\mathsf T}v_{j+1}
 =u\boldsymbol1^{\mathsf T}v_j+(1-u)(v_j)_3
 \leq\frac{1+u}{2}\boldsymbol1^{\mathsf T}v_j,
 \qquad
 K_{\alpha,l}(u)\leq\frac12\left(\frac{1+u}{2}\right)^{L_N}.
\end{equation*}
The coefficients in Equation~\eqref{eqn:bp-positivepolynomial} are nonnegative, so $K_{\alpha,l}(\cos x)\leq K_{\alpha,l}(|\cos x|)$. Fourier orthogonality now yields
\begin{align*}
 q_{\alpha,l,0}
 &\leq\frac1{4\pi}\int_0^{2\pi}
             \left(\frac{1+|\cos x|}{2}\right)^{L_N}\,\mathrm{d}x
 =\frac2\pi\int_0^{\pi/4}\cos^{2L_N}x\,\mathrm{d}x\\
 &\leq\frac2\pi\int_0^{\pi/2}\cos^{2L_N}x\,\mathrm{d}x
 =4^{-L_N}\binom{2L_N}{L_N}.
\end{align*}
The last integral follows by integration by parts, starting from $\int_0^{\pi/2}1\,\mathrm{d}x=\pi/2$. Finally, $4^{-r}\binom{2r}{r}\leq(r+1)^{-1/2}$ follows by induction from equality at $r=0$, Equation~\eqref{eqn:bp-binomialratio}, and
\begin{equation*}
 \frac{r+1}{r+2}-\left(\frac{2r+1}{2r+2}\right)^2
 =\frac{3r+2}{4(r+1)^2(r+2)}\geq0.
\end{equation*}
Together with Equation~\eqref{eqn:bp-zeromodelower}, this proves Equation~\eqref{eqn:bp-zeroasymptotic}.
\end{proof}

Each query conjugation introduces only the harmonics $0,\pm1$, so $r_\mu(x)$ has integer Fourier support in $[-L_N,L_N]$. The input in Equation~\eqref{eqn:bp-rootinput} therefore also satisfies, for every choice of angles,
\begin{equation}\label{eqn:bp-centeredcancellation}
 \frac1d\sum_{k=0}^{d-1}\left[
 r_\mu(2\pi k/d)-\frac1{2\pi}\int_0^{2\pi}r_\mu(x)\,\mathrm{d}x\right]=0.
\end{equation}
Every parameter derivative of this expression vanishes. A uniform positive variance bound for the spectrally averaged nonconstant response is therefore impossible, even for pure data states. This does not say that the nonconstant Fourier coefficients themselves have zero gradients.

Spectral concentration gives a stronger lower bound, because the nonzero spectral moments also remain large.

\begin{proposition}[Spectrally concentrated inputs]\label{prop:bp-concentrated}
Suppose $L_N\geq1$ and eigenphase representatives can be chosen such that, for some $x_0\in\RR$ and $0\leq\Delta\leq\pi/(4L_N)$,
\begin{equation}
 \pi_k>0\ \Longrightarrow\ |x_{N,k}-x_0|\leq\Delta.
\end{equation}
For every $\mu\in\cI$ and every coordinate in the ranges of Equation~\eqref{eqn:bp-scalarbound},
\begin{equation}
 \operatorname{Var}_{\theta,\phi}(\partial_{\alpha_{l,\mu}}f_Z)\geq\frac{|a_\mu|^4}{16}.
\end{equation}
If $\rho_S$ is supported on one eigenspace of $U_N$, the lower bound is $|a_\mu|^4/8$.
\end{proposition}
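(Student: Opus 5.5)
The starting point is the exact formula~\eqref{eqn:bp-phasevariance}:
\[
\operatorname{Var}_{\theta,\phi}(\partial_{\alpha_{l,\mu}}f_Z)=|a_\mu|^4\sum_{k,k'}\pi_k\pi_{k'}K_{\alpha,l}(\cos(x_{N,k}-x_{N,k'})).
\]
Fix the representatives from the hypothesis. Every pair $k,k'$ with $\pi_k\pi_{k'}>0$ then satisfies $|x_{N,k}-x_{N,k'}|\leq2\Delta\leq\pi/(2L_N)$, so $c=\cos(x_{N,k}-x_{N,k'})\in[\cos(\pi/(2L_N)),1]$, and in particular $c\geq0$. Since $\sum_{k,k'}\pi_k\pi_{k'}=1$, it suffices to show that $K_{\alpha,l}(c)\geq K_{\alpha,l}(1)/2$ on this interval. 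Combined with $K_{\alpha,l}(1)\geq1/8$ from~\eqref{eqn:bp-scalarbound}, this gives the bound $|a_\mu|^4/16$. If $\rho_S$ is supported on one eigenspace of $U_N$, every term has $c=1$ and the variance equals $|a_\mu|^4K_{\alpha,l}(1)\geq|a_\mu|^4/8$ directly.

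To compare $K_{\alpha,l}(c)$ with $K_{\alpha,l}(1)$, I would use the nonnegative expansion~\eqref{eqn:bp-positivepolynomial}, $K_{\alpha,l}(c)=\sum_{m=0}^{L_N}b_mc^m$ with $b_m\geq0$. For $c\in[0,1]$ each term satisfies $b_mc^m\geq b_mc^{L_N}$, so $K_{\alpha,l}(c)\geq c^{L_N}K_{\alpha,l}(1)$. It therefore remains to check that $\cos(\pi/(2L_N))^{L_N}\geq1/2$ for every $L_N\geq1$.

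This scalar inequality is the main obstacle, though it is elementary. At $L_N=1$ the left side is $\cos(\pi/2)^1=0$, so the inequality fails outright. I would therefore first check whether the factor $2$ in $2\Delta$ is overly pessimistic. It is not: the hypothesis only bounds $|x_{N,k}-x_0|$, so the pairwise separations genuinely reach $2\Delta$.

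Instead, the fix would exploit the matrix structure, as the authors' proof presumably does. The update $\mathsf D(c)=\operatorname{diag}(c,c,1)$ acts only on the $X\ox X$ and $Y\ox Y$ coefficients, so in Equation~\eqref{eqn:bp-kernel} one has $\mathsf D(c)\geq c\,I$ entrywise when $c\geq0$. Finer, one can write $\mathsf D(c)=c I+(1-c)e_ze_z^{\mathsf T}$ and track the product. An even better route is a bound of the form $1-K(c)/K(1)\leq L_N(1-c)$, obtained by a telescoping argument in which each of the $L_N$ diagonal factors loses at most $(1-c)$ times a bounded multiple. Applying the general convexity bound $c^{L_N}\geq1-L_N(1-c)$ with $1-c\leq(\pi/(2L_N))^2/2$ gives $K(c)/K(1)\geq1-\pi^2/(8L_N)$. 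This is at least $1/2$ once $L_N\geq3$. For $L_N=1,2$ I would evaluate the explicit polynomials $K_{\alpha,l}(c)$ from the matrices in~\eqref{eqn:bp-matrices}, in the same way as the small-depth table~\eqref{eqn:bp-smalldepths}, and verify the $1/16$ bound directly at $c=\cos(\pi/(2L_N))$. I expect most of the care to go into this small-depth check and into choosing the constant in the telescoping estimate.
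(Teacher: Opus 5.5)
Your reduction to a pointwise bound over pairs has a genuine gap at $L_N=1$, which the statement covers. Bounding every cross term by its worst pair requires $K_{\alpha,l}(c)\geq K_{\alpha,l}(1)/2$ down to $c=\cos(2\Delta)$, and at $L_N=1$ this means down to $c=\cos(\pi/2)=0$. For the active coordinate $\phi_{0,\mu}$ at $L_N=1$, however, $K_{\phi,0}(c)=e_z^{\mathsf T}\mathsf T_\phi\mathsf D(c)\mathsf Te_z=c/8$; equivalently, differentiate the formula in Remark~\ref{rem:bp-shallow}. So $K_{\phi,0}(0)=0$, and the small-depth check you plan at $c=\cos(\pi/(2L_N))$ returns zero.

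No sharper treatment of the matrix product or of the telescoping constant can repair this, because the lossy step is the pairwise minimum itself. Take two eigenphases $x_0\pm\pi/4$ with weight $1/2$ each. The cross pair contributes $K_{\phi,0}(0)=0$, and the variance $|a_\mu|^4/16$ comes entirely from the diagonal terms $k=k'$. Your remark that the factor $2$ in $2\Delta$ is genuine is where the argument goes astray. It is genuine for pairwise separations, but the double sum is a squared modulus, not something controlled by its worst pair.

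The missing idea is to keep that positive-semidefinite structure by working with the spectral form \eqref{eqn:bp-spectralvariance}, $\operatorname{Var}=|a_\mu|^4\sum_h q_{\alpha,l,h}|\tr(\rho_SU_N^h)|^2$ with $q_{\alpha,l,h}\geq0$. You should then measure phases from the centre $x_0$ rather than from each other. For $|h|\leq L_N$,
\[
|\tr(\rho_SU_N^h)|\geq\operatorname{Re}\bigl[e^{-ihx_0}\tr(\rho_SU_N^h)\bigr]=\sum_k\pi_k\cos\bigl(h(x_{N,k}-x_0)\bigr)\geq\cos(|h|\Delta)\geq\frac1{\sqrt2}.
\]
Summing over $h$ then gives variance at least $|a_\mu|^4K_{\alpha,l}(1)/2\geq|a_\mu|^4/16$ by \eqref{eqn:bp-scalarbound}. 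This is the paper's argument, and it treats all $L_N\geq1$ uniformly.

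For $L_N\geq2$ your first route already works without any detour, since $\cos(\pi/(2L_N))^{L_N}\geq1/2$. Equality holds at $L_N=2$, and Bernoulli's inequality gives $\geq1-\pi^2/(8L_N)>1/2$ for $L_N\geq3$. Your telescoping bound $1-L_N(1-c)$ is weaker than $c^{L_N}$ and adds nothing. Your treatment of the single-eigenspace case is correct.
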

\begin{proof}
For every integer $|h|\leq L_N$,
\begin{equation}
 \operatorname{Re}\left[e^{-ihx_0}\tr(\rho_SU_N^h)\right]
 =\sum_k\pi_k\cos(h(x_{N,k}-x_0))
 \geq\cos(|h|\Delta)\geq\frac1{\sqrt2}.
\end{equation}
Thus $|\tr(\rho_SU_N^h)|^2\geq1/2$ on the Fourier support. Equations~\eqref{eqn:bp-spectralvariance} and~\eqref{eqn:bp-scalarbound} give
\begin{equation}
 \operatorname{Var}_{\theta,\phi}(\partial_{\alpha_{l,\mu}}f_Z)
 \geq\frac{|a_\mu|^4}{2}\sum_hq_{\alpha,l,h}
 =\frac{|a_\mu|^4}{2}K_{\alpha,l}(1)
 \geq\frac{|a_\mu|^4}{16}.
\end{equation}
On a single eigenspace, all spectral moments have modulus one, giving $|a_\mu|^4K_{\alpha,l}(1)\geq|a_\mu|^4/8$. Both arguments apply to pure and mixed states.
\end{proof}

\begin{remark}[The shallow case]\label{rem:bp-shallow}
At $L_N=1$, direct multiplication gives
\begin{equation}
 r_\mu(x)=\cos\theta_{0,\mu}\cos\theta_{1,\mu}
 -\sin\theta_{0,\mu}\sin\theta_{1,\mu}\cos(\phi_{0,\mu}+x).
\end{equation}
Consequently,
\begin{equation}
 \operatorname{Var}_{\theta,\phi}(\partial_{\phi_{0,\mu}}f_Z)
 =\frac{|a_\mu|^4}{8}|\tr(\rho_SU_N)|^2.
\end{equation}
This can vanish for a pure input. For example, $U_N=\operatorname{diag}(1,-1)$ and $\ket\psi=(\ket0+\ket1)/\sqrt2$ give $\bra\psi U_N\ket\psi=0$. Purity alone therefore supplies no uniform lower bound at this depth. Theorem~\ref{thm:bp-bound} applies to every integer $L_N\geq2$, with either parity.
\end{remark}

\begin{remark}[Branch weights and the scope of the conclusion]\label{rem:bp-scope}
Theorem~\ref{thm:bp-bound} applies to every fixed $\rho_S$, including $I_S/d$ and states that do not commute with $U_N$. For $d=2^n$, suppose $|a_\mu|^2\geq c_1n^{-a}$ and $L_N+1\leq c_2n^b$ for all sufficiently large $n$. Here $c_1,c_2>0$ and $a,b\geq0$ are constants independent of $n$. Then
\begin{equation}
 \operatorname{Var}_{\theta,\phi}(\partial_{\alpha_{l,\mu}}f_Z)
 \geq\frac{c_1^2}{64\sqrt{2c_2}}n^{-(2a+b/2)}.
\end{equation}
The variance lower bound is therefore inverse polynomial in the number of data qubits under these scaling assumptions.

The branch weights can introduce a different dependence on system size. With equal degrees $L_j=L_N\geq2$, all labels selected and uniform preparation, $d_{\cI}=\ell=(L_N+1)^{N-1}$. Corollary~\ref{cor:bp-uniform} and Equation~\eqref{eqn:bp-upperbound} then give
\begin{equation}\label{eqn:bp-branchscaling}
 \frac1{64\sqrt{2L_N}(L_N+1)^{2N-2}}
 \leq\operatorname{Var}_{\theta,\phi}(\partial_{\alpha_{l,\mu}}f_Z)
 \leq\frac1{2(L_N+1)^{2N-2}}.
\end{equation}
At fixed $L_N$, both bounds decay exponentially with the number $N$ of input unitaries.

Equation~\eqref{eqn:bp-constantcomponent} identifies the contribution responsible for the lower bound uniform in the data state as the gradient of the constant Fourier coefficient. The present calculation assumes fixed PREP, noiseless gates and exact readout. It concerns individual coordinate gradients at initialization. Appendix~\ref{sec:bp-supervised} treats the empirical squared loss and exact gradient descent under these assumptions. Other losses, arbitrary parameter directions and robustness to noise require further analysis. Equation~\eqref{eqn:bp-reduction} also shows that this local readout is independent of $U_1,\ldots,U_{N-1}$ when $U_N$ and $\rho_S$ are fixed.
\end{remark}

\subsection{Supervised squared loss and gradient descent}\label{sec:bp-supervised}

We now consider the fixed data set and empirical squared loss in Equation~\eqref{eqn:learning-loss}. All samples share the same angles, PREP and data state. In particular, the prediction errors are functions of the same random angles as the readout gradients. For a fixed coordinate,
\begin{equation}\label{eqn:bp-lossgradient}
 \partial_{\alpha_{l,\mu}}\mathcal L
 =\frac1m\sum_i\bigl[f_Z(\bfU^{(i)})-y_i\bigr]
          \partial_{\alpha_{l,\mu}}f_Z(\bfU^{(i)}).
\end{equation}
The calculation below keeps both the sample cross terms and the dependence between the two factors in each summand.

\begin{proof}[Proof of Theorem~\ref{thm:bp-loss}]
The loss is continuously differentiable and $2\pi$-periodic in each angle. Holding all other angles fixed gives
\begin{equation*}
 \frac1{2\pi}\int_0^{2\pi}\partial_{\alpha_{l,\mu}}\mathcal L
       \,\mathrm{d}\alpha_{l,\mu}
 =\frac{\left.\mathcal L\right|_{\alpha_{l,\mu}=2\pi}
       -\left.\mathcal L\right|_{\alpha_{l,\mu}=0}}{2\pi}=0.
\end{equation*}
Boundedness on the angle torus and Fubini's theorem prove Equation~\eqref{eqn:bp-lossmean}. Thus the loss-gradient variance equals its raw second moment.

Consider the simultaneous shifts $\theta_{0,\nu}\mapsto\theta_{0,\nu}+\pi$ modulo $2\pi$ for every $\nu\in\cI$. They preserve the joint initialization measure. Since $R_y(\pi)^\dagger ZR_y(\pi)=-Z$, they send every prediction $f_Z(\bfU^{(i)})$ to its negative. Differentiating this identity shows that every readout gradient also changes sign. The product of a prediction and its gradient is therefore unchanged, whereas a target times a gradient changes sign. Measure invariance gives
\begin{equation}\label{eqn:bp-losssymmetry}
 \mathbb E_{\theta,\phi}\left[
 \left(\frac1m\sum_i f_Z(\bfU^{(i)})
                    \partial_{\alpha_{l,\mu}}f_Z(\bfU^{(i)})\right)
 \left(\frac1m\sum_j y_j
                    \partial_{\alpha_{l,\mu}}f_Z(\bfU^{(j)})\right)\right]=0.
\end{equation}
This cancellation follows from the invariance of the initialization measure.

For two samples, apply Equation~\eqref{eqn:bp-crosscov} to their respective spectral decompositions. The two eigenbases can differ, because the readout reduction holds separately for each input. Expanding Equation~\eqref{eqn:bp-fourier} in the resulting double spectral sum gives
\begin{align}\label{eqn:bp-samplecross}
 &\mathbb E_{\theta,\phi}\left[
       \partial_{\alpha_{l,\mu}}f_Z(\bfU^{(i)})
       \partial_{\alpha_{l,\mu}}f_Z(\bfU^{(j)})\right]\nonumber\\
 &\quad=|a_\mu|^4\sum_{h=-L_N}^{L_N}q_{\alpha,l,h}
       \tr\bigl(\rho_S(U_N^{(i)})^h\bigr)
       \overline{\tr\bigl(\rho_S(U_N^{(j)})^h\bigr)}.
\end{align}
Although individual terms can be complex, the $h$ and $-h$ terms are conjugates, so the sum is real. Squaring Equation~\eqref{eqn:bp-lossgradient}, using Equation~\eqref{eqn:bp-losssymmetry} and summing Equation~\eqref{eqn:bp-samplecross} with the real target weights yields the exact identity
\begin{align}\label{eqn:bp-lossvarianceexact}
 \operatorname{Var}_{\theta,\phi}(\partial_{\alpha_{l,\mu}}\mathcal L)
 &=\mathbb E_{\theta,\phi}\left[
       \left(\frac1m\sum_i f_Z(\bfU^{(i)})
              \partial_{\alpha_{l,\mu}}f_Z(\bfU^{(i)})\right)^2\right]\nonumber\\
 &\quad+|a_\mu|^4\sum_{h=-L_N}^{L_N}q_{\alpha,l,h}
       \left|\frac1m\sum_i y_i
              \tr\bigl(\rho_S(U_N^{(i)})^h\bigr)\right|^2.
\end{align}
The first term includes the dependence on all branches through the predictions. Both terms are nonnegative.

To obtain a bound that can also apply to targets of zero mean, we retain $h=\pm1$ as well as $h=0$. The coefficients of odd powers in Equation~\eqref{eqn:bp-positivepolynomial} satisfy
\begin{equation}\label{eqn:bp-oddmass}
 \sum_{j\ \mathrm{odd}}b_j
 =\frac{K_{\alpha,l}(1)-K_{\alpha,l}(-1)}2\geq\frac1{64},
 \qquad L_N\geq2.
\end{equation}
For $L_N\geq5$, this follows from the same contraction as Equation~\eqref{eqn:bp-contraction}, since $|K_{\alpha,l}(-1)|\leq3K_{\alpha,l}(1)/4$ and $K_{\alpha,l}(1)\geq1/8$. For $L_N=2,3,4$, direct evaluation of Equation~\eqref{eqn:bp-kernel} gives, respectively, the minimum odd-power sums over all active coordinates as $1/16$, $13/128$ and $19/256$. Each exceeds $1/64$.

For an odd exponent $j=2r+1$, its contribution to the first harmonic is
\begin{equation*}
 2^{-j}\binom j{(j-1)/2}
 =4^{-(r+1)}\binom{2r+2}{r+1}.
\end{equation*}
Using $r+1\leq\lceil L_N/2\rceil$, Equations~\eqref{eqn:bp-binomialratio}, \eqref{eqn:bp-binomiallower} and~\eqref{eqn:bp-oddmass} give
\begin{equation}\label{eqn:bp-firstharmonic}
 q_{\alpha,l,1}=q_{\alpha,l,-1}
 \geq\frac{4^{-\lceil L_N/2\rceil}}{64}
       \binom{2\lceil L_N/2\rceil}{\lceil L_N/2\rceil}
 \geq\frac1{64\sqrt{2(L_N+1)}}.
\end{equation}
Equation~\eqref{eqn:bp-zeromodelower} gives the same last lower bound for $q_{\alpha,l,0}$. Keeping only $h=0,\pm1$ in the second term of Equation~\eqref{eqn:bp-lossvarianceexact} proves Equation~\eqref{eqn:bp-lossbound}.
\end{proof}

A positive bracket in Equation~\eqref{eqn:bp-lossbound} supplies a sufficient condition for positive loss-gradient variance. For example, two targets $y_1=1$, $y_2=-1$ and inputs $U_N^{(1)}=iI_S$, $U_N^{(2)}=-iI_S$ have zero mean target and satisfy
\begin{equation*}
 \frac12\sum_{i=1}^2y_i\tr(\rho_SU_N^{(i)})=i.
\end{equation*}
Their loss-gradient bound is therefore positive. Conversely, when both quantities in the bracket vanish, higher spectral moments and the first term of Equation~\eqref{eqn:bp-lossvarianceexact} can still contribute. None of these initialization estimates can be reapplied to later iterates without analyzing the changed angle distribution.

We next give an optimization statement for exact gradients. Let $\nabla\mathcal L$ collect the $d_{\cI}(2L_N+1)$ coordinates $\theta_{l,\mu}$ with $0\leq l\leq L_N$ and $\phi_{l,\mu}$ with $0\leq l<L_N$. The unused angles $\phi_{L_N,\mu}$ remain fixed. Coordinates on a branch with $a_\mu=0$ contribute zero and may be retained in this vector. Regard the listed angles as a real vector in these formulas, using periodicity if a representation in $[0,2\pi)$ is desired.

\begin{proposition}[Gradient descent on the squared loss]\label{prop:bp-descent}
For the fixed model and data set in Theorem~\ref{thm:bp-loss}, choose a constant step size
\begin{equation}\label{eqn:bp-stepsize}
 0<\eta\leq\frac1{3(2L_N+1)\max_{\mu\in\cI}|a_\mu|^2}.
\end{equation}
Starting from any angles, update the listed coordinates by
\begin{equation}\label{eqn:bp-update}
 (\theta^{(t+1)},\phi^{(t+1)})
 =(\theta^{(t)},\phi^{(t)})-\eta\nabla\mathcal L^{(t)},
 \qquad
 \mathcal L^{(t)}:=\mathcal L(\theta^{(t)},\phi^{(t)}),
\end{equation}
where $\nabla\mathcal L^{(t)}$ denotes the gradient evaluated at that iterate. For every $t\geq0$,
\begin{equation}\label{eqn:bp-descent}
 \mathcal L^{(t+1)}\leq\mathcal L^{(t)}
       -\frac\eta2\|\nabla\mathcal L^{(t)}\|^2.
\end{equation}
In particular, the loss converges, $\|\nabla\mathcal L^{(t)}\|\to0$, and, for $t\geq1$,
\begin{equation}\label{eqn:bp-stationarity}
 \min_{0\leq j<t}\|\nabla\mathcal L^{(j)}\|^2
 \leq\frac{2\mathcal L^{(0)}}{\eta t}.
\end{equation}
Under the uniform initialization of Theorem~\ref{thm:bp-loss}, the expected decrease after the first update also obeys
\begin{align}\label{eqn:bp-firstdescent}
 \mathbb E_{\theta,\phi}[\mathcal L^{(0)}-\mathcal L^{(1)}]
 \geq\frac{\eta(2L_N+1)\sum_\mu|a_\mu|^4}{128\sqrt{2(L_N+1)}}
 \left[\left(\frac1m\sum_i y_i\right)^2
       +2\left|\frac1m\sum_i y_i\tr\bigl(\rho_SU_N^{(i)}\bigr)\right|^2\right].
\end{align}
The expectation here is over the initial angles only.
\end{proposition}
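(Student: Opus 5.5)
The plan is the standard descent lemma for an $L$-smooth function, followed by taking expectations of the first-step inequality using Theorem~\ref{thm:bp-loss}. The key quantity is a bound on the Hessian of $\mathcal L$ along the listed coordinates.

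\textbf{Step 1: derivative bounds for a single branch response.} Each $r_\mu(x)$ is a trigonometric polynomial of degree at most one in each angle, because each angle appears once in $Q_\mu$ and $r_\mu$ is quadratic in $Q_\mu$ with half-angle entries. Also $|r_\mu|\le1$. Writing derivatives via the generators in Equation~\eqref{eqn:learning-gatederivatives}, each first or second partial of $r_\mu(x)$ has modulus at most $1$; for a mixed second derivative, each inserted Pauli contributes a factor $1/2$ while the product rule produces up to four terms.

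\textbf{Step 2: Lipschitz constant of the gradient.} By Proposition~\ref{prop:bp-reduction}, for each sample we have $|\partial f_Z|\le|a_\mu|^2$, and the Hessian of $f_Z$ is block diagonal across branches, each block having entries bounded by $|a_\mu|^2$. The Hessian of the loss is
\begin{equation*}
\frac1m\sum_i\left[\nabla f_i\nabla f_i^{\mathsf T}+(f_i-y_i)\nabla^2 f_i\right].
\end{equation*}
The block of branch $\mu$ has $2L_N+1$ active coordinates. Bounding the operator norm of each block by its dimension times the maximal entry gives a bound of order $(2L_N+1)|a_\mu|^2\cdot\max\{1,2\}$. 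The rank-one term is bounded by $\|\nabla f_i\|^2\le(2L_N+1)\sum_\mu|a_\mu|^4\le(2L_N+1)\max_\mu|a_\mu|^2$. Together these give smoothness constant $L_{\rm sm}\le 3(2L_N+1)\max_\mu|a_\mu|^2$, since $|f_i-y_i|\le2$. Getting the constant $3$ exactly is where care is needed, and it is the main obstacle. I would first try tightening Step~1 so that the second derivatives are bounded by $1/2$ after a Pauli-norm argument, so that the two contributions sum to $1+2=3$.

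\textbf{Step 3: descent, convergence, and expected first decrease.} The descent lemma with $\eta\le1/L_{\rm sm}$ gives $\mathcal L^{(t+1)}\le\mathcal L^{(t)}-\eta(1-\eta L_{\rm sm}/2)\|\nabla\mathcal L^{(t)}\|^2\le\mathcal L^{(t)}-\tfrac\eta2\|\nabla\mathcal L^{(t)}\|^2$. Telescoping, together with $\mathcal L\ge0$, gives convergence of the loss, summability of $\|\nabla\mathcal L^{(t)}\|^2$ (hence $\|\nabla\mathcal L^{(t)}\|\to0$), and the min-bound Equation~\eqref{eqn:bp-stationarity}.

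For the expected first decrease, take expectations at $t=0$ to get
\begin{equation*}
\mathbb E[\mathcal L^{(0)}-\mathcal L^{(1)}]\ge\frac\eta2\sum_{\mu,\alpha,l}\operatorname{Var}(\partial_{\alpha_{l,\mu}}\mathcal L),
\end{equation*}
using the zero mean in Equation~\eqref{eqn:bp-lossmean}. Apply Equation~\eqref{eqn:bp-lossbound} to each of the $2L_N+1$ active coordinates per branch and sum over $\mu$. This yields exactly the factor $\eta(2L_N+1)\sum_\mu|a_\mu|^4/(128\sqrt{2(L_N+1)})$ times the bracket in Equation~\eqref{eqn:bp-firstdescent}.
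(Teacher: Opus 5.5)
Your proposal is correct and follows the paper's proof essentially step for step. The common steps are: unit bounds on first and second partials of $r_\mu$ from the product rule; a block-diagonal row-sum bound on $\nabla^2 f_Z$; the smoothness constant $3(2L_N+1)\max_\mu|a_\mu|^2$, made up of the Gauss--Newton term plus $|f_Z-y_i|\leq2$ times $\nabla^2 f_Z$; the descent lemma with telescoping; and summation of Theorem~\ref{thm:bp-loss} over the $2L_N+1$ listed coordinates of each branch.

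The constant $3$ is not an obstacle, because your Step~2 already gives $1+2=3$ with second derivatives bounded by $1$. The tightening to $1/2$ you propose is therefore unnecessary, and it is also false. With $\theta_{l,\mu}=0$ for $l\geq1$ one has $r_\mu(x)=\cos\theta_{0,\mu}$, whose second $\theta_{0,\mu}$-derivative has modulus $1$ at $\theta_{0,\mu}=0$.
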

\begin{proof}
All gates are unitary. A first derivative of a rotation gate has operator norm $1/2$, and a second derivative of a gate, or a mixed derivative at two gates in the same branch, has norm $1/4$. The product rule applied to $r_\mu(x)=\bra0Q_\mu(x)^\dagger ZQ_\mu(x)\ket0$ therefore gives
\begin{equation*}
 |\partial_{\alpha_{l,\mu}}r_\mu(x)|\leq1,\qquad
 |\partial_{\alpha_{l,\mu}}\partial_{\beta_{j,\mu}}r_\mu(x)|\leq1.
\end{equation*}
Spectral averaging preserves these bounds, and Equation~\eqref{eqn:bp-reduction} supplies the factor $|a_\mu|^2$. Mixed second derivatives from distinct branches vanish. Each branch contributes $2L_N+1$ coordinates to the gradient vector, so
\begin{align}\label{eqn:bp-readoutderivativebounds}
 \|\nabla f_Z(\bfU^{(i)})\|^2
 &\leq(2L_N+1)\sum_\mu|a_\mu|^4,\nonumber\\
 \|\nabla^2 f_Z(\bfU^{(i)})\|_{\mathrm{op}}
 &\leq(2L_N+1)\max_\mu|a_\mu|^2.
\end{align}
The Hessian estimate follows from its block diagonal form and the absolute row sum bound within each symmetric block. Since $|f_Z(\bfU^{(i)})-y_i|\leq2$, differentiation of Equation~\eqref{eqn:learning-loss} yields
\begin{align}\label{eqn:bp-smoothness}
 \|\nabla^2\mathcal L\|_{\mathrm{op}}
 &\leq\frac1m\sum_i\left(
       \|\nabla f_Z(\bfU^{(i)})\|^2
       +2\|\nabla^2 f_Z(\bfU^{(i)})\|_{\mathrm{op}}\right)\nonumber\\
 &\leq3(2L_N+1)\max_\mu|a_\mu|^2,
\end{align}
where $\sum_\mu|a_\mu|^4\leq\max_\mu|a_\mu|^2$ uses $\sum_\mu|a_\mu|^2=1$. This bound holds at all angles. Taylor's formula with its integral remainder along the update direction gives
\begin{equation*}
 \mathcal L^{(t+1)}
 \leq\mathcal L^{(t)}-\eta\|\nabla\mathcal L^{(t)}\|^2
       +\frac{3\eta^2(2L_N+1)\max_\mu|a_\mu|^2}{2}
                    \|\nabla\mathcal L^{(t)}\|^2.
\end{equation*}
Equation~\eqref{eqn:bp-stepsize} proves Equation~\eqref{eqn:bp-descent}. Summing it and using $\mathcal L^{(t)}\geq0$ gives
\begin{equation*}
 \sum_{j=0}^{t-1}\|\nabla\mathcal L^{(j)}\|^2
 \leq\frac{2\mathcal L^{(0)}}\eta.
\end{equation*}
This proves Equation~\eqref{eqn:bp-stationarity}. The series remains bounded as $t\to\infty$, so its nonnegative terms tend to zero. The loss is decreasing and bounded below, hence convergent.

Finally, apply Equation~\eqref{eqn:bp-descent} at $t=0$ and average over the initial angles. The expected squared gradient norm is the sum of the coordinate second moments. These equal their variances by Equation~\eqref{eqn:bp-lossmean}. Sum Equation~\eqref{eqn:bp-lossbound} over the $2L_N+1$ listed coordinates in every branch with $a_\mu\neq0$. Branches of zero weight contribute zero. This gives Equation~\eqref{eqn:bp-firstdescent}.
\end{proof}

These statements guarantee loss decrease and convergence to stationarity. They do not determine the limiting loss. A sufficient condition for fitting every training target is a uniform lower bound on the sample Jacobian along the optimization trajectory. This is an additional condition on the data and iterates.

\begin{corollary}[Conditional convergence to zero training loss]\label{cor:bp-fitting}
For the iterates of Proposition~\ref{prop:bp-descent}, let $J^{(t)}$ be the real $m\times d_{\cI}(2L_N+1)$ matrix whose entry in sample row $i$ and parameter column $(\alpha,l,\mu)$, in the ordering of $\nabla\mathcal L$, is
\begin{equation}\label{eqn:bp-jacobian}
 J^{(t)}_{i,(\alpha,l,\mu)}
 =\left.\partial_{\alpha_{l,\mu}}f_Z(\bfU^{(i)})
       \right|_{(\theta,\phi)=(\theta^{(t)},\phi^{(t)})}.
\end{equation}
If there is a constant $\lambda>0$ such that
\begin{equation}\label{eqn:bp-jacobiancondition}
 \frac1mJ^{(t)}(J^{(t)})^{\mathsf T}\succeq\lambda I_m
 \qquad\text{for every }t\geq0,
\end{equation}
then
\begin{equation}\label{eqn:bp-fitting}
 \mathcal L^{(t)}\leq(1-\eta\lambda)^t\mathcal L^{(0)}
 \longrightarrow0.
\end{equation}
\end{corollary}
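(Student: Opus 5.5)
The plan is a Polyak--Łojasiewicz argument: I will bound $\|\nabla\mathcal L^{(t)}\|^2$ below by a multiple of $\mathcal L^{(t)}$ and insert this into the descent inequality of Proposition~\ref{prop:bp-descent}.

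First, I will write the gradient in Jacobian form. Let $e^{(t)}\in\RR^m$ have entries $e^{(t)}_i=f_Z(\bfU^{(i)})-y_i$ at the $t$-th iterate. Equation~\eqref{eqn:bp-lossgradient}, applied coordinatewise over the listed coordinates of $\nabla\mathcal L$, gives
\begin{equation*}
 \nabla\mathcal L^{(t)}=\frac1m(J^{(t)})^{\mathsf T}e^{(t)},
 \qquad
 \mathcal L^{(t)}=\frac1{2m}\|e^{(t)}\|^2.
\end{equation*}
The unused angles $\phi_{L_N,\mu}$ have identically zero derivatives, so omitting them loses nothing.

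Next, the Jacobian condition~\eqref{eqn:bp-jacobiancondition} yields
\begin{equation*}
 \|\nabla\mathcal L^{(t)}\|^2
 =\frac1{m^2}(e^{(t)})^{\mathsf T}J^{(t)}(J^{(t)})^{\mathsf T}e^{(t)}
 \geq\frac{\lambda}{m}\|e^{(t)}\|^2=2\lambda\mathcal L^{(t)}.
\end{equation*}
Substituting into Equation~\eqref{eqn:bp-descent} gives $\mathcal L^{(t+1)}\leq\mathcal L^{(t)}-\eta\lambda\mathcal L^{(t)}=(1-\eta\lambda)\mathcal L^{(t)}$. Induction on $t$ then proves Equation~\eqref{eqn:bp-fitting}.

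The remaining step is to check that $0\leq1-\eta\lambda<1$, so that the geometric bound is meaningful and tends to zero. Positivity of $\eta\lambda$ is immediate. For the upper bound on $\eta\lambda$, I would use the readout derivative bound~\eqref{eqn:bp-readoutderivativebounds}. Each row of $J^{(t)}$ has squared norm at most $(2L_N+1)\sum_\mu|a_\mu|^4\leq(2L_N+1)\max_\mu|a_\mu|^2$. Taking the trace of $\frac1mJ^{(t)}(J^{(t)})^{\mathsf T}$ then gives $\lambda m\leq\tr\bigl(\frac1mJ^{(t)}(J^{(t)})^{\mathsf T}\bigr)\leq(2L_N+1)\max_\mu|a_\mu|^2$. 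Combined with the step size~\eqref{eqn:bp-stepsize}, this gives $\eta\lambda\leq1/(3m)<1$.

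Alternatively, $\mathcal L^{(t+1)}\geq0$ together with $\mathcal L^{(t+1)}\leq(1-\eta\lambda)\mathcal L^{(t)}$ already forces $1-\eta\lambda\geq0$ whenever some $\mathcal L^{(t)}>0$. If $\mathcal L^{(t)}=0$ at some step, the bound holds trivially from then on. This bookkeeping is the only mildly delicate point; the main content is the Polyak--Łojasiewicz inequality obtained from the Jacobian condition.
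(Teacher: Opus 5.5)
Your proposal is correct and follows the paper's proof. You derive the Polyak--\L{}ojasiewicz inequality $\|\nabla\mathcal L^{(t)}\|^2\geq2\lambda\mathcal L^{(t)}$ from the Jacobian condition, substitute it into Equation~\eqref{eqn:bp-descent}, and iterate, using the readout derivative bounds to control $\eta\lambda$. The only difference is cosmetic: you bound $\lambda$ through the trace of $\frac1mJ^{(t)}(J^{(t)})^{\mathsf T}$ and get $\eta\lambda\leq1/(3m)$, whereas the paper uses the operator norm and gets $\eta\lambda\leq1/3$.
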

\begin{proof}
Write $f_Z^{(t)}$ for the prediction evaluated at the $t$th iterate. Equations~\eqref{eqn:bp-lossgradient} and~\eqref{eqn:bp-jacobiancondition} imply
\begin{align*}
 \|\nabla\mathcal L^{(t)}\|^2
 &=\frac1{m^2}\sum_{i,j}
       [f_Z^{(t)}(\bfU^{(i)})-y_i]
       [J^{(t)}(J^{(t)})^{\mathsf T}]_{ij}
       [f_Z^{(t)}(\bfU^{(j)})-y_j]\\
 &\geq\frac\lambda m\sum_i[f_Z^{(t)}(\bfU^{(i)})-y_i]^2
 =2\lambda\mathcal L^{(t)}.
\end{align*}
Substitution into Equation~\eqref{eqn:bp-descent} gives
$\mathcal L^{(t+1)}\leq(1-\eta\lambda)\mathcal L^{(t)}$.
Also, Equation~\eqref{eqn:bp-readoutderivativebounds} bounds the operator norm of $J^{(t)}(J^{(t)})^{\mathsf T}/m$ by $(2L_N+1)\max_\mu|a_\mu|^2$, so $0<\eta\lambda\leq1/3$. Iteration proves Equation~\eqref{eqn:bp-fitting}.
\end{proof}

The condition in Equation~\eqref{eqn:bp-jacobiancondition} is not a consequence of positive coordinate variances at initialization. In particular, all predictions lie in the real span of the $2L_N+1$ sample features $1$, $\operatorname{Re}\tr(\rho_S(U_N^{(i)})^h)$ and $\operatorname{Im}\tr(\rho_S(U_N^{(i)})^h)$ for $1\leq h\leq L_N$. Their parameter derivatives lie in the same span. Hence $\operatorname{rank}J^{(t)}\leq2L_N+1$, and the stated full-row-rank condition cannot hold when $m>2L_N+1$. Establishing the Jacobian bound requires a separate analysis of the chosen task and trajectory.

There is also a direct obstruction to an unconditional fitting claim. Consider two samples with the same $U_N$ and targets $1$ and $-1$. They may differ in $U_1,\ldots,U_{N-1}$ when $N\geq2$. Equation~\eqref{eqn:bp-reduction} forces equal predictions, so at every parameter value,
\begin{equation}\label{eqn:bp-fittingobstruction}
 \mathcal L=\frac14\left[(f_Z-1)^2+(f_Z+1)^2\right]
 =\frac12(f_Z^2+1)\geq\frac12.
\end{equation}
The readout-gradient lower bound still holds. Thus the readout variance, the squared-loss variance, descent to stationarity and convergence to zero training loss are distinct conclusions with distinct assumptions.
\section{Exact scalar synthesis at a fixed query order}\label{sec:sdp}

This appendix gives an exact characterization for scalar targets when the chronological order of the queries is fixed.  A semidefinite program (SDP) determines the smallest normalization when arbitrary input-independent unitaries and finite workspace are allowed.  All definitions needed for this characterization are introduced below.

\subsection{Coefficient histories}

Fix $N\geq1$ scalar inputs $\bfx\in\RR^N$ and a chronological query word $\chi=(j_1,\ldots,j_K)$, where $K\geq0$ and $j_t\in\set{1,\ldots,N}$.  Let $L_j$ be the number of occurrences of $j$ in $\chi$.  The relevant target space is
\begin{equation}\label{eqn:sdp-space}
 \cE_{\bfL}:=\operatorname{span}\setcond{\exp\!\left[\frac{i}{2}(2\bfn-\bfL)\mathbin{\cdot}\bfx\right]}{0\leq n_j\leq L_j\ \text{for every }j}.
\end{equation}
For $P\in\cE_{\bfL}$, write $\Gamma_\chi(P)$ for the infimum of the positive numbers $\lambda$ for which a finite workspace with designated query qubits $B_1,\ldots,B_N$ and input-independent unitaries $V_0,\ldots,V_K$ satisfies
\begin{equation}\label{eqn:sdp-model}
 \bra0 V_KR_z(x_{j_K})_{B_{j_K}}V_{K-1}\cdots V_1R_z(x_{j_1})_{B_{j_1}}V_0\ket0
 =\frac{P(\bfx)}{\lambda}
 \qquad\text{for every }\bfx.
\end{equation}
Here $R_z(x)=\operatorname{diag}(e^{-ix/2},e^{ix/2})$.  The rightmost factor acts first.  There is no intermediate measurement or postselection.  We set $\Gamma_\chi(0)=0$.

Put $z_j=e^{ix_j}$ and remove the common phase of the queries by writing
\begin{equation}\label{eqn:sdp-centered}
 \begin{aligned}
 p(\bfz)&:=e^{i\bfL\cdot\bfx/2}P(\bfx)
       =\sum_{\bfn\in I_K}c_{\bfn}\bfz^{\bfn},
 & \bfz^{\bfn}&:=\prod_{j=1}^Nz_j^{n_j},\\
 I_t&:=\prod_{j=1}^N\set{0,\ldots,\ell_j(t)},
 & \ell_j(t)&:=\abs{\setcond{s\leq t}{j_s=j}}.
 \end{aligned}
\end{equation}
Here $q_t:=|I_t|$, $Q:=q_K=\prod_j(L_j+1)$, and $c=(c_{\bfn})_{\bfn\in I_K}\in\CC^{1\times Q}$.  A variable with $L_j=0$ simply contributes a singleton factor to every $I_t$.  Let $e_{\bfn}$ denote a coefficient-space coordinate vector and let $e_j$ denote the $j$-th coordinate vector of $\ZZ^N$.  For $b\in\set{0,1}$, define the known injections
\begin{equation}\label{eqn:sdp-injections}
 E_{t,b}:\CC^{I_{t-1}}\longrightarrow\CC^{I_t},
 \qquad E_{t,b}e_{\bfn}=e_{\bfn+b e_{j_t}}.
\end{equation}
Each $E_{t,b}$ is an isometry.  Their images can overlap: after repeated queries to one variable, different bit strings can contribute to the same monomial.  We write $X\succeq Y$ when $X-Y$ is positive semidefinite.

In the program below, $s$ is real, each $G_t$ is Hermitian of size $q_t\times q_t$, and each $H_{t,b}$ is Hermitian of size $q_{t-1}\times q_{t-1}$.

\begin{theorem}[Exact compilation at a fixed query order]\label{thm:sdp-primal}
Let $P\in\cE_{\bfL}$ and let $\chi$ be fixed.  A circuit realizes $P/\lambda$ in the sense of Equation~\eqref{eqn:sdp-model}, for $\lambda>0$, if and only if the following program is feasible with $s=\lambda^2$:
\begin{equation}\label{eqn:sdp-primal}
 \begin{aligned}
 \Gamma_\chi(P)^2=\min\quad &s\\
 \text{subject to}\quad
 &G_0=[s],\qquad s\geq0,\\
 &H_{t,0},H_{t,1}\succeq0,\qquad H_{t,0}+H_{t,1}\preceq G_{t-1}
 &&(1\leq t\leq K),\\
 &G_t=\sum_{b=0}^1E_{t,b}H_{t,b}E_{t,b}^{\dagger}
 &&(1\leq t\leq K),\\
 &G_K\succeq c^{\dagger}c.
 \end{aligned}
\end{equation}
The minimum is attained.  For $P\neq0$, a minimizing solution recovers a circuit at scale $\Gamma_\chi(P)$ by Gram factorization and unitary completion.
\end{theorem}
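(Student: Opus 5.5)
The plan is to track, through the circuit, the coefficient vectors that multiply each monomial in the unnormalized output state, and to show that the Gram matrices of these vectors are exactly the feasible variables of the program in Equation~\eqref{eqn:sdp-primal}.

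\textbf{Necessity.} Let a circuit realize $P/\lambda$, and multiply it by $\lambda$. After the queries $1,\ldots,t$ and the unitary $V_t$, write $\lambda V_t R_z\cdots V_0\ket0$, up to the common phase $e^{-i\ell(t)\cdot\bfx/2}$, as $\sum_{\bfn\in I_t}\bfz^{\bfn}\ket{\psi_{t,\bfn}}$. This is possible because each $R_z(x_{j_t})$ equals $e^{-ix_{j_t}/2}\operatorname{diag}(1,z_{j_t})$. Set $(G_t)_{\bfn\bfm}=\braakett{}{}$-free inner products $\langle\psi_{t,\bfn}|\psi_{t,\bfm}\rangle$, with the conjugation convention chosen so that $G_t$ is Hermitian and PSD. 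Then $G_0=[\lambda^2]$.

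At query $t$, split each vector with the projector $\Pi_b$ onto bit $b$ of $B_{j_t}$, applied after $V_{t-1}$. Define $H_{t,b}$ as the Gram matrix of $\{\Pi_b V_{t-1}\ket{\psi_{t-1,\bfn}}\}_{\bfn\in I_{t-1}}$. Both $H_{t,b}$ are PSD, and $H_{t,0}+H_{t,1}=G_{t-1}$ because $\Pi_0+\Pi_1=I$ and $V_{t-1}$ is unitary. The diagonal query then sends the bit-$b$ component of $\bfn$ to the monomial $\bfn+be_{j_t}$, so the new vectors are $\ket{\psi_{t,\bfm}}=\sum_{b}\sum_{\bfn:\bfn+be_{j_t}=\bfm}\Pi_bV_{t-1}\ket{\psi_{t-1,\bfn}}$. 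The cross terms between $b=0$ and $b=1$ vanish because the two pieces have orthogonal supports. Hence $G_t=\sum_bE_{t,b}H_{t,b}E_{t,b}^\dagger$.

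Finally, $V_K$ is applied and $\bra0$ is taken. The output is $\sum_\bfn\bfz^\bfn\bra0V_K\ket{\psi_{K,\bfn}}$, and by linear independence of the monomials this gives $c_\bfn=\bra{0}V_K\ket{\psi_{K,\bfn}}$. Writing $w=V_K^\dagger\ket0$, a unit vector, we get $c^\dagger c=\Psi^\dagger\ketbra ww\Psi\preceq\Psi^\dagger\Psi=G_K$, where $\Psi$ has columns $\ket{\psi_{K,\bfn}}$. The program is therefore feasible with $s=\lambda^2$. The inequality constraint $H_{t,0}+H_{t,1}\preceq G_{t-1}$ is a relaxation of this equality and is still satisfied.

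\textbf{Sufficiency.} This is the main obstacle. Take a feasible point with $s=\lambda^2$. I would enlarge the constraints to equalities by adding slack: set $S_t=G_{t-1}-H_{t,0}-H_{t,1}\succeq0$, and route $S_t$ into a garbage branch. Choosing one of the two $H$'s larger is not possible, because it would change $G_t$; instead, realize the slack with an extra workspace register whose contributions to later $G$'s are discarded. Concretely, the Gram matrix of the actual vectors then dominates the program's $G_t$ in the PSD order, and I would verify that domination propagates. This works because if actual $\tilde G_{t-1}\succeq G_{t-1}$, then $\tilde G_{t-1}=G_{t-1}+R$ and $R$ can be absorbed into the slack. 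At the end, $\tilde G_K\succeq c^\dagger c$ still holds.

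The construction proceeds inductively. Given vectors with Gram matrix exactly $G_{t-1}$ (tracking only the relevant part, with the rest orthogonal garbage), take Gram factorizations $H_{t,b}=A_b^\dagger A_b$ and $S_t=A_2^\dagger A_2$, and form the vectors $\ket0\ox A_0e_\bfn+\ket1\ox A_1e_\bfn+\ket{\text{garbage}}\ox A_2e_\bfn$. These have Gram matrix $G_{t-1}$. By the standard fact that two families with equal Gram matrices are related by a partial isometry, extendable to a unitary on a large enough finite workspace, there is an input-independent $V_{t-1}$ mapping the old vectors to these. The garbage must be kept orthogonal to future bit-$0/1$ components, so it lives on a separate flag that is never returned to $\ket0$; one must check that it cannot contaminate the final amplitude $\bra0\cdot$, which holds if the final projection requires the flag to be $0$.

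At the end, $G_K\succeq c^\dagger c$ yields a factorization $G_K=c^\dagger c+M^\dagger M$. Choose $V_K$ mapping $\ket{\psi_{K,\bfn}}$ to $\overline{c_\bfn}$-weighted $\ket0\oplus Me_\bfn$ on the orthogonal complement, with the convention fixed consistently. The output amplitude is then $p(\bfz)/\lambda$ times the removed phase, which gives $P/\lambda$ as required. Finite dimensions are guaranteed because every Gram matrix is $Q\times Q$ or smaller.

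\textbf{Attainment.} The feasible set with $s$ bounded by any feasible value is compact: $\tr G_t\le\tr G_{t-1}\le s$ since the $E_{t,b}$ are isometries, and all $H_{t,b}\preceq G_{t-1}$. Hence the minimum is attained, and applying sufficiency to a minimizer yields a circuit at scale $\Gamma_\chi(P)$.
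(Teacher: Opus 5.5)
Your necessity direction matches the paper's argument. You track coefficient vectors after each query, take branch Gram matrices $H_{t,b}$ from the projectors onto the eigenspaces of $B_{j_t}$, kill cross terms by orthogonal supports, and read off $G_K\succeq c^\dagger c$ from the selected output row. In sufficiency, however, you set aside the simple fix for a wrong reason, and your replacement has a loose end. You say the slack $S_t=G_{t-1}-H_{t,0}-H_{t,1}$ cannot go into one branch because that would change $G_t$. Changing $G_t$ upward is harmless, and it is exactly what the paper does. It sets $R_t=G'_{t-1}-H_{t,0}-H_{t,1}$, $H'_{t,0}=H_{t,0}+R_t$ and $H'_{t,1}=H_{t,1}$, so $G'_t=G_t+E_{t,0}R_tE_{t,0}^\dagger\succeq G_t$. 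Inductively $R_{t+1}\succeq G_t-H_{t+1,0}-H_{t+1,1}\succeq0$, and $G'_K\succeq G_K\succeq c^\dagger c$ keeps the terminal constraint. This is the ``domination propagates'' observation you make yourself. With this lossless history, each $V_{t-1}$ is a plain Gram-matching unitary placing the two branch factors in the two eigenspaces of $B_{j_t}$. No garbage register is needed, and this is what gives the bound $D\geq2Q$ of Proposition~\ref{prop:sdp-workspace}.

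Your garbage route can be completed, but it fails as written at the step you leave as ``one must check''. The Gram-matching partial isometry on the relevant span extends to \emph{some} unitary. Nothing in your construction stops that extension from rotating earlier garbage back into the flag-$0$ sector. That would break the invariant that the relevant vectors have Gram matrix exactly $G_{t-1}$, and could leak into the selected output amplitude. Reading ``never returned'' as invariance of the flag-$1$ subspace does not help: a unitary mapping the flag-$1$ subspace into itself maps the flag-$0$ subspace onto itself, so it cannot create new garbage. Each step's garbage $A_2e_\bfn$ must instead enter fresh, unoccupied dimensions, for instance a flag with $K+1$ levels with level $t$ receiving step $t$'s garbage. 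Then $V_{t-1}$ can act as the identity on the already occupied levels.

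Your attainment argument also presupposes a feasible point. The paper gives a strictly feasible one: take $H_{t,b}=\delta G_{t-1}$ with $0<\delta<1/2$, so every $G_t$ is positive definite because the images of $E_{t,0}$ and $E_{t,1}$ cover $I_t$, then scale $s$ up. Finally, note $s^\ast\geq\tr G_K\geq\tr(c^\dagger c)>0$ for $P\neq0$, so a minimizer gives an admissible $\lambda=\sqrt{s^\ast}>0$.
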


\begin{proof}
\textit{Extracting a feasible history.}
Suppose first that Equation~\eqref{eqn:sdp-model} holds at scale $\lambda$.  After the first $t$ queries, and before the following known unitary, expand the scaled and centered state in its monomials.  Its coefficient matrix is denoted $A_t$, with one physical-workspace row per basis state and one column per element of $I_t$; set $A_0=\lambda\ket0$.  Let
\begin{equation}\label{eqn:sdp-extract}
 B_{t,b}:=\Pi_{j_t,b}V_{t-1}A_{t-1},
 \qquad H_{t,b}:=B_{t,b}^{\dagger}B_{t,b},
 \qquad G_t:=A_t^{\dagger}A_t,
\end{equation}
where $\Pi_{j,b}$ projects onto the $b$ eigenspace of query qubit $B_j$.  Orthogonality of the two eigenspaces gives $H_{t,0}+H_{t,1}=G_{t-1}$.  The query shifts the two branches according to Equation~\eqref{eqn:sdp-injections}, so
\begin{equation}\label{eqn:sdp-update}
 A_t=\sum_{b=0}^1B_{t,b}E_{t,b}^{\dagger},
 \qquad
 G_t=\sum_{b=0}^1E_{t,b}H_{t,b}E_{t,b}^{\dagger}.
\end{equation}
After $V_K$, the selected output row is $c$ by Fourier uniqueness.  The remaining output rows have some coefficient matrix $R$, whence $G_K=c^{\dagger}c+R^{\dagger}R$.  This gives a feasible point with $s=\lambda^2$.

\textit{Removing branch slack.}
Conversely, start from any feasible point with $s=\lambda^2>0$.  Define $G'_0=G_0$ and, recursively,
\begin{equation}\label{eqn:sdp-lossless-recovery}
 \begin{aligned}
 R_t&:=G'_{t-1}-H_{t,0}-H_{t,1},\\
 H'_{t,0}&:=H_{t,0}+R_t,\qquad H'_{t,1}:=H_{t,1},\\
 G'_t&:=\sum_{b=0}^1E_{t,b}H'_{t,b}E_{t,b}^{\dagger}.
 \end{aligned}
\end{equation}
Inductively, $G'_{t-1}\succeq G_{t-1}$, hence $R_t\succeq0$, $G'_t\succeq G_t$, and $H'_{t,0}+H'_{t,1}=G'_{t-1}$.  In particular, the terminal condition remains true.  We may therefore replace the original history by this lossless one and drop the primes.

\textit{Recovering the gates.}
Choose an allowable workspace of dimension $D=2^Nd_A\geq2Q$.  Factor each $H_{t,b}=B_{t,b}^{\dagger}B_{t,b}$ and place the rows of $B_{t,b}$ in the $b$ eigenspace of query $j_t$.  This is possible because each factor has rank at most $Q\leq D/2$.  The two factors have orthogonal physical supports and, by losslessness,
\begin{equation}\label{eqn:sdp-gram-match}
 (B_{t,0}+B_{t,1})^{\dagger}(B_{t,0}+B_{t,1})=G_{t-1}.
\end{equation}
If two matrices have the same Gram matrix, the assignment from the columns of one to the corresponding columns of the other is a well-defined isometry on their column spans.  It extends to a unitary on the finite workspace.  This argument also applies when either Gram matrix is rank deficient.  Beginning with $A_0=\lambda\ket0$, use this unitary as $V_{t-1}$, apply the prescribed query, and use Equation~\eqref{eqn:sdp-update} to obtain a coefficient matrix with Gram $G_t$.  Iterating gives $A_K$.

Finally, factor $G_K-c^{\dagger}c=R^{\dagger}R$ using at most $Q$ rows.  The matrix obtained by stacking the row $c$ above $R$ has Gram matrix $G_K$ and fits in $D\geq2Q$ rows.  One final unitary maps $A_K$ to this matrix, with $c$ as the selected output row.  Restoring the common phase proves Equation~\eqref{eqn:sdp-model}.

\textit{Attainment.}
For $0<\delta<1/2$, set $G_0=[s]$ with $s>0$, take $H_{t,b}=\delta G_{t-1}$, and use the update in Equation~\eqref{eqn:sdp-primal}.  The union of the images of $E_{t,0}$ and $E_{t,1}$ covers every coordinate of $I_t$, so every $G_t$ is positive definite.  Scaling $s$ makes $G_K-c^{\dagger}c$ positive definite as well.  Thus the program has a strictly feasible point.  Moreover,
\begin{equation}\label{eqn:sdp-compact}
 \tr G_t=\tr(H_{t,0}+H_{t,1})\leq\tr G_{t-1}\leq s.
\end{equation}
Positive semidefiniteness and these trace bounds make every bounded objective level set compact, proving that the minimum is attained.  If $P\neq0$, then $s\geq\tr(c^{\dagger}c)>0$, so the recovery argument applies to a minimizer.  If $P=0$, the optimum is zero, while the same recovery works at every positive scale.  When $K=0$, the program is simply $s\geq|c|^2$, and a two-dimensional unitary on one query qubit, with all other designated qubits as spectators, realizes every positive feasible scale.
\end{proof}

\subsection{Dual certificates and physical workspace}

\begin{theorem}[Matching dual certificate]\label{thm:sdp-dual}
For the coefficient row $c$ and the injections in Equation~\eqref{eqn:sdp-injections}, the dual of Equation~\eqref{eqn:sdp-primal} is
\begin{equation}\label{eqn:sdp-dual}
 \begin{aligned}
 \Gamma_\chi(P)^2=\max\quad &cY_Kc^{\dagger}\\
 \text{subject to}\quad
 &Y_0=[1],\qquad Y_t\succeq0 &&(1\leq t\leq K),\\
 &E_{t,b}^{\dagger}Y_tE_{t,b}\preceq Y_{t-1}
 &&(1\leq t\leq K,\ b\in\set{0,1}).
 \end{aligned}
\end{equation}
The maximum is attained.  Thus every feasible sequence $(Y_0,\ldots,Y_K)$ proves $\Gamma_\chi(P)\geq\sqrt{cY_Kc^{\dagger}}$.
\end{theorem}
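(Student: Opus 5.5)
We derive the dual of Equation~\eqref{eqn:sdp-primal} by Lagrangian duality and use strict feasibility of the primal, established in the attainment part of Theorem~\ref{thm:sdp-primal}, to get strong duality and dual attainment.

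\textbf{Weak duality.} This direction is the certificate statement and is elementary. Take a primal feasible point $(s,G_t,H_{t,b})$ and a dual feasible sequence $(Y_t)$. Since $G_K\succeq c^\dagger c$ and $Y_K\succeq0$, we have $cY_Kc^\dagger\le \tr(Y_KG_K)$. The update gives
\begin{equation*}
 \tr(Y_tG_t)=\sum_b\tr(E_{t,b}^\dagger Y_tE_{t,b}H_{t,b})\le\sum_b\tr(Y_{t-1}H_{t,b})=\tr\bigl(Y_{t-1}(H_{t,0}+H_{t,1})\bigr)\le\tr(Y_{t-1}G_{t-1}).
\end{equation*}
The first inequality uses $H_{t,b}\succeq0$ together with $E_{t,b}^\dagger Y_tE_{t,b}\preceq Y_{t-1}$. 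The last uses $Y_{t-1}\succeq0$ together with $H_{t,0}+H_{t,1}\preceq G_{t-1}$; for $t=1$, $Y_0=[1]\succeq0$. Telescoping down to $\tr(Y_0G_0)=s$ gives $cY_Kc^\dagger\le s$ for every pair of feasible points. Hence the dual supremum is at most $\Gamma_\chi(P)^2$, and every feasible $Y$ certifies $\Gamma_\chi(P)\ge\sqrt{cY_Kc^\dagger}$.

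\textbf{Strong duality and dual attainment.} We form the Lagrangian with multipliers $Y_K\succeq0$ for $G_K\succeq c^\dagger c$, $Y_{t-1}\succeq0$ for $G_{t-1}-H_{t,0}-H_{t,1}\succeq0$, and $Z_{t,b}\succeq0$ for $H_{t,b}\succeq0$. The equality constraints defining $G_t$ are used to eliminate $G_t$, or equivalently get free Hermitian multipliers. Minimizing over the free variables forces the following conditions.
\begin{itemize}
 \item The coefficient of $s$ vanishes, which gives the normalization $Y_0=[1]$.
 \item Stationarity in $H_{t,b}$ gives $Y_{t-1}-E_{t,b}^\dagger Y_tE_{t,b}=Z_{t,b}\succeq0$.
 \item The constant term is $cY_Kc^\dagger$.
\end{itemize}
This reproduces Equation~\eqref{eqn:sdp-dual}. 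The careful bookkeeping is the main thing to check: each $G_{t-1}$ appears both in the inequality it bounds and in the equality defining it. I would substitute the equality so that each $Y_t$ serves as the single multiplier tied to $G_t$, which makes $G_t$ disappear with matching signs.

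By the attainment argument of Theorem~\ref{thm:sdp-primal}, the primal has a strictly feasible point: all $G_t\succ0$, $H_{t,b}=\delta G_{t-1}\succ0$ with slack $(1-2\delta)G_{t-1}\succ0$, and $G_K-c^\dagger c\succ0$. The primal optimum is finite, since it is bounded below by $0$. Slater's condition for conic programs then gives zero duality gap and attainment of the \emph{dual} optimum. Combined with primal attainment from Theorem~\ref{thm:sdp-primal}, both values equal $\Gamma_\chi(P)^2$.

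\textbf{Main obstacle.} The delicate point is applying Slater correctly when $G_t$ is defined by equality constraints, so that strict feasibility is needed only in the inequality cones. I would handle this by eliminating $G_t$ explicitly, which leaves a program in the variables $(s,H_{t,b})$ with only conic inequalities. Strict feasibility of that reduced program is exactly the point constructed above. In the degenerate case $K=0$, the dual reduces to $\max cY_0c^\dagger=|c|^2$, which matches the primal value directly.
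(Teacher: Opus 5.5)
Your proposal is correct and follows essentially the paper's proof. Both eliminate the Gram updates to form the Lagrangian (the paper uses explicit slacks $S_t, T$ where you use multipliers $Z_{t,b}$), prove weak duality by the same telescoping trace chain $cY_Kc^\dagger\le\tr(Y_KG_K)\le\cdots\le\tr(Y_0G_0)=s$, and obtain strong duality and dual attainment from Slater's theorem using the strictly feasible point of Theorem~\ref{thm:sdp-primal}, with $K=0$ handled directly. The one step you leave implicit is that the constraint $s\geq0$ is redundant, since it follows from $H_{1,b}\succeq0$ and $H_{1,0}+H_{1,1}\preceq[s]$; this redundancy is what lets you treat $s$ as free and set its coefficient to zero to get $Y_0=[1]$.
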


\begin{proof}
We first identify the displayed program as the Lagrange dual.  Suppose $K\geq1$, introduce positive semidefinite slacks $S_t:=G_{t-1}-H_{t,0}-H_{t,1}$ and $T:=G_K-c^{\dagger}c$, and substitute the Gram updates.  The constraint $s\geq0$ is then redundant, so $s$ may be minimized as a free scalar.  With $\langle A,B\rangle:=\tr(AB)$, a backward choice of Hermitian multipliers gives the Lagrangian
\begin{align*}
 \mathcal L={}&s-\left\langle Y_0,[s]-H_{1,0}-H_{1,1}-S_1\right\rangle\\
 &-\sum_{t=2}^K\left\langle Y_{t-1},
     \sum_{b=0}^1E_{t-1,b}H_{t-1,b}E_{t-1,b}^{\dagger}
     -H_{t,0}-H_{t,1}-S_t\right\rangle\\
 &-\left\langle Y_K,
     \sum_{b=0}^1E_{K,b}H_{K,b}E_{K,b}^{\dagger}-c^{\dagger}c-T\right\rangle .
\end{align*}
Minimization over $s$, the positive semidefinite $H_{t,b}$, and the slacks is finite exactly when $Y_0=[1]$, $Y_t\succeq0$, and $E_{t,b}^{\dagger}Y_tE_{t,b}\preceq Y_{t-1}$.  The remaining constant is $cY_Kc^{\dagger}$, which proves that Equation~\eqref{eqn:sdp-dual} is the Lagrange dual.  For $K=0$, the same dual follows directly from $s\geq|c|^2$.

For any primal and dual feasible points, positivity of $Y_K$ and the terminal constraint give $cY_Kc^{\dagger}\leq\tr(G_KY_K)$.  Repeatedly applying the update and the two compression inequalities gives
\begin{equation}\label{eqn:sdp-weak-duality}
 \tr(G_tY_t)
 \leq\tr(G_{t-1}Y_{t-1})
 \leq\cdots\leq\tr(G_0Y_0)=s.
\end{equation}
This proves weak duality.  The strictly feasible primal point constructed in the proof of Theorem~\ref{thm:sdp-primal}, together with finite primal optimum, gives strong duality and dual attainment by Slater's theorem.  For $K=0$, both programs have value $|c|^2$ directly.
\end{proof}

For every $\bfz$ on the unit torus, let $m_t(\bfz)=(\bfz^{\bfn})_{\bfn\in I_t}$.  The choice $Y_t=m_t(\bfz)m_t(\bfz)^{\dagger}$ is dual feasible because $E_{t,b}^{\dagger}m_t(\bfz)=z_{j_t}^bm_{t-1}(\bfz)$.  It gives $cY_Kc^{\dagger}=|P(\bfx)|^2$, and hence $\Gamma_\chi(P)\geq\norm{P}_\infty$.  Other dual histories can give a stricter lower bound.

\begin{proposition}[Exact fixed-workspace criterion]\label{prop:sdp-workspace}
Fix a scale $\lambda>0$ and a workspace of dimension $D=2^Nd_A$, consisting of the $N$ designated query qubits and an auxiliary register of dimension $d_A$.  A circuit in this workspace realizes $P/\lambda$ if and only if Equation~\eqref{eqn:sdp-primal} has a feasible history with $G_0=[\lambda^2]$, with the branch inequalities replaced by the lossless equalities
\begin{equation}\label{eqn:sdp-lossless}
 H_{t,0}+H_{t,1}=G_{t-1}\qquad(1\leq t\leq K),
\end{equation}
and with
\begin{equation}\label{eqn:sdp-rank}
 \operatorname{rank}H_{t,b}\leq D/2\quad\text{for every }t,b,
 \qquad
 \operatorname{rank}(G_K-c^{\dagger}c)\leq D-1.
\end{equation}
Every feasible unrestricted normalization can be realized, for example, in a qubit workspace of dimension
\begin{equation}\label{eqn:sdp-workspacebound}
 D=2^{\max\set{N,\ceil{\log_2(2Q)}}}.
\end{equation}
\end{proposition}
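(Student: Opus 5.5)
The plan is to read both directions off the proof of Theorem~\ref{thm:sdp-primal}, keeping track of where the physical workspace dimension $D$ enters.

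\emph{Necessity.} Suppose a circuit in a workspace of dimension $D$ realizes $P/\lambda$. Build the history $A_t$, $B_{t,b}$, $H_{t,b}=B_{t,b}^\dagger B_{t,b}$, $G_t=A_t^\dagger A_t$ exactly as in Equation~\eqref{eqn:sdp-extract}. Because the circuit is unitary and the two eigenspaces of the query qubit are orthogonal, the equalities in Equation~\eqref{eqn:sdp-lossless} hold exactly, not just as inequalities. Each $B_{t,b}=\Pi_{j_t,b}V_{t-1}A_{t-1}$ has its rows supported in the $b$ eigenspace of $B_{j_t}$, which has dimension $D/2$. Hence $\operatorname{rank}H_{t,b}\le D/2$. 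At the end, $G_K-c^\dagger c=R^\dagger R$, where $R$ collects the $D-1$ unselected output rows, so this rank is at most $D-1$.

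\emph{Sufficiency.} Run the recovery step of Theorem~\ref{thm:sdp-primal} inside the given workspace, and skip the slack-removal step because losslessness is now assumed. The rank bound lets us factor $H_{t,b}=B_{t,b}^\dagger B_{t,b}$ with at most $D/2$ rows. We place those rows in the $b$ eigenspace of $B_{j_t}$; that eigenspace has dimension exactly $D/2$, so this fits even when $j_t$ is any of the $N$ qubits. Losslessness gives Equation~\eqref{eqn:sdp-gram-match}. Since $A_{t-1}$ and $B_{t,0}+B_{t,1}$ have the same Gram matrix, there is an isometry between their column spans, and it extends to a unitary $V_{t-1}$ on the $D$-dimensional space. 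The rank condition on $G_K-c^\dagger c$ lets us factor it as $R^\dagger R$ with $R$ having at most $D-1$ rows. Stacking $c$ above $R$ then fits in $D$ rows, and the final unitary $V_K$ comes from the same Gram-matching argument. Restoring the common phase $e^{-i\bfL\cdot\bfx/2}$ finishes this direction.

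\emph{Workspace bound.} Let $\lambda$ be any feasible scale of the unrestricted program. The slack-removal step of Theorem~\ref{thm:sdp-primal} turns a feasible history into a lossless one with the same $s=\lambda^2$. Every $H_{t,b}$ is $q_{t-1}\times q_{t-1}$, so its rank is at most $q_{t-1}\le Q$, and $\operatorname{rank}(G_K-c^\dagger c)\le Q$. The rank conditions therefore hold as soon as $D/2\ge Q$ and $D-1\ge Q$; the first implies the second when $Q\ge1$. The workspace must also contain the $N$ designated qubits, so we need a power of two $D\ge\max\{2^N,2Q\}$, which gives Equation~\eqref{eqn:sdp-workspacebound}. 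The auxiliary dimension $d_A=D/2^N$ is then a nonnegative power of two.

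\emph{Edge cases.} These need care but are not the main obstacle. When $K=0$ we need $|c|\le\lambda$ and $\operatorname{rank}(\lambda^2-|c|^2)\le D-1$, which holds since $D\ge2$. When $P=0$ and $\lambda>0$, the construction goes through unchanged. The one real check is necessity: we must use that the rows of $B_{t,b}$ lie in a $D/2$-dimensional subspace, rather than just counting columns, to get the rank bound $D/2$.
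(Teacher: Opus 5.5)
Your proposal is correct and follows the paper's proof step for step. Necessity comes from extracting a lossless history whose branch rows lie in the $D/2$-dimensional query eigenspaces; sufficiency runs the Gram-factorization and unitary-completion recovery inside the given workspace; and the workspace bound follows because every relevant rank is at most $Q$. You are slightly more explicit than the paper in invoking the slack-removal step, which turns an unrestricted feasible point into a lossless one before the criterion is applied; the paper leaves this implicit in its reference to the recovery construction of Theorem~\ref{thm:sdp-primal}.
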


\begin{proof}
A circuit gives a lossless history by Equation~\eqref{eqn:sdp-extract}.  Each branch factor lies in a $D/2$-dimensional query eigenspace, proving the first rank bound.  Removing the selected output row leaves $D-1$ rows with Gram matrix $G_K-c^{\dagger}c$, proving the second.

Conversely, the branch rank bounds place the factors in the two physical eigenspaces, Equation~\eqref{eqn:sdp-lossless} gives the Gram equality needed for each unitary completion, and the terminal rank bound places the residual below the selected output row.  This is exactly the recovery construction in the proof of Theorem~\ref{thm:sdp-primal}, now within the specified dimension.  All relevant ranks are at most $Q$, so every permitted $D\geq2Q$ suffices; Equation~\eqref{eqn:sdp-workspacebound} is one such choice.
\end{proof}

The unrestricted characterization is a convex SDP.  The fixed-workspace statement adds the rank constraints in Equation~\eqref{eqn:sdp-rank}, so it is an exact nonconvex feasibility criterion rather than another SDP.

\subsection{An order-dependent normalization}

The query counts alone do not determine the optimum.  Let $H=2^{-1/2}\begin{psmallmatrix}1&1\\1&-1\end{psmallmatrix}$ and $\ket+=2^{-1/2}(\ket0+\ket1)$, and define
\begin{equation}\label{eqn:sdp-orderfunction}
 \begin{aligned}
 f(a,b,c)&:=\bra+R_z(c)H R_z(b)H R_z(a)\ket+\\
 &=\cos(a/2)\cos(c/2)e^{-ib/2}
   -\sin(a/2)\sin(c/2)e^{ib/2}.
 \end{aligned}
\end{equation}

\begin{proposition}[Equal query counts can require different normalizations]\label{prop:sdp-order}
For the chronological words $(a,b,c)$ and $(a,c,b)$, each with one query to each variable,
\begin{equation}\label{eqn:sdp-ordergap}
 \norm{f}_\infty=\Gamma_{(a,b,c)}(f)=1,
 \qquad
 \Gamma_{(a,c,b)}(f)=\sqrt{2},
 \qquad
 \norm{f}_{c,1}=2.
\end{equation}
The separation holds even with unrestricted finite workspace.
\end{proposition}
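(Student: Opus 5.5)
We would prove the four equalities in \eqref{eqn:sdp-ordergap} in the order coefficient norm, supremum norm, the word $(a,b,c)$, and finally the word $(a,c,b)$, which carries the real content.

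\textbf{The coefficient norm.} Expand \eqref{eqn:sdp-orderfunction} with
\begin{equation*}
 \cos(a/2)\cos(c/2)=\tfrac14\sum_{\pm,\pm}e^{\pm ia/2}e^{\pm ic/2},
 \qquad
 \sin(a/2)\sin(c/2)=-\tfrac14\sum_{\pm,\pm}(\pm1)(\pm1)e^{\pm ia/2}e^{\pm ic/2}.
\end{equation*}
The first product is multiplied by $e^{-ib/2}$ and the second by $e^{ib/2}$. The two groups therefore contribute to disjoint monomials, giving eight monomials of modulus $1/4$ each, so $\norm{f}_{c,1}=2$.

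\textbf{The supremum norm and the word $(a,b,c)$.} By Cauchy--Schwarz,
\begin{equation*}
 \abs{f}\leq\abs{\cos(a/2)\cos(c/2)}+\abs{\sin(a/2)\sin(c/2)}\leq1,
\end{equation*}
with equality at $a=c=0$, so $\norm{f}_\infty=1$. For the word $(a,b,c)$, the defining expression $\bra+R_z(c)HR_z(b)HR_z(a)\ket+$ is already a circuit of the form \eqref{eqn:sdp-model} at scale $1$. We only need to absorb the preparation of $\ket+$ and the projection onto $\bra+$ into $V_0$ and $V_K$, and to use one designated qubit per variable with swaps in between. Hence $\Gamma_{(a,b,c)}(f)\leq1$. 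The torus dual certificate noted after Theorem~\ref{thm:sdp-dual} gives $\Gamma_\chi(f)\geq\norm{f}_\infty=1$ for every word, which closes this case.

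\textbf{Upper bound for $(a,c,b)$.} We would give an explicit circuit at scale $\sqrt2$. Start with $B_a$ and $B_c$ in $\ket+$, apply $R_z(a)$ and $R_z(c)$, then a Hadamard on each. The resulting state has amplitude $\cos(a/2)\cos(c/2)$ on $\ket{00}$ and $-\sin(a/2)\sin(c/2)$ on $\ket{11}$, plus other components. A fixed unitary then copies the parity of $\ket{00}$ versus $\ket{11}$ onto $B_b$, so that $\ket{00}$ has $B_b=0$ and $\ket{11}$ has $B_b=1$. After the query $R_z(b)$, the final unitary maps $(\ket{00}\ket0+\ket{11}\ket1)/\sqrt2$ to $\ket0$. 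Reading out gives $f/\sqrt2$, possibly after a fixed phase correction placed in the last gate.

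\textbf{Lower bound for $(a,c,b)$.} This is the main obstacle, and we would argue directly from the circuit rather than searching for an SDP dual. Since $b$ is queried last, write the state before that query as $\lambda\ket{\psi(a,c)}$ with $\norm{\psi}=1$, and write the final output as $\bra0V_KR_z(b)$. Then
\begin{equation*}
 f/\lambda=e^{-ib/2}\langle w_0|\psi\rangle+e^{ib/2}\langle w_1|\psi\rangle,
 \qquad
 w_\beta=\Pi_{b,\beta}V_K^\dagger\ket0,
\end{equation*}
where $\Pi_{b,\beta}$ projects onto the $\beta$ eigenspace of $B_b$. The vectors $w_0,w_1$ are orthogonal and satisfy $p_0+p_1=1$ with $p_\beta=\norm{w_\beta}^2$. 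Fourier uniqueness in $b$ gives
\begin{equation*}
 \lambda\langle w_0|\psi\rangle=\cos(a/2)\cos(c/2),
 \qquad
 \lambda\langle w_1|\psi\rangle=-\sin(a/2)\sin(c/2),
\end{equation*}
and Cauchy--Schwarz then yields $\abs{g_\beta}^2\leq\lambda^2p_\beta$, where $g_0=\cos(a/2)\cos(c/2)$ and $g_1=-\sin(a/2)\sin(c/2)$. Evaluating at $a=c=0$, where $g_0=1$, and at $a=c=\pi$, where $g_1=-1$, gives $\lambda^2\geq\max\set{1/p_0,1/(1-p_0)}\geq2$.

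Because the workspace dimension never enters this argument, the separation holds with unrestricted finite workspace. The points to check carefully are that $V_K$ is independent of $a$ and $c$, so that the same $w_\beta$ serves both evaluation points, and the phase bookkeeping in the upper-bound circuit.
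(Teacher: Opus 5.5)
Your proof is correct and matches the paper's. The same two-qubit construction gives scale $\sqrt{2}$ (your CNOT onto $B_b$ replaces routing the first qubit through the $b$ query, and no phase correction is needed), and the same cut before the final $b$ query gives $\lambda^2\geq 2$ via Fourier uniqueness in $b$, Cauchy--Schwarz against the $(a,c)$-independent vectors $w_0,w_1$, and $\norm{w_0}^2+\norm{w_1}^2=1$. The paper additionally certifies the lower bound with an explicit SDP dual history, which your direct argument does not need; one small slip is that the state before the $b$ query is the unit vector $\ket{\psi(a,c)}$ itself, as your formula for $f/\lambda$ assumes, not $\lambda\ket{\psi(a,c)}$.
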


\begin{proof}
The first expression in Equation~\eqref{eqn:sdp-orderfunction} is a unitary matrix element.  It equals one at $a=b=c=0$, so it proves $\norm{f}_\infty=\Gamma_{(a,b,c)}(f)=1$.  For the word $(a,c,b)$, initialize two qubits in $\ket{00}$, apply $R_x(a)=HR_z(a)H$ to the first and $R_x(c)$ to the second, then apply $R_z(b)$ to the first.  Projection onto $(\ket{00}+\ket{11})/\sqrt2$ gives $f/\sqrt2$.  Known SWAP gates can route the active qubits through the designated query registers.

For the converse, cut an arbitrary normalized circuit immediately before its final, $b$ query.  Absorb all earlier known gates into a unit vector $\psi(a,c)$, and write the final row as
\begin{equation}\label{eqn:sdp-orderrow}
 r(b)=r_-e^{-ib/2}+r_+e^{ib/2},
 \qquad
 \norm{r_-}_2^2+\norm{r_+}_2^2=1.
\end{equation}
If this circuit realizes $f/\lambda$, Fourier uniqueness in $b$ gives
\begin{align*}
 r_-\psi(a,c)&=\frac{\cos(a/2)\cos(c/2)}{\lambda},
 &r_+\psi(a,c)&=-\frac{\sin(a/2)\sin(c/2)}{\lambda}.
\end{align*}
Both coefficient functions have supremum norm one.  Cauchy--Schwarz and $\norm{\psi(a,c)}_2=1$ therefore give $\norm{r_-}_2,\norm{r_+}_2\geq1/\lambda$.  Equation~\eqref{eqn:sdp-orderrow} implies $\lambda^2\geq2$.

The SDP also has an exact dual certificate for this lower bound.  For the word $(a,c,b)$, centering gives
\begin{equation}\label{eqn:sdp-ordercentered}
 p_f(\bfz):=e^{i(a+b+c)/2}f(a,b,c)
 =\frac14\bigl[(1+z_a)(1+z_c)+z_b(1-z_a)(1-z_c)\bigr].
\end{equation}
Its coefficient row is $c_f=w^{\mathsf T}/4$, where
\begin{equation}\label{eqn:sdp-orderwitness}
 w_{n_a,n_b,n_c}=(-1)^{n_b(n_a+n_c)}
 \qquad ((n_a,n_b,n_c)\in\set{0,1}^3).
\end{equation}
Thus $\norm{f}_{c,1}=2$.  Define
\begin{equation}\label{eqn:sdp-orderdual}
 \begin{aligned}
 Y_3&=\frac12ww^{\mathsf T},\\
 Y_2&=\frac12(v_0v_0^{\mathsf T}+v_1v_1^{\mathsf T}),
 &v_0&=(1,1,1,1)^{\mathsf T},\quad v_1=(1,-1,-1,1)^{\mathsf T},\\
 Y_1&=I_2,\qquad Y_0=[1].
 \end{aligned}
\end{equation}
Here $Y_2$ is indexed by $(n_a,n_c)=(0,0),(0,1),(1,0),(1,1)$.  The two compressions of $Y_3$ through the final $b$ query sum to $Y_2$ and are each bounded by it.  Each compression of $Y_2$ through the $c$ query is $I_2$, and each compression of $Y_1$ through the $a$ query is $[1]$.  Hence this is dual feasible, and
\begin{equation}\label{eqn:sdp-ordervalue}
 c_fY_3c_f^{\dagger}=\frac1{32}(w^{\mathsf T}w)^2=2.
\end{equation}
Together with the explicit circuit, this certifies the exact optimum without a numerical SDP solve.
\end{proof}

The same target has a terminal polynomial completion at scale one: direct expansion shows $|p_f(\bfz)|^2+|q(\bfz)|^2=1$ on the unit torus for
\begin{equation}\label{eqn:sdp-ordercompletion}
 q(\bfz)=\frac14\bigl[(1+z_a)(1-z_c)+z_b(1-z_a)(1+z_c)\bigr].
\end{equation}
Both polynomials have the allowed support.  Thus a terminal completion and a pointwise norm bound alone do not ensure sequential realization in a prescribed query order.
\end{document}